\DeclareMathOperator*{\EE}{\mathbb{E}}
\DeclareMathOperator{\supp}{supp}
\newtheorem{theorem}{Theorem}[section]
\newtheorem{lemma}[theorem]{Lemma}
\newtheorem{claim}[theorem]{Claim}
\newtheorem{corollary}[theorem]{Corollary}
\newtheorem{fact}[theorem]{Fact}
\newtheorem{definition}[theorem]{Definition}
\newenvironment{auxAlg}[1][htb]{%
	\renewcommand{\ALG@name}{AuxilliaryAlgorithm}
	\begin{algorithm}[#1]%
	}{\end{algorithm}}
\providecommand{\auxAlgRef}{AuxiliaryAlgorithm}
\newcommand{\eps}{\epsilon}
\providecommand{\logtwo}{\operatorname{\log}}
\providecommand{\olog}{\overline{\log}}
\providecommand{\oln}{\overline{\ln}}
\providecommand{\algone}{Algorithm~1}
\providecommand{\algtwo}{Algorithm~2}
\providecommand{\Current}{\mathit{Current}}
\providecommand{\LastVerified}{\mathit{LastVerified}}
\providecommand{\Candidate}{\mathit{Candidate}}
\providecommand{\VerificationPath}{\mathit{VerificationPath}}
\providecommand{\theroot}{\text{root}}
\providecommand{\depth}{\text{depth}}
\providecommand{\rightchild}{\text{right-child}}
\providecommand{\leftchild}{\text{left-child}}
\providecommand{\Live}{\mathit{Live}}
\providecommand{\parent}{\mathrm{parent}}
\providecommand{\Suspicious}{\mathit{Suspicious}}
\providecommand{\None}{\it{None}}
\providecommand{\SuspiciousCandidate}{\it{SuspiciousCandidate}}
\providecommand{\Hloglog}{H_2}
\providecommand{\Hlogloglog}{H_3}
\title{The entropy of lies:  playing twenty questions with a liar}
\author{Yuval Dagan\thanks{Department of Electrical Engineering and Computer Science, Massachusetts Institute of Technology.} \and Yuval Filmus\thanks{Computer Science Department, Technion. Taub Fellow --- supported by the Taub Foundations. The research was funded by ISF grant 1337/16.} \and Daniel Kane\thanks{Department of Computer Science and Engineering and Department of Mathematics, University of California at San Diego.} \and Shay Moran\thanks{Department of Computer Science, Princeton University.} \footnotemark[1]}
\begin{document}

\maketitle

\begin{abstract}

``Twenty questions'' is a  guessing game played by two players:
	Bob thinks of an integer between $1$ and $n$, 
	and Alice's goal is to recover it using a minimal number of Yes/No questions. 
	Shannon's entropy has a natural interpretation in this context.
	It characterizes the average number of questions used by an optimal strategy in the distributional variant of the game:
	let~$\mu$ be a distribution over~$[n]$, then the average number of questions used by an optimal strategy
	that recovers~$x\sim \mu$ is between $H(\mu)$ and $H(\mu)+1$.

We consider an extension of this game where at most $k$ questions can be answered falsely. 
	We extend the classical result by showing that an optimal strategy uses roughly 
	$H(\mu) + k\Hloglog(\mu)$ questions,
	where~$\Hloglog(\mu) = \sum_x \mu(x)\log\log\frac{1}{\mu(x)}$.
This also generalizes a result by Rivest et al.~(1980) for the uniform distribution.

Moreover, we design near optimal strategies that only use comparison queries of the form~``$x \leq c$?'' for $c\in[n]$.
	The usage of comparison queries lends itself naturally to the context of sorting, 
	where we derive sorting algorithms in the presence of adversarial noise.
\end{abstract}

\section{Introduction} \label{sec:introduction}

The ``twenty questions'' game is a cooperative game between two players: 
	Bob thinks of an integer between $1$ and $n$, 
	and Alice's goal is to recover it using the minimal number of Yes/No questions. 
	An optimal strategy for Alice is to perform binary search, using $\log n$ queries in the worst case.

The game becomes more interesting when Bob chooses his number according to a distribution~$\mu$ known to both players, 
	and Alice attempts to minimize the \emph{expected} number of questions. 
	In this case, the optimal strategy is to use a Huffman code for $\mu$, at an expected cost of roughly $H(\mu)$.

What happens when Bob is allowed to lie (either out of spite, or due to difficulties in the communication channel)? 
	R\'enyi~\cite{Renyi} and Ulam~\cite{Ulam} suggested a variant of the (non-distributional) ``twenty questions'' game, 
	in which Bob is allowed to lie $k$ times. 
	Rivest et al.~\cite{Rivest}, using ideas of Berlekamp~\cite{Berlekamp}, 
	showed that the optimal number of questions in this setting is roughly $\log n + k \log \log n$. 
	There are many other ways of allowing Bob to lie, some of which are described by Spencer and Winkler~\cite{SpencerWinkler} in their charming work, 
	and many others by Pelc~\cite{Pelc} in his comprehensive survey on the topic.

\paragraph{Distributional ``twenty questions'' with lies.}
This work addresses the distributional ``twenty questions'' game in the presence of lies. 
	In this setting, Bob draws an element $x$ according to a distribution $\mu$, 
	and Alice's goal is to recover the element using as few Yes/No questions as possible on average.
	The twist is that Bob, who knows Alice's strategy, is allowed to lie up to $k$ times. 
	Both Alice and Bob are allowed to use randomized strategies, and the average is measured according to both $\mu$ and the randomness of both parties.

Our main result shows that the expected number of questions in this case is 
\[
H(\mu) + k \Hloglog(\mu), \quad \text{ where } \Hloglog(\mu) = \sum_x \mu(x) \log \log \frac{1}{\mu(x)},
\]
up to an additive factor of $O(k \log k + k H_3(\mu))$, where $H_3(\mu) = \sum_x \mu(x) \log \log \log (1/\mu(x)))$
(here $\mu(x)$ is the probability of $x$ under $\mu$.) 
See Section~\ref{sec:mr} for a complete statement of this result.

When $\mu$ is the uniform distribution, 
	the expected number of queries that our algorithm makes is roughly $\log n + k \log \log n$, 
	matching the performance of the algorithm of Rivest et al. 
	However, the approach by Rivest et al.\ is tailored to their setting, 
	and the distributional setting requires new ideas.

As in the work of Rivest et al., our algorithms use only \emph{comparison queries}, 
	which are queries of the form ``$x \prec c$?'' (for some fixed value $c$).
	Moreover, our algoritms are efficient, requiring $O(n)$ preprocessing time and $O(\log n)$ time per question. 
	Our lower bounds, in contrast, apply to \emph{arbitrary} Yes/No queries.

\paragraph{Noisy sorting.}
One can apply binary search algorithms to implement insertion sort. 
	While sorting an array typically requires $\Theta(n \log n)$ \emph{sorting queries} of the form ``$x_i \prec x_j$?'', 
	there are situations where one has some prior knowledge about the correct ordering. 
	This may happen, for example, when maintaining a sorted array: one has to perform consecutive sorts, 
	where each sort is not expected to considerably change the locations of the elements. 
	Assuming a distribution $\Pi$ over the $n!$ possible permutations, 
	Moran and Yehudayoff~\cite{MY13} showed that sorting a $\Pi$-distributed array requires $H(\Pi) + O(n)$ sorting queries on average. 
	We extend this result to the case in which the answerer is allowed to lie $k$ times, giving an algorithm which uses the following expected number of queries:\footnote{Strictly speaking, this bound holds only under the mild condition that $k$ is at most exponential in $n$.}
%
\[
 H(\Pi) + O(nk).
\]
This result is tight, and matches the optimal algorithms for the uniform distribution due to Bagchi~\cite{Bagchi} and Long~\cite{Long}, which use $n\log n + O(nk)$ queries.

Table~\ref{tbl:query-comp} summarizes the query complexities of resilient and non-resilient searching and sorting algorithms, 
	in both the deterministic and the distributional settings. 
	To the best of our knowledge, we present the first resilient algorithms in the distributional setting.

\begin{table}[]
	\centering
	\begin{tabular}{|l|l|l|}
		\hline \textbf{Setting} & \textbf{Searching}           & \textbf{Sorting}                 \\
		\hline No lies; deterministic   & $\log n$ [classical]                     & $n \log n$ [classical]                       \\
		No lies; distributional  & $H(\mu)$ [classical]      & $H(\Pi) + O(n)$ \cite{MY13}                 \\
		$k$ lies; deterministic  & $\log n + k \log \log n$  \cite{Rivest} & $n \log n + \Theta(nk)$ \cite{Bagchi,Long,LRG91} \\
		$k$ lies; distributional & $H(\mu) + kH_2(\mu)$ [this paper] & $H(\Pi) + \Theta(nk)$ [this paper] \\ \hline
	\end{tabular}
\caption{Query complexities of searching and sorting in different settings, ignoring lower-order terms. All terms are exact upper and lower bounds except for those inside the $O(\cdot)$ and $\Theta(\cdot)$ notations.}
\label{tbl:query-comp}
\end{table}

%
%

\paragraph{On randomness.} 
All algorithms presented in the paper are randomized. 
	Since they only employ public randomness which is known for both players, 
	there exists a fixing of the randomness which yields a deterministic algorithm with the same (or possibly smaller) expected number of queries. 
	However, this comes at the cost of possibly increasing the running time of the algorithm (since we need to find a good fixing of the randomness);
	it would be interesting to derive an explicit efficient deterministic algorithm with a similar running time.

%



\subsection{Main ideas}


\paragraph{Upper bound.}
	Before presenting the ideas behind our algorithms, we explore several other ideas which give suboptimal results.
	The first approach that comes to mind is simulating the optimal non-resilient strategy, asking each question $2k+1$ times and taking the majority vote, which results in an algorithm using $\Theta(kH(\mu))$ queries on average.
	
	A better approach is using \emph{tree codes}, suggested by Schulman~\cite{schulman1996coding} as an approach for making interactive communication resilient to errors~\cite{gelles2017coding,schulman1996coding,kol2013interactive}. Tree codes are designed for a different error model, in which we are bounding the \emph{fraction} of lies rather than their absolute number; for an $\varepsilon$-fraction of lies, the best known constructions suffer a multiplicative overhead of $1 + O(\sqrt{\varepsilon})$~\cite{haeupler2014interactive}. In contrast, we are aiming at an \emph{additive} overhead of $kH_2(\mu)$.
	
	Using a packing bound, one can prove that there exists a (non-interactive) code of expected length roughly $H(\mu) + 2k H_2(\mu)$, coming much closer to the bound that we are able to get (but off by a factor of~2 from our target $H(\mu) + kH_2(\mu)$). The idea, which is similar to the proof of the Gilbert--Varshamov bound, is to construct a prefix code $w_1,\ldots,w_n$ in which the prefixes of~$w_i,w_j$ of length $\min(|w_i|,|w_j|)$ are at distance at least $2k+1$ (whence the factor $2k$ in the resulting bound); this can be done greedily. Apart from the inferior bound, two other disadvantages of this approach is that it is not efficient and uses arbitrary queries.
	
	In contrast to these prior techniques, which do not achieve the optimal complexity, might ask arbitrary questions, and could result in strategies which cannot be implemented efficiently, in this paper we design an efficient and nearly optimal strategy, relying on comparison queries only, and utilizing simple observations on the behavior of binary search trees under the presence of lies.
	
	\medskip

	Following the footsteps of Rivest et al.~\cite{Rivest}, 
	our upper bound is based on a binary search algorithm on the unit interval $[0,1]$, first suggested in this context by Gilbert and Moore~\cite{GilbertMoore}: 
	given~$x \in [0,1]$, the algorithm locates $x$ by first asking ``$x < 1/2$?''; depending on the answer, asking ``$x < 1/4$?'' or ``$x < 3/4$?''; and so on. 
	If $x \in [0,1]$ is chosen uniformly at random then the answers behave like an infinite sequence of random and uniform coin tosses.
	
\begin{figure}
	\centering
	\includegraphics{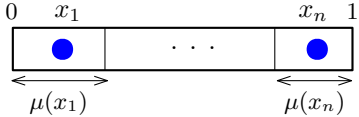}
	\caption{Representing items as centers of segments partitioning the interval $[0,1]$.}
	\label{fig:intro-gilbert-moore}
\end{figure}

	In order to apply this kind of binary search to the problem of identifying an unknown element (assuming truthful answers), we partition the unit interval $[0,1]$ into segments of lengths $\mu(x_1),\ldots,\mu(x_n)$, and label the center of each segment with the corresponding item (see Figure~\ref{fig:intro-gilbert-moore}). We then perform binary search until the current interval contains a single item. (In the proof, we use a slightly more sophisticated randomized placement of points which guarantees that the answers on \emph{each} element behave like an infinite sequence of random and uniform coin tosses.) 

\begin{figure} 
	\centering
	\includegraphics{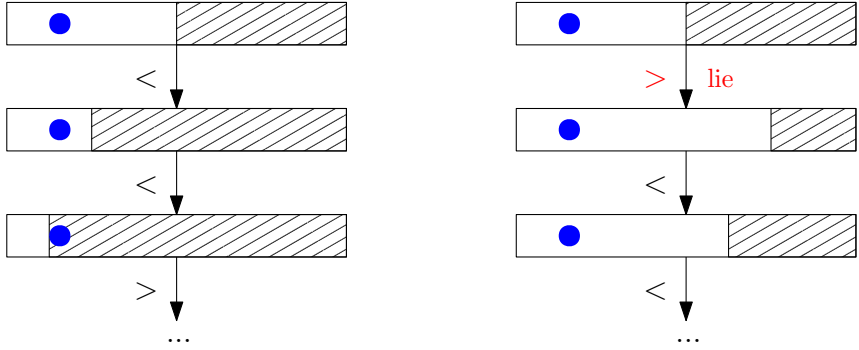}
	\caption{On the left, the operation of the algorithm without any lies. On the right, answerer lied on the first question. As a result, all future truthful answers are the same.}
	\label{fig:intro-main-obs}
\end{figure}

The main observation is that if a question ``$x < a$?'' is answered with a lie,
	this will be strongly reflected in subsequent answers (see Figure~\ref{fig:intro-main-obs}). Indeed, suppose that~$x < a$, but Bob claimed that~$x > a$. 
	All subsequent questions will be of the form ``$x < b?$'' for various $b > a$, 
	the truthful answer to all of which is $x < b$. 
	An observer taking notes of the proceedings will thus observe the following pattern of answers: 
	$>$ (the lie) followed by many $<$'s (possibly interspersed with up to $k-1$ many $>$'s, due to further lies). 
	This is suspicious since it is highly unlikely to obtain many $<$ answers in a row (the probability of getting $r$ such answers is just $2^{-r}$).

This suggests the following approach: for each question we will maintain a ``confidence interval'' consisting of $r(d)$ further questions (where $d$ is the index of the question). 
	At the end of the interval, we will check whether the situation is suspicious (as described in the preceding paragraph), 
	and if so, will ascertain by brute force the correct answer to the original question (by taking a majority of $2k+1$ answers), and restart the algorithm from that point.

The best choice for $r(d)$ turns out to be roughly $\log d$. 
	Each time Bob lies, our unrolling of the confidence interval results in a loss of $r(d)$ questions. 
	Since an item $x$ requires roughly $\log (1/\mu(x))$ questions to be discovered, 
	the algorithm has an overhead of roughly~$k r(\log (1/\mu(x))) \approx k \log \log (1/\mu(x))$ questions on element $x$, 
	resulting in an expected overhead of roughly $k H_2(\mu)$.
	
When implementing the algorithm, apart from the initial $O(n)$ time needed to setup the partition of $[0,1]$ into segments, the costliest step is to convert the intervals encountered in the binary search to comparison queries. This can be done in $O(\log n)$ time per query.


\paragraph{Lower bound.} 


The proof of our lower bound uses information theory: one can lower bound the expected number of questions 
	by the amount of information that the questioner gains. 
	There are two such types of information: first, the hidden object reveals $H(\mu)$ information, 
	as in the setting where no lies are allowed. Second, when the object is revealed, 
	the positions of the lies are revealed as well. 
	This reveals additional $H_2(\mu)$ (conditional) information, as we explain below.
	 
	Let $d_x$ denote the number of questions asked for element $x$. Kraft's inequality shows that any good strategy of the questioner satisfies $d_x \gtrapprox \log(1/\mu(x))$. 
	If the answerer chooses a randomized strategy in which the positions of the lies are chosen uniformly from the $\binom{d_x}{k}$ possibilities, 
	these positions reveal $\log \binom{d_x}{k} \approx k \log d_x \gtrapprox k \log \log (1/\mu(x))$ information given $x$. 
	Taking expectation over $x$, the positions of the lies reveal at least $k H_2(\mu)$ information beyond the identity of $x$.

\subsection{Related work}

Most of the literature on error-resilient search procedures has concentrated on the non-distributional setting, in which the goal is to give a worse case guarantee on the number of questions asked, under various error models. The most common error models are as follows:\footnote{This section is heavily based on Pelc's excellent and comprehensive survey~\cite{Pelc}}
\begin{itemize}
\item Fixed number of errors. This is the error model we consider, and it is also the one suggested by Ulam~\cite{Ulam}. This model was first studied by Berlekamp~\cite{Berlekamp}, who used an argument similar to the sphere-packing bound to give a lower bound on the number of questions. Rivest et al.~\cite{Rivest} used this lower bound as a guiding principle in their almost matching upper bound using comparison queries.
\item At most a fixed fraction $p$ of the answers can be lies. This model is similar to the one considered in error-correcting codes. Pelc~\cite{Pelc87} and Spencer and Winkler~\cite{SpencerWinkler} (independently) gave a non-adaptive strategy for revealing the hidden element when $p \leq 1/4$, and showed that the task is not possible (non-adaptively) when $p > 1/4$. Furthermore, when $p < 1/4$ there is an algorithm using $O(\log n)$ questions, and when $p = 1/4$ there is an algorithm using $O(n)$ questions, which are both optimal (up to constant factors). Spencer and Winkler also showed that if questions are allowed to be adaptive, then the hidden element can be revealed if and only if $p < 1/3$, again using $O(\log n)$ questions. 
\item At most a fixed fraction $p$ of any \emph{prefix} of the answers can be lies. Pelc~\cite{Pelc87} showed that the hidden element can be revealed if and only if $p < 1/2$, and gave an $O(\log n)$ strategy when~$p < 1/4$. Aslam and Dhagat~\cite{AslamDhagat} and Spencer and Winkler gave an $O(\log n)$ strategy for all $p < 1/2$.
\item Every question is answered erroneously with probability $p$, an error model common in information theory. R\'enyi~\cite{Renyi} showed that the number of questions required to discover the hidden element with constant success probability is $(1+o(1)) \log n/(1-h(p))$.
\end{itemize}


The distributional version of the ``twenty questions'' game (without lies) was first considered by Shannon~\cite{Shannon} in his seminal paper introducing information theory, where its solution was attributed to Fano (who published it later as~\cite{Fano}). The Shannon--Fano code uses at most $H(\mu)+1$ questions on average, but the questions can be arbitrary. The Shannon--Fano--Elias code (also due to Gilbert and Moore~\cite{GilbertMoore}), which uses only comparison queries, asks at most $H(\mu)+2$ questions on average. Dagan et al.~\cite{dfgm} give a strategy, using only comparison and equality queries, which asks at most~$H(\mu)+1$ questions on average.

\paragraph{Sorting} The non-distributional version of sorting has also been considered in some of the settings considered above:
\begin{itemize}
\item At most $k$ errors: Lakshmanan et al.~\cite{LRG91} gave a lower bound of $\Omega(n\log n + kn)$ on the number of questions, and an almost matching upper bound of $O(n\log n + kn + k^2)$ questions. An optimal algorithm, using $n\log n + O(kn)$ questions, was given independently by Bagchi~\cite{Bagchi} and Long~\cite{Long}. 
\item At most a $p$ fraction of errors in every prefix: Aigner~\cite{Aigner} showed that sorting is possible if and only if $p < 1/2$. Borgstrom and Kosaraju~\cite{BK93} had showed earlier that even \emph{verifying} that an array is sorted requires $p < 1/2$.
\item Every answer is correct with probability $p$: Feige et al.~\cite{FPRU94} showed in an influential paper that $\Theta(n\log(n/\epsilon))$ queries are needed, where $\epsilon$ is the probability of error.
\item Braverman and Mossel~\cite{BM09} considered a different setting, in which an algorithm is given access to noisy answers to all possible $\binom{n}{2}$ comparisons, and the goal is to find the most likely permutation. They gave a polynomial time algorithm which succeeds with high probability.
\end{itemize}

The distributional version of sorting (without lies) was considered by Moran and Yehudayoff~\cite{MY13}, who gave a strategy using at most $H(\mu)+2n$ queries on average, based on the Gilbert--Moore algorithm.

\paragraph{Paper organization.} After a few preliminaries in Section~\ref{sec:definitions}, we describe our results in full in Section~\ref{sec:mr}. We prove our lower bound on the number of questions in Section~\ref{sec:lower-bound}. We present our main upper bound in Section~\ref{sec:ub1}, and an improved version in Section~\ref{sec:2ndalgproof}. We close the paper with a discussion of sorting in Section~\ref{sec:sort}.

\section{Definitions} \label{sec:definitions}

We use the notation $\binom{n}{\leq k} = \sum_{\ell=0}^k \binom{n}{\ell}$.
Unless stated otherwise, all logarithms are base~2. We define $\olog(x) = \log (x + C)$ and $\oln(x) = \ln(x+C)$ for a fixed sufficiently large constant $C >0$ satisfying~$\log \log \log C > 0$.

\paragraph{Information theory.}

Given a probability distribution $\mu$ with countable support, the entropy of~$\mu$ is given by the formula
\[
H(\mu) = \sum_{x \in \supp\mu} \mu(x) \log \frac{1}{\mu(x)}.
\]

\paragraph{Twenty questions game.}
We start with an intuitive definition of the game, played by a questioner (Alice) and an answerer (Bob). Let $U$ be a finite set of elements, and let $\mu$ be a distribution over $U$, known to both parties. The game proceeds as follows: first, an element $x \sim \mu$ is drawn and revealed to the answerer but not to the questioner. The element $x$ is called the \emph{hidden} element. The questioner asks binary queries of the form ``$x \in Q$?'' for subsets $Q \subseteq U$. The answerer is allowed to lie a fixed number of times, and the goal of the questioner is to recover the hidden element $x$, asking the minimal number of questions on expectation.

\paragraph{Decision trees.}

Let $U$ be a finite set of elements. A \emph{decision tree} $T$ for $U$ is a binary tree formalizing the question asking strategy in the twenty questions game. Each internal node of $v$ of~$T$ is labeled by \emph{a query} (or \emph{question}) --- a subset of $U$, denoted by $Q(v)$;
and each leaf is labeled by the output of the decision tree, which is an element of $U$.
The semantics of the tree are as follows: on input $x \in U$, traverse the tree by starting at the root, and whenever at an internal node $v$, go to the left child if $x \in Q(v)$ and to the right child if $x \notin Q(v)$.

\paragraph{Comparison tree.}
Given an ordered set of elements $x_1 \prec x_2 \prec \cdots \prec x_n$, \emph{comparison questions} are questions of the form $Q = \{ x_1, \dots, x_{i-1} \}$, for some $i = 1, \dots, n+1$. In other words, the questions are ``$x \prec x_i$?'' for some $i = 1, \dots, n+1$. An answer to a comparison question is one of $\{ \prec, \succeq \}$. A \emph{comparison tree} is a decision tree all of whose nodes are labeled by comparison questions.

\paragraph{Adversaries.}

Let $k\geq 0$ be a bound on the number of lies. 
An intuitive way to formalize the possibility of lying is via an adversary.
The adversary knows the hidden element $x$ and 
receives the queries from the questioner as the tree is being traversed.
The adversary is allowed to lie at most~$k$ times, where each lie is a violation of the above stated rule. 
Formally, an adversary is a mapping that receives as input an element $x\in X$, 
a sequence of the previous queries and their answers, 
and an additional query $Q\subseteq U$, which represents the current query.
The output of the adversary is a boolean answer to the current query;
this answer is a \emph{lie} if it differs from the truth value of~``$x\in Q$''.

We also allow the adversary and the tree to use randomness: a randomized decision tree is a distribution over decision trees
and a randomized adversary is a distribution over adversaries. 

\paragraph{Computation and complexity.}
The responses of the adversary induce a unique root-to-leaf path in the decision tree, which results in the output of the tree.
A decision tree is \emph{$k$-valid} if it outputs the correct element against any adversary that lies at most $k$ times. 

Given a $k$-valid decision tree $T$ and a distribution $\mu$ on $U$, the \emph{cost} of $T$ with respect to $\mu$, denoted $c(T,\mu)$, is the maximum, over all possible adversaries that lie at most $k$ times, of the expected\footnote{The expectation is also taken with respect to the randomness of the adversary and the tree when they are randomized.}
length of the induced root-to-leaf path in $T$.
Finally, the $k$-cost of $\mu$, denoted $c_k(\mu)$, is the minimum of $c(T,\mu)$ over all $k$-valid decision trees~$T$.

\paragraph{Basic facts.}
We will refer to the following well-known formula as \emph{Kraft's identity}:

\begin{fact}[Kraft's identity] \label{cla:sum-leaves}
	Fix a binary tree $T$, let $L$ be its set of leaves and let $d(\ell)$ be the depth of leaf $\ell$. The following applies:
	\[\sum_{\ell \in L(T)} 2^{-d(\ell)}\leq 1.\]
\end{fact}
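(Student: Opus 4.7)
The plan is to prove Kraft's identity by a short induction on the structure of $T$. The base case is when $T$ consists of a single vertex that is both the root and a leaf; then $d(\ell)=0$ and the sum equals $2^0=1$, so the inequality holds with equality. For the inductive step, I would split $T$ at its root into the subtrees $T_L$ and $T_R$ rooted at its children (treating a missing child as an empty subtree that contributes $0$ to the sum). The leaves of $T$ are precisely those of $T_L$ together with those of $T_R$, and the depth in $T$ of each leaf equals its depth within the subtree it belongs to, plus $1$. Consequently $\sum_{\ell \in L(T)} 2^{-d(\ell)} = \tfrac{1}{2}\sum_{\ell \in L(T_L)} 2^{-d_{T_L}(\ell)} + \tfrac{1}{2}\sum_{\ell \in L(T_R)} 2^{-d_{T_R}(\ell)}$, which by the induction hypothesis is at most $\tfrac{1}{2} + \tfrac{1}{2} = 1$.

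An alternative and perhaps more illuminating argument is probabilistic: start a random walk at the root, and at each internal node flip a fair coin to choose a child. The probability of reaching a specified leaf $\ell$ is exactly $2^{-d(\ell)}$, since the walk must make $d(\ell)$ prescribed coin flips to arrive there; the events ``the walk terminates at leaf $\ell$'' (over $\ell \in L(T)$) are pairwise disjoint, so the sum of their probabilities is at most~$1$, which is precisely the desired bound.

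I do not anticipate any genuine obstacle; the only mild subtlety is how to handle internal nodes with a single child, if such nodes are allowed by the paper's convention. Both the inductive and the probabilistic proofs accommodate this by treating the absent subtree as empty (equivalently, as carrying probability mass that never reaches a leaf), and this is also the source of the potential strict inequality: equality holds exactly when every internal node has two children.
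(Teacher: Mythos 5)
The paper records this as a standard fact without proof, so there is no paper argument to compare against. Both of your proofs — the structural induction splitting at the root, and the random-walk argument assigning probability $2^{-d(\ell)}$ to each leaf — are correct and are the two standard routes; you also correctly flag the one subtlety (unary internal nodes, or equivalently the possible strict inequality) and handle it consistently in each version.
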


We will use the following basic lower bound on the expected depth by the entropy:
\begin{fact}\label{fact:ent}
	Let $T$ be a binary tree and let $\mu$ be a distribution over its leaves. 
	Then 
	\[H(\mu)\leq \EE_{\ell\sim \mu}\bigl[d(\ell)\bigr].\] 
\end{fact}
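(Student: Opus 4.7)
The plan is to reduce this to the non-negativity of KL divergence combined with Kraft's identity (Fact~\ref{cla:sum-leaves}). This is the standard Shannon source-coding lower bound, adapted to the fact that leaf depths of a binary tree always satisfy Kraft.

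First, I would introduce the sub-probability weights $q(\ell) = 2^{-d(\ell)}$ for each leaf $\ell$ of $T$. By Kraft's identity, $Z := \sum_{\ell \in L(T)} q(\ell) \leq 1$. I would then expand the expected depth directly: since $d(\ell) = \log(1/q(\ell))$,
\[
\EE_{\ell \sim \mu}\bigl[d(\ell)\bigr] = \sum_{\ell} \mu(\ell) \log \frac{1}{q(\ell)} = \sum_{\ell} \mu(\ell) \log \frac{1}{\mu(\ell)} + \sum_{\ell} \mu(\ell) \log \frac{\mu(\ell)}{q(\ell)} = H(\mu) + \sum_{\ell} \mu(\ell) \log \frac{\mu(\ell)}{q(\ell)}.
\]

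The second step is to show the residual sum is non-negative. I would apply the log-sum inequality (equivalently, Gibbs' inequality after renormalizing $q$ by $Z$):
\[
\sum_{\ell} \mu(\ell) \log \frac{\mu(\ell)}{q(\ell)} \;\geq\; \Bigl(\sum_\ell \mu(\ell)\Bigr) \log \frac{\sum_\ell \mu(\ell)}{\sum_\ell q(\ell)} \;=\; \log \frac{1}{Z} \;\geq\; 0,
\]
where the final inequality uses Kraft, $Z \leq 1$. Combining with the previous display gives $\EE[d(\ell)] \geq H(\mu)$, as required.

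There is no real obstacle here: the only mild subtlety is that the weights $q(\ell) = 2^{-d(\ell)}$ do not a priori form a probability distribution, which is precisely why one needs the log-sum inequality (or alternatively the normalization trick $q'(\ell) = q(\ell)/Z$ followed by $D(\mu \| q') \geq 0$) rather than a direct application of KL non-negativity. Either route is a two-line calculation once Kraft's identity is in hand.
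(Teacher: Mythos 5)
Your proof is correct. The paper states Fact~\ref{fact:ent} without proof, treating it as a standard consequence of Kraft's inequality, and your argument is precisely that standard argument: decompose $\EE[d(\ell)]$ as $H(\mu)$ plus a KL-type residual against the sub-probability weights $2^{-d(\ell)}$, then use the log-sum inequality together with $\sum_\ell 2^{-d(\ell)} \leq 1$ to show the residual is non-negative.
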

In other words, for any distribution $\mu$, $c_0(\mu) \ge H(\mu)$. In fact, it is also known that $c_0(\mu) \le H(\mu) + 1$.

\section{Main results} \label{sec:mr}

This section is organized as follows:
The lower bound is presented in Section~\ref{sec:mr:lb}. 
	Then, the two searching algorithms are presented in Section~\ref{sec:mr:ub}, 
	and finally the application to sorting is presented in Section~\ref{sec:mr:sort}.

\subsection{Lower bound} \label{sec:mr:lb}

In this section we present the following lower bound on $c_k(\mu)$, namely, on the expected number of questions asked by \emph{any} $k$-valid tree (not necessarily a comparison trees).

\begin{theorem} \label{thm:lower-bound-inf}
	For every non-constant distribution $\mu$ and every $k \geq 0$,
	\[
	c_k(\mu) \geq \Bigl(\EE_{x\sim \mu}\log \frac{1}{\mu(x)}\Bigr) + k \Bigl(\EE_{x \sim \mu} \log \log \frac{1}{\mu(x)}\Bigr)  -  (k\log k + k + 1).
	\]
\end{theorem}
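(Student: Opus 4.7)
The plan is to follow the information-theoretic strategy sketched in the introduction. Fix any $k$-valid decision tree $T$; by Yao's minimax principle we may assume $T$ is deterministic and instead construct a randomized adversary $A$. Given the hidden element $x$, let $d_x$ be the depth of the leaf reached by $T$ on truthful answers; the adversary samples a lie-set $L$ uniformly from $\binom{[d_x]}{k}$ (with a degenerate fallback when $d_x < k$, absorbed into the additive error term) and lies exactly at the questions whose indices belong to $L$.

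Let $\tau$ be the random transcript produced by $T$ against $A$. Distinct leaves of $T$ yield distinct transcripts that form a prefix-free code, so Kraft's identity (Fact~\ref{cla:sum-leaves}) combined with the usual Shannon bound gives $c(T,A) = \EE[|\tau|] \ge H(\tau)$. Since $T$ is $k$-valid, $\tau$ determines $X$, hence $H(X\mid \tau) = 0$; coupling this with the chain rule and the fact that $\tau$ is a deterministic function of $(X,L)$ yields
\[
H(\tau) \;=\; H(\mu) \;+\; \EE_{x\sim\mu}\!\left[\log \binom{d_x}{k}\right] \;-\; H(L\mid \tau, X).
\]
The main obstacle, as I see it, is controlling the slack $H(L\mid\tau,X)$. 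Conditioned on $(X,\tau)$, the positions $L\cap [|\tau|]$ are precisely the places where $\tau$ disagrees with the truthful answers on $X$ and are therefore determined; only the residual $L\cap \{|\tau|+1,\dots,d_X\}$ remains random, with entropy at most $\log \binom{d_X-|\tau|}{k-\lambda}$ where $\lambda = |L\cap[|\tau|]|$. One route is to exploit the uniformity of $L$ to show that in expectation this slack is $O(k\log k + k)$; a cleaner alternative is to replace $A$ by an adaptive adversary that confines its lies to $[|\tau|]$, forcing $H(L\mid\tau,X) = 0$ at the price of a more involved computation of $H(L\mid X)$.

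Finally, apply the elementary inequality $\log \binom{d_x}{k} \ge k \log d_x - k \log k$ together with Kraft's inequality $\sum_x 2^{-d_x} \le 1$: setting $q_x := 2^{-d_x}/\sum_y 2^{-d_y}$ yields a probability distribution with $d_x \ge \log(1/q_x)$ pointwise, and a short KL-divergence calculation lets one pass from $\EE_\mu[\log d_x]$ to $\EE_\mu[\log\log(1/\mu(x))] = H_2(\mu)$ up to an additive constant. Combining everything produces the claimed bound $c_k(\mu) \ge H(\mu) + k\,H_2(\mu) - (k\log k + k + 1)$.
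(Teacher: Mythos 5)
Your overall strategy --- lower-bounding the transcript length by its entropy, then using the joint entropy of $(X,L)$ --- matches the paper's, but you leave two interlocked gaps unresolved, and both stem from skipping the paper's key preliminary step. The paper first modifies the tree (Lemma~\ref{lem:alpha}) so that \emph{every} root-to-leaf path labeled $x$ has depth at least $\alpha(x)=\lceil\log(1/\mu(x))/2\rceil$, at a total extra cost of at most~$1$; this is proved with Kraft's inequality and the bound $\sqrt{t}\log(1/\sqrt{t})\le 1$, i.e.\ $\sum_{x\in V}\mu(x)\alpha(x)\le\sum_{x\in V}\sqrt{\mu(x)}\le 1$. After this modification, the adversary restricts its lie set to $[\alpha(x)]$, a prefix of \emph{every} possible path, so every planned lie is actually executed and the transcript determines $(X,L)$ exactly. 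This gives $\EE[|Q|]\ge H(Q)\ge H(X,L)=H(X)+H(L\mid X)$ with no slack term at all.

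Your version keeps the unmodified tree, uses $d_x$ (the depth of the \emph{truthful} leaf), and so runs into both problems at once. First, the slack $H(L\mid\tau,X)$ is genuinely positive: planned lie positions beyond $|\tau|$ are never executed and cannot be recovered from the transcript, and you acknowledge this without resolving it. The alternative you float --- an adversary that confines its lies to $[|\tau|]$ --- is circular, since $|\tau|$ depends on where the lies are. Second, the passage from $\EE_\mu[\log d_x]$ to $\EE_\mu[\log\log(1/\mu(x))]$ is not a ``short KL-divergence calculation'': $\log\log$ is concave, so Jensen and $D(\mu\|q)\ge 0$ push in the wrong direction. The correct argument is exactly the split underlying Lemma~\ref{lem:alpha}: on $\{x:d_x\ge\alpha(x)\}$ you get $\log d_x\ge\log\log(1/\mu(x))-1$ for free, while on the complement Kraft's inequality gives $\sum\sqrt{\mu(x)}\le 1$ and the boundedness of $\sqrt{t}\log\log(1/t)$ controls the lost mass. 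If you incorporate that modification (or that split) explicitly, both of your loose ends close at once and you recover the paper's proof; without it, neither step 2 nor step 3 of your sketch is complete.
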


The proof of this lower bound appears in Section~\ref{sec:lower-bound}.

\paragraph{Proof overview.}
Consider a $k$-valid tree; we wish to lower bound the expected number of questions for $x\sim \mu$.
Let $d_x$ denote the number of questions asked when the secret element is $x$.
Then, by the entropy lower bound when the number of mistakes is $k=0$,
it follows that {\it typically},~$d_x \gtrsim\log(1/\mu(x))$.
Moreover, the transcript of the game (i.e.\ the list  of questions and answers)
determines both~$x$ and the positions of the $k$ lies.
This requires 
\[d_x + k\log (d_x) \gtrsim \log(1/\mu(x)) + k\log\log(1/\mu(x))\] 
bits of information.
Taking expectation over $x\sim\mu$ then yields the stated bound. 

Our proof formalizes this intuition using standard and basic tools from information theory.
One part that requires a subtler argument is
showing that indeed one may assume that $d_x\gtrsim \log(1/\mu(x))$ for all $x$.
This is done by showing that any $k$-valid tree can be modified to satisfy this constraint
without increasing the expected number of questions by too much.
The crux of this argument, which relies on Kraft's identity (Fact~\ref{cla:sum-leaves}), appears in Lemma~\ref{lem:alpha}.

\subsection{Upper bounds} \label{sec:mr:ub}

We introduce two algorithms.
The first algorithm, presented in Section~\ref{sec:mr:ub1}, is simpler, however, the second algorithm has a better query complexity. The expected number of questions asked by the first algorithm is at most
\[
H(\mu) + (k+1) H_2(\mu) + O(k^2 \Hlogloglog(\mu) + k^2 \log k),  \quad \text{ where } \Hlogloglog(\mu) = \sum_x \mu(x) \log \log \log \frac{1}{\mu(x)}.
\]
The second algorithm, presented in Section~\ref{sec:mr:ub2}, removes the quadratic dependence on $k$, and has an expected complexity of:
\[
H(\mu) + k \Hloglog(\mu) + O(k\Hlogloglog(\mu) + k \log k).
\]
In Section~\ref{sec:mr:finegrained} we robustify the guarantees of these algorithms
and consider scenarios where the exact distribution $\mu$ is not known but only some prior $\eta\approx\mu$,
or where the actual number of lies is less than the bound $k$ 
(whence the algorithm achieves better performance).

\subsubsection{First algorithm} \label{sec:mr:ub1}

Suppose that we are given a probability distribution $\mu$ whose support is the linearly ordered set $x_1 \prec \cdots \prec x_n$. 
	In this section we overview the proof of the following theorem (the complete proof appears in Section~\ref{sec:ub1}):

\begin{theorem} \label{thm:ub-main}
	There is a $k$-valid comparison tree $T$ with
	\[
	c(T,\mu) \le H(\mu) + (k+1)\sum_{i=1}^n \mu_i \log \olog \frac{1}{\mu_i} + O\left(k^2 \sum_{i=1}^n \mu_i \log \log \olog \frac{1}{\mu_i} + k^2 \log k \right),
	\]
	where $\mu_i = \mu(x_i)$.
\end{theorem}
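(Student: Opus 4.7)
The plan is to follow the Gilbert--Moore template sketched in the introduction. First partition $[0,1]$ into consecutive intervals $I_1, \dots, I_n$ of lengths $\mu_1, \dots, \mu_n$ and, to make the analysis clean, place a random marker $c_i$ uniformly in $I_i$. The base algorithm performs binary search on $[0,1]$: at step $d$ the current interval is halved according to which side of the midpoint contains $c_i$, and this is implementable as a comparison query on $x_i$ because the ordering of the markers matches the ordering of the elements. The uniform randomization ensures that, conditioned on the identity of $x_i$, the truthful answers are i.i.d.\ uniform bits, which is what makes ``suspicious'' patterns genuinely unlikely.

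On top of the base search I would layer a lie-detection mechanism. For each question asked at depth $d$, fix a window length $r(d) = \Theta(\olog d + k)$ and monitor the next $r(d)$ answers on the binary-search path rooted at that node. Call the depth-$d$ question \emph{suspicious} if, even after discarding up to $k$ positions of the window, every remaining answer is consistent with the depth-$d$ answer having been a lie (i.e.\ the truthful run has not once ``bent back'' toward the other half). Whenever a question is flagged suspicious, resolve it by brute force --- repeat the corresponding comparison query $2k+1$ times and take the majority, at a cost of $O(k)$ queries --- and restart the binary search from the verified answer. The key probabilistic claim is that under the truthful i.i.d.\ uniform model, a false alarm at depth $d$ occurs with probability at most $\binom{r(d)}{\leq k} 2^{-r(d)}$, which by our choice of $r$ is polynomially small in $d$; summed over all depths this contributes only $O(k)$ extra queries in expectation.

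The cost analysis splits into two parts. Without lies, the algorithm is Gilbert--Moore plus a terminal window used to retrospectively verify the last few questions on a path, asking $\log(1/\mu_i) + O(\olog\log(1/\mu_i))$ questions on $x_i$ in expectation. For each actual lie we pay (i) the $r(d) = O(\olog d)$ queries inside the detection window before the lie is flagged, (ii) the $O(k)$ queries for brute-force verification, and (iii) a depth inflation of the same magnitude for all subsequent questions on the same element. Iterating through up to $k$ lies, with each lie inflating the depth at which the next might occur, produces a telescoping bound of roughly $(k+1) \olog\log(1/\mu_i) + O(k^2 \olog\olog\log(1/\mu_i) + k^2 \olog k)$ extra queries on element $x_i$. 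Summing against $\mu$ yields the stated bound.

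The main obstacle, in my view, is calibrating the suspiciousness test so that it tolerates further lies inside its own detection window. Up to $k$ additional lies can occur \emph{within} the $r(d)$-bit window used to expose an earlier one, partially masking its signature; the threshold must therefore allow up to $k$ ``bad'' positions, which inflates the false-alarm probability by the $\binom{r(d)}{\leq k}$ factor and in turn forces $r(d) = \Omega(\olog d + k)$. This trade-off is precisely what produces the $(k+1)$ multiplier in the leading term and the $O(k^2)$ coefficient in the lower-order term. Proving the false-alarm bound, proving that no actual lie escapes detection, and carefully iterating the depth-inflation recursion so that it telescopes rather than blowing up in $k$ are the technical core of the argument; the aggregate false-alarm cost is then naturally bounded by summing over root-to-leaf paths using Kraft's identity (Fact~\ref{cla:sum-leaves}).
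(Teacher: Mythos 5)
Your high-level plan matches the paper's proof of Theorem~\ref{thm:ub-main} almost exactly: Gilbert--Moore binary search on $[0,1]$ with a randomized placement, a sliding ``confidence window'' of length $r(d)$ that monitors whether recent answers pattern-match what a lie at depth~$d$ would induce, and brute-force $(2k+1)$-fold verification once a depth-$d$ question is flagged. The paper realizes this with two pointers $\Current$ and $\LastVerified$ (Algorithm~\ref{alg:main-ub}), and the cost accounting decomposes the questions into five categories exactly in the spirit of your (i)/(ii)/(iii). That part of the proposal is sound. However, two of your quantitative choices, stated as they are, do not close the argument.

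First, the randomization. You place an independent uniform marker $c_i$ in each interval $I_i$ of length $\mu_i$ and assert that the truthful answers are i.i.d.\ uniform bits given $x_i$. That is false for this randomization: if $c_i$ is uniform in a subinterval of length $\mu_i$, then roughly the first $\log(1/\mu_i)$ bits of its binary expansion are essentially fixed by the position of $I_i$, not uniform. Since the window you scrutinize lives precisely at depths $\le \log(1/\mu_i)+O(\text{window})$, the determinism sits exactly where you need randomness, and for unfavorable placements the false-alarm event can occur with probability $1$ rather than $2^{-r(d)}$. The paper avoids this with a deliberately correlated construction: $p_i = \tfrac{1}{2}\sum_{j<i}\mu_j + \tfrac{1}{4}\mu_i + \theta$ with a single shared $\theta \sim U[0,1/2)$, which makes $p_i \bmod 1/2$ exactly uniform in $[0,1/2)$ and hence all bits of $p_i$ but the first i.i.d.\ unbiased, for \emph{every} $i$ simultaneously (Claim~\ref{cla:gilbert-moore-uniform}), while preserving the $\Omega(\mu_i)$ gap that gives $D_\theta \le \log(1/\mu_i)+3$ (Claim~\ref{cla:gilbert-moore-depth}).

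Second, the window length. You set $r(d) = \Theta(\olog d + k)$, but this is too small for the false-alarm sum to converge. With $r(d) = c\log d + ck$, we get
\[
\binom{r(d)}{\leq k} \bigl(r(d)+O(k)\bigr) 2^{-r(d)} \;\lesssim\; \frac{(\log d + k)^{k+1}}{d^{c}\,2^{ck}},
\]
and summing over $d$ produces a quantity of order $(k+1)!/\bigl((c-1)^{k+2}2^{ck}\bigr)$, which grows super-exponentially in $k$ for any constant $c$. In other words, the expected cost of false alarms is not $O(k)$ but blows up, and your claimed per-element bound does not follow. The paper instead takes $r(d) = \log d + O(k\log\log d + k\log k)$ (Equation~\eqref{eq:def-rd}), which makes $\binom{r(d)}{\le k}r(d)2^{-r(d)} \lesssim \bigl(d\log^2 d\bigr)^{-1}$ and hence the false-alarm contribution $O(1)$ per element uniformly in~$k$ (see Lemmas~\ref{lem:binsum}--\ref{lem:log-eq} and the discussion around \eqref{eq:rd-requires}). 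Notably, the $O(k^2\log\log\olog(1/\mu_i) + k^2\log k)$ term in your final bound matches the theorem, but that term only arises from the $k\log\log d + k\log k$ part of $r(d)$, which your stated $r(d)$ does not contain --- so your stated calibration and your stated conclusion are inconsistent. A minor point: Kraft's identity is not used for the false-alarm cost (it appears only in the lower-bound section); the bound is just the per-element series $\sum_{d\ge1}\binom{r(d)}{\le k}\bigl(r(d)+2k+1\bigr)2^{-r(d)}=O(1)$, taken pointwise and then averaged over $\mu$.
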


The question-asking strategy simulates a binary search to recover the hidden element. 
	If, at some point, the answer to some question $q$ is suspected as a lie then $q$ is asked $2k+1$ times to verify its answer. 
	When is the answer to $q$ suspected? 
	The binary search tree is constructed in a manner that if no lies are told 
	then roughly half of the questions are answered $\prec$, and half $\succeq$. 
	However, if, for example, the lie ``$x \succeq x_{50}$'' is told when in fact $x = x_{10}$, 
	then all consecutive questions will be of the form ``$x \prec x_i$?'' for $i > 50$, 
	and the correct answer would always be $\prec$. 
	Since no more than $k$ lies can be told, almost all consecutive questions will be answered $\prec$, 
	and the algorithm will suspect that some earlier question is a lie.

We start by suggesting a question-asking strategy using comparison queries which is valid as long as there are no lies, and then show how to make it resilient to lies.
Each element $x_i$ is mapped to a point $p_i$ in $[0,1]$, such that $p_1 < p_2 < \cdots < p_n$. Then, a binary search on the interval $[0,1]$ is performed, for finding the point $p_i$ corresponding to the hidden element.
The search proceeds by maintaining a $\Live$ interval, which is initialized to $[0,1]$. At any iteration, the questioner asks whether $p_i$ lies in the left half of the $\Live$ interval. The interval is updated accordingly, and its length shrinks by a factor of $2$. This technique was proposed by Gilbert--Moore~\cite{GilbertMoore}, and is presented in \auxAlgRef~\ref{alg:gilbert-moore}, as an algorithm which keeps asking questions indefinitely.

\begin{auxAlg}
	\begin{algorithmic}[1]
		\State{$\Live \gets [0,1]$}
		\Loop
		\State{$m \gets \text{midpoint of } \Live$}
		\State{$X \gets \{ i : p_i \geq m \}$}
		\If{$x \in X$}
		\State{$\Live \gets \text{right half of } \Live$}
		\Else
		\State{$\Live \gets \text{left half of } \Live$}
		\EndIf
		\EndLoop
	\end{algorithmic}
	\caption{Randomized Gilbert--Moore} \label{alg:gilbert-moore}
\end{auxAlg}

The points $p_1, \dots, p_n$ are defined as follows: first, a number $\theta \in [0,1/2)$ is drawn uniformly at random. Now, for any element $i$ define $p_i = \frac{1}{2} \sum_{j=1}^{i-1} \mu_j + \frac{1}{4}\mu_i + \theta$.\footnote{In the original paper $p_i$ was defined similarly but without the randomization: $p_i = \sum_{j=1}^{i-1} \mu_j + \frac{1}{2} \mu_i$.} Given $\theta$, let $T'_\theta$  denote the infinite tree generated by \auxAlgRef~\ref{alg:gilbert-moore}. Note that whenever $\Live$ contains just one point $p_i$, 
then (as there are no lies) the hidden element must be $x_i$. Denote by $T_\theta$ the finite tree corresponding to the algorithm which stops whenever that happens.
We present two claims about these trees which are proved in Section~\ref{sec:ub:stepone}.

First, conditioned on any hidden element $x_i$, the answers to all questions (except, perhaps, for the first answer) are distributed uniformly and independently, where the distribution is over the random choice of $\theta$. This follows from the fact that all bits of $p_i$ except for the most significant bit are i.i.d.\ unbiased coin flips.

\begin{claim} \label{cla:gilbert-moore-uniform}
	For any element $x_i$, let $(A_t)$ be the random sequence of answers to the questions in \auxAlgRef~\ref{alg:gilbert-moore}, containing all answers except for the first answer, assuming there are no lies. The distribution of the sequence $(A_t)$ is the same as that of an infinite sequence of independent unbiased coin tosses, where the randomness stems from the random choice of $p_i$.
\end{claim}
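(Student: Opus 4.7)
The plan is to identify the answer sequence with the bits of the binary expansion of $p_i$, and then to observe that, conditioned on the hidden element $x_i$, the point $p_i$ is distributed uniformly on an interval of length exactly $1/2$, so that all of its bits except possibly the first look i.i.d.\ unbiased.

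A straightforward induction on $t$ shows that \auxAlgRef~\ref{alg:gilbert-moore} is just standard binary search on $[0,1]$: after $t$ iterations, $\Live$ is the dyadic interval of length $2^{-t}$ whose left endpoint is $0.A_1 A_2 \ldots A_t$ in binary (with the convention $1 = $``right half'' and $0 = $``left half''), and $A_{t+1}$ records whether $p_i$ lies in the right half of $\Live$. Consequently, if $p_i = 0.b_1 b_2 b_3 \ldots$ is the (non-terminating) binary expansion of $p_i$, then $A_t = b_t$ for every $t \geq 1$.

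It remains to compute the conditional law of $p_i$ given $x_i$. By construction $p_i = c_i + \theta$ with $c_i := \tfrac12\sum_{j<i}\mu_j + \tfrac14\mu_i$ deterministic and $\theta \sim \mathrm{Unif}[0, 1/2)$. A short calculation using $\mu_i > 0$ gives $c_i \in [0, 1/2)$, so $p_i$ is uniform on the length-$1/2$ interval $[c_i, c_i + 1/2) \subseteq [0, 1)$. Since $x \mapsto x \bmod 1/2$ is a measure-preserving bijection from any length-$1/2$ subinterval of $[0,1)$ onto $[0, 1/2)$, it follows that $2p_i \bmod 1$ is uniform on $[0,1)$. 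The binary expansion of $2p_i \bmod 1$ is precisely $0.b_2 b_3 \ldots$, and the binary digits of a uniform variable on $[0,1)$ are the standard example of an i.i.d.\ unbiased coin sequence. Hence $(A_2, A_3, \ldots) = (b_2, b_3, \ldots)$ is an i.i.d.\ sequence of unbiased coin flips, as claimed. There is no genuine obstacle here: the only things to verify carefully are the identification $A_t = b_t$ and the strict inequality $c_i < 1/2$, both of which are essentially bookkeeping.
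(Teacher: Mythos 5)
Your proof is correct and follows essentially the same path as the paper's: both identify $A_t$ with the $t$-th binary digit of $p_i$, observe that $p_i$ is uniform on a length-$\tfrac12$ subinterval of $[0,1)$, and reduce modulo $\tfrac12$ (your $2p_i \bmod 1$ is the same device, phrased as a doubling map) to conclude that the digits from position~$2$ onward are i.i.d.\ unbiased. The only nit is your use of the ``non-terminating'' binary expansion: the algorithm's rule ``$p_i \ge m$'' actually matches the terminating convention at dyadic points, but since $\theta$ is continuous this is a measure-zero discrepancy that affects neither your argument nor the paper's.
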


Second, since $\min(p_i - p_{i-1}, p_{i+1} - p_i) \ge \mu_i/4$, one can bound the time it takes to isolate $x_i$ as follows.

\begin{claim} \label{cla:gilbert-moore-depth}
	For any element $x_i$ and any $\theta$, the leaf in $T_\theta$ labeled by $x_i$ is of depth at most $\log (1/\mu_i) + 3$. Hence, if $x$ is drawn from a distribution $\mu$, the expected depth of the leaf labeled $x$ is at most $\sum_i \mu_i \log (1/\mu_i) + 3 = H(\mu)+3$.
\end{claim}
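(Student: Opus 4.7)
The plan is to track the length of the $\Live$ interval along the root-to-leaf path for $x_i$ in $T_\theta$ and to identify the moment at which $\Live$ can contain no point $p_j$ other than $p_i$, which by construction of $T_\theta$ is exactly the stopping condition.

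First I would record two structural facts about \auxAlgRef~\ref{alg:gilbert-moore}: at depth $d$ the $\Live$ interval has length exactly $2^{-d}$, since every iteration halves it; and $p_i$ remains in $\Live$ throughout the path corresponding to hidden element $x_i$, since the algorithm always descends into the half containing $p_i$. I would then compute, directly from $p_i = \tfrac{1}{2}\sum_{j<i}\mu_j + \tfrac{1}{4}\mu_i + \theta$, the gaps $p_i - p_{i-1} = \tfrac{1}{4}(\mu_{i-1} + \mu_i)$ and $p_{i+1} - p_i = \tfrac{1}{4}(\mu_i + \mu_{i+1})$, each of which is at least $\mu_i/4$ (with the obvious one-sided version when $i \in \{1,n\}$).

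The key step is then a simple geometric argument. If $\Live = [a, a+\ell]$ contains $p_i$ and $\ell < \mu_i/4$, then $a > p_i - \mu_i/4 \ge p_{i-1}$ and $a+\ell < p_i + \mu_i/4 \le p_{i+1}$, so no other point $p_j$ can lie in $\Live$ and the algorithm halts. Since $2^{-d} < \mu_i/4$ as soon as $d > \log(1/\mu_i) + 2$, the leaf labeled $x_i$ has depth at most $\log(1/\mu_i) + 3$. Averaging against $\mu$ immediately yields the expected-depth bound $H(\mu) + 3$.

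I do not anticipate any serious obstacle; the argument is essentially bookkeeping around the length of $\Live$ and the distances between consecutive $p_j$'s. The only point that requires a bit of attention is the strict-versus-weak inequality at the boundary of halvings, but this can be resolved by a conventional choice (e.g.\ using left-closed right-open subintervals) and does not affect the stated constant $3$.
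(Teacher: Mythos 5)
Your proposal is correct and follows essentially the same argument as the paper: track the width $2^{-d}$ of $\Live$, observe that the nearest neighbor of $p_i$ is at distance at least $\mu_i/4$, and conclude that once $2^{-d}$ drops below that gap the interval isolates $p_i$, giving depth at most $\log(1/\mu_i)+3$. The only cosmetic difference is that you compute the gaps $p_i - p_{i-1}$ and $p_{i+1} - p_i$ explicitly and use a strict inequality, while the paper simply writes $d_i = \min(p_i - p_{i-1}, p_{i+1} - p_i) \ge \mu_i/4$ and uses a weak one, with both glossing over the same endpoint convention you flag.
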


\setcounter{algorithm}{0}
\begin{algorithm}
	\begin{algorithmic}[1]
		\State{$\theta \gets \mathrm{Uniform}([0,1/2))$}
		\State{$\Current \gets \theroot(T'_\theta)$}
		\State{$\LastVerified \gets \theroot(T_\theta')$}
		\While{$\LastVerified$ is not a leaf of $T_\theta$}
		\If{$x \in Q(\Current)$}
		\State{$\Current \gets \leftchild(\Current)$}
		\Else
		\State{$\Current \gets \rightchild(\Current)$}
		\EndIf
		\State{$d \gets \depth(\LastVerified) + 1$}
		\If{$\depth(\Current) = d + r(d)$}
		\State $\Candidate \gets $ child of $\LastVerified$ which is an ancestor of $\Current$
		\State $\VerificationPath \gets $ ancestors of $\Current$ up to and excluding $\Candidate$
		\If{$\Candidate$ is a left (right) child and at most $k-1$ vertices in $\VerificationPath$ are left (right) children}
		\State{Ask $2k+1$ times the question $x \in Q(\LastVerified)$} \label{l:majority-vote} 
		\If{majority answer is $x \in Q(\LastVerified)$}
		\State{$\LastVerified \gets \leftchild(\LastVerified)$}
		\Else
		\State{$\LastVerified \gets \rightchild(\LastVerified)$}
		\EndIf
		\State{$\Current \gets \LastVerified$}
		\Else
		\State{$\LastVerified \gets \Candidate$}
		\EndIf
		\EndIf
		\EndWhile
		\State \Return label of $\LastVerified$
	\end{algorithmic}
	\caption{Resilient-Tree} \label{alg:main-ub}
\end{algorithm}

We now describe the $k$-resilient algorithm: Algorithm~\ref{alg:main-ub} (the pseudocode appears as well). At the beginning, a number $\theta$ is randomly drawn. Then, two concurrent simulations over $T'_\theta$ are performed, and two pointers to nodes in this tree are maintained (recall that $T'_\theta$ is the infinite binary search tree).
The first pointer, $\Current$, simulates the question-asking strategy according to $T'_\theta$, ignoring the possibility of lies. In particular, it may point on an incorrect node in the tree (reached as a result of a lie).
Since $\Current$ ignores the possibility of lies, there is a different pointer, $\LastVerified$, which verifies the answers to the questions asked in the simulation of $\Current$. All answers in the path from the root to $\LastVerified$ are verified as correct, and $\LastVerified$ will always be an ancestor of $\Current$. See Figure~\ref{fig:algorithm-1} for the basic setup.

\begin{figure}
\centering
\includegraphics{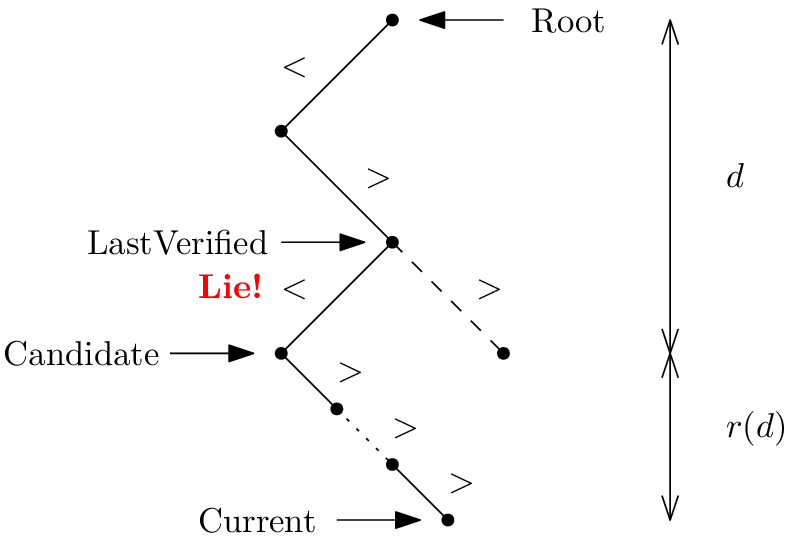}
\caption{An illustration of Algorithm~\ref{alg:main-ub} just before the detection of a lie. The answer at $\LastVerified$ was a lie ($<$ instead of $>$), and so all answers below $\Candidate$ (except for any further lies) are $>$. This is noticed since $\Current$ is at depth $d + r(d)$. The answer at $\Candidate$ will be verified and found wrong, and so $\LastVerified$ would move to the sibling of $\Candidate$ (and so will $\Current$), and the algorithm will continue from that point.}
\label{fig:algorithm-1}	
\end{figure}

The algorithm proceeds in iterations. 
	In every iteration the question $Q(\Current)$ is asked and $\Current$ is advanced to the corresponding child accordingly. 
	In \underline{some} of the iterations also $\LastVerified$ is advanced. 
	Concretely, this happens when the depth of $\Current$ in~$T'_\theta$ equals $d + r(d)$, 
	where $d$ is the depth of $\LastVerified$ and $r(d) \approx \log d + k \log \log d$.\footnote{The exact definition of $r(d)$ is in Equation~\eqref{eq:def-rd}.} 
	In these iterations, the answer given to $Q(\LastVerified)$ is being verified, as detailed next. 

\paragraph{The verification process.}
Next, we examine the verification process when $\LastVerified$ is advanced. 
	There are two possibilities: 
	first, when the answer to the question $Q(\LastVerified)$, 
	which was given when $\Current$ traversed it, is verified to be correct.
	In that case, $\LastVerified$ moves to its child which lies on the path towards $\Current$. 
	In the complementing case, when the the answer to the question $Q(\LastVerified)$ is detected as a lie, 
	then $\LastVerified$ moves to the other child. In that case, $\Current$ is no longer a descendant of $\LastVerified$, 
	hence $\Current$ is moved up the tree and is set to $\LastVerified$.

We now explain how the answer to $Q(\LastVerified)$ is verified. 
	 There are two verification steps: the first step uses no additional questions 
	 and the second step uses $2k+1$ additional questions.
	 Usually, only the first step will be used and no additional questions will be spent during verification.
	 In the first verification step one checks whether the following condition holds: 
	\begin{center}
	 {\it The answer to $Q(\LastVerified)$ is identical to at least $k$ of the answers along the path from $\LastVerified$ to $\Current$}.
	 \end{center}
	 If this condition holds, then the answer is verified as correct. 
	To see why this reasoning is valid, assume without loss of generality that the answer is $\prec$, 
	 and assume towards contradiction that it was a lie. 
	 Then, the correct answers to all following questions in the simulation of $\Current$ are~$\succeq$. 
	 Since there can be at most $k-1$ additional lies, there can be at most $k-1$ additional~$\prec$ answers. Hence, 
	 if there are more $\prec$ answers among the following questions then the previous answer to~$Q(\LastVerified)$ is verified as correct. 
	 
Else, if the above condition does not hold then one proceeds to the second verification step and asks~$2k+1$ times
	the question~$Q(\LastVerified)$. Here, the majority answer must be correct, 
	since there can be at most $k$ lies. 
	
We add one comment: if the second verification step is taken, 
	one sets $\Current \gets \LastVerified$ regardless of whether a lie had been revealed (this is performed to facilitate the proof).
	So, whenever the condition in the first verification step fails to hold then $\Current$ and $\LastVerified$ point to the same node 
	in the tree.

The algorithm ends when $\LastVerified$ reaches a leaf of $T_\theta$, at which point the hidden element is recovered.

\paragraph{Query complexity analysis.}
Fix an element $x_i$. 
	We bound the expected number of questions asked when $x_i$ is the hidden element as follows. 
	Define $d \approx \log(1/\mu(x_i))$ as the depth of the leaf labeled~$x_i$ in~$T_\theta$. 
	We divide the questions into the following five categories:
\begin{itemize}
	\item Questions on the path $P$ from the root to $\Current$ by the end of the algorithm,
	when $\Current$ reaches depth $d + r(d)$, $\LastVerified$ reaches depth $d$, and the algorithm terminates. Hence, there are at most $d + r(d)$ such questions.
	\item Questions that were ignored due to the second verification step
	while $\Current$ was backtracked from a node outside $P$. 
	This can only happen due to a lie between $\Current$ and $\LastVerified$ so there are at most $k\cdot r(d)$ such questions.
	\item Questions asked $2k+1$ times during the second verification step when $\Current$
	was pointing to a node outside $P$.
	This can only happen due to a lie between $\Current$ and $\LastVerified$ so there are at most $k\cdot(2k+1)$ such questions.
	\item Questions that were ignored due to the second verification step,
	when $\Current$ was being backtracked from a node in $P$.
	By the choice of $r(d)$ there are at most $O(1)$ such questions (on expectation).
	\item Questions asked $2k+1$ during the second verification step when $\Current$
	was pointing to a node in $P$.
	By the choice of $r(d)$ there are at most $O(1)$ such questions (in expectation).
\end{itemize}
Summing these bounds up, one obtains a bound of 
\[ \bigl(d+r(d)\bigr) + k\cdot r(d) + k\cdot (2k+1) + O(1) + O(1) \approx \log (1/\mu_i) + (k+1) \Bigl(\log \log (1/\mu_i) + k \log \log \log (1/\mu_i) + O(k)\Bigr). \]

\subsubsection{Second algorithm} \label{sec:mr:ub2}

In this section we overview the proof of the following theorem (the complete proof appears in Section~\ref{sec:2ndalgproof}).

\begin{theorem} \label{thm:algtwo-better}
	For any distribution $\mu$ there exists a $k$-valid comparison tree $T$ with
	\[
	c(T,\mu) \le H(\mu) + k \mathbb{E}_{x \sim \mu}\left[\log \olog (1/\mu(x))\right] + O( k\mathbb{E}_{x \sim \mu}\left[\log \log \olog (1/\mu(x)) \right] + k \log k).
	\]
\end{theorem}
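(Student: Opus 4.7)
The plan is to refine Algorithm~\ref{alg:main-ub} so that the cost charged to each of the (at most $k$) lies drops from $r(d) + 2k+1$ with $r(d) \approx \olog d + k\,\olog\olog d$ down to roughly $\olog d + O(\olog\olog d + \log k)$. This shaves the factor of $k$ inside both the $H_2$ and the $H_3$ terms, converting $(k+1)H_2(\mu) + O(k^2 H_3(\mu) + k^2 \log k)$ into the target $H(\mu) + k H_2(\mu) + O(k H_3(\mu) + k\log k)$. I would retain the scaffolding of Algorithm~\ref{alg:main-ub}: the Gilbert--Moore randomized embedding $x_i\mapsto p_i\in[0,1]$ from \auxAlgRef~\ref{alg:gilbert-moore}, the infinite tree $T'_\theta$, Claims~\ref{cla:gilbert-moore-uniform} and~\ref{cla:gilbert-moore-depth}, and the two-pointer invariant $(\Current,\LastVerified)$ with $\LastVerified$ a verified ancestor of $\Current$.

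The key change is the verification subroutine. Instead of asking $Q(\LastVerified)$ $2k+1$ times (which always costs $\Theta(k)$), I would deploy a Berlekamp/R\'enyi--Ulam style \emph{locate-and-correct} procedure: upon suspicion at depth $d$, use $O(\olog\olog d + \log k)$ queries to identify which of the $r(d)$ recent positions held the lie, by maintaining a state-vector potential over candidate (position, lie-count) pairs that halves at every query regardless of the adversary's answer. Any further lies spent during localization are charged against the global $k$-budget rather than paid for by restarting verification, so the total verification overhead aggregates to $O(k\,\olog\olog d + k\log k)$ instead of the $O(k^2)$ of Algorithm~\ref{alg:main-ub}. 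The cheaper verifier in turn permits a smaller suspicion window $r(d)\approx\olog d + O(\olog\olog d)$ (with no $k$ inside), while the expected cost of spurious true-path verifications remains $O(1)$ by the i.i.d.\ tail bound from Claim~\ref{cla:gilbert-moore-uniform}.

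The analysis then parallels Section~\ref{sec:mr:ub1}. Fix $x_i$ with leaf-depth $d\approx\log(1/\mu_i)$ in $T_\theta$ and partition the queries into (i) the final root-to-$\Current$ path of length $d + r(d) = \log(1/\mu_i) + O(\olog\olog(1/\mu_i))$; (ii) buffer queries lost per lie, at most $k\cdot r(d) = k\olog(1/\mu_i) + O(k\,\olog\olog(1/\mu_i))$; (iii) total localization cost $O(k\,\olog\olog d + k\log k)$; and (iv) spurious verifications on the true path, contributing $O(1)$ in expectation. Taking expectation over $x_i\sim\mu$ and absorbing absolute constants into the $O(k\log k)$ slack yields Theorem~\ref{thm:algtwo-better}.

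The main obstacle will be making the locate-and-correct subroutine rigorous against an adversary that may lie both inside the buffer and inside the subroutine itself, while maintaining the $O(\olog\olog d + \log k)$ query budget and charging every such lie to the single global budget of $k$ with no double counting. The Berlekamp state-vector argument supplies the machinery, but the potential must be designed to simultaneously exploit the monotone template structure forced on the buffer by the position of the lie and tolerate ongoing adversarial behavior during localization without inflating the cost. Once this subroutine is in place, the four-category bookkeeping above becomes mechanical.
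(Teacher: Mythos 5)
Your approach is genuinely different from the paper's and, as sketched, it has a real gap in the analysis of false positives.

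The paper (Section~\ref{sec:2ndalgproof}) does not keep the $(\Current, \LastVerified)$ two-pointer scaffolding at all. It drops $\LastVerified$, keeps only $\Current$, changes the \emph{suspicion criterion} (a node is suspected only if \emph{all} $r'(d)$ consecutive answers oppose it, not merely that at most $k$ of $r(d)$ agree), and, crucially, upon suspicion simply re-asks the suspected question \emph{once} and lets lies get deleted in LIFO (stack-like) order. Infinite loops are prevented by the notion of a ``verified'' answer (one given $k+1$ times), which cannot be re-suspected. Termination uses a separate \textsc{verifyObject} routine with equality queries whose aggregate cost is $O(k)$. There is no Berlekamp-style locate-and-correct anywhere; the whole point is to avoid such a subroutine. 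Note also that the paper's $r'(j)$ is \emph{not} free of $k$: by Eq.~\eqref{eq:r-tag-nice-def} it has the form $\log j + C(\olog k + \log\olog j)$, and the $\olog k$ term there is doing real work.

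The gap in your plan is exactly at the claim that ``the expected cost of spurious true-path verifications remains $O(1)$ by the i.i.d.\ tail bound'' once $r(d)$ is shrunk to $\log d + O(\log\log d)$. If you keep Algorithm~\ref{alg:main-ub}'s suspicion criterion (at most $k$ matching answers among $r(d)$), the probability that a true-path node at depth $d$ is problematic is $\binom{r(d)}{\le k} 2^{-r(d)}$, which is governed by $r(d)$ alone and is not reduced by making the subsequent verifier cheaper. With $r(d) = \log d + O(\log\log d)$ this probability is $\Theta((\log d)^k/(d\log^2 d))$, whose sum over $d$ diverges already for $k\ge 2$; you would need the $k\log\log d$ term back inside $r(d)$, which would forfeit your savings. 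The paper evades this precisely by replacing the criterion with ``all $r'(d)$ descendants opposing,'' dropping the $\binom{r}{\le k}$ factor so that the false-positive rate is $2^{-r'(d)}$ and $r'(d)\approx \log d + O(\log k + \log\log d)$ suffices (the extra $O(\log k)$ supplies the $1/(2k)$ needed so that $k$ repeated jump-backs per problematic node still sum to $O(1)$). If you want to pursue the locate-and-correct route, you must also change the suspicion rule (or inflate $r$), and you should be explicit about how the $k$ repeated jump-backs atop the same node are bounded; otherwise the bookkeeping in your item (iv) does not close.
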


We explain the key differences with \algone{}. 

\begin{itemize}
	\item In Algorithm~\ref{alg:main-ub}, 
	an answer to a question $Q$ at depth $d$ was suspected as a lie if at most $k$ of the $r(d)$ consecutive questions received the same answer as $Q$. 
	In the new algorithm, we suspect a question $Q$ if \emph{all} the $r'(d)$ consecutive answers are different than $Q$.
	This change enables setting $r'(d) \approx \log d$ rather than the previous value of $r(d) \approx \log d + k \log \log d$. Similarly to \algone{}, any time a lie is deleted, $r'(d)$ questions are being deleted. Summing over the $k$ lies, one obtains a total of $k r'(d) \approx k \log d$ deleted questions, which is smaller than the corresponding value of $k r(d) \approx k \log d + k^2 \log \log d$ in \algone{}.
	
	\item In \algone{}, the lies were detected in the same order they were told (i.e.\ in a {\it first-in-first-out} queue-like manner).
	This is due to the semantic of the pointer $\LastVerified$ which verifies the questions one-by-one, along the branching of the tree. 
	In \algtwo{} the pointer $\LastVerified$ is removed (only $\Current$ is used), and the lies are detected in a
	{\it last-in-first-out} stack-like manner: only the last lie can be deleted at any point in time. Indeed, as described in the previous paragraph, a lie will be deleted only if all consecutive answers are different (which is equivalent to them being non-lies).
	
	\item In \algone{}, when an answer is suspected as a lie, 
	the corresponding question $Q$ is repeated~$2k+1$ times in order to verify its correctness. 
	This happens after each lie, hence $\Omega(k)$ redundant questions are asked per lie. 
	In \algtwo{}, the suspected question $Q$ will be asked again only \emph{once}, 
	and the algorithm will proceed accordingly. 
	It may however be the case that this process will repeat itself and also the second answer to this question will be suspected as a lie
	and  $Q$ will be asked once again and so on.
	In order to avoid an infinite loop we add the condition that
	if the same answer is told $k+1$ times then it is guaranteed to be correct 
	and will not be suspected any more.
	
	\item The removal of $\LastVerified$ forces finding a different method of verifying the correctness of an element $x$ upon arriving at a leaf of $T_\theta$. One option is to ask the question ``element $=x$?'' $2k+1$ times and take the majority vote, where each $=$ question is implemented using one $\preceq$ and one $\succeq$. This will, however, lead to asking $\Omega(k)$ redundant questions each time $x$ is not the correct element. Instead, one asks ``element $=x$?'' multiple times, stopping either when the answer $=$ is obtained $k+1$ times, or by the first the answer $\ne$ has obtained more than the answer $=$. The total redundancy imposed by these verification questions throughout the whole search is $O(k)$.
\end{itemize}
To put the algorithm together, we exploit some simple combinatorial properties of paths containing multiple lies.

\subsubsection{A fine-grained analysis of the guarantees} \label{sec:mr:finegrained}

In this section, we present a stronger statement for the guarantees of our algorithms. First, the algorithms do not have to know \emph{exactly} the distribution $\mu$ from which the hidden element is drawn: an approximation suffices for getting a similar bound. Recall that the algorithm gets as an input some probability distribution $\eta$. This distribution might differ from the true distribution $\mu$. The cost of using $\eta$ rather than $\mu$ is related to $D(\mu \| \eta)$, the Kullback--Leibler divergence between the distributions.

Secondly, the algorithm has stronger guarantees when the actual number of lies is less than~$k$. This is an improvement comparing to the algorithm of Rivest et al.~\cite{Rivest} mentioned in the introduction. It will be utilized in the application of sorting, where the searching algorithm is invoked multiple times with a bound on the total number of lies (rather the number of lies per iteration). We present the general statement with respect to \algtwo{}. The statement and the corresponding bound on \algone{} appears in Lemma~\ref{lem:ub-main} in Section~\ref{sec:ub1}.

\begin{theorem} \label{thm:algtwo-one-elem}
	Assume that \algtwo{} is invoked with the distribution $(\eta_1, \dots, \eta_n)$. Then, for any element $x_i$, the expected number of questions asked when $x_i$ is the secret is at most
	\[ \log\left(1/\eta_i \right) + \EE[K']  \log \olog \frac{1}{\eta_i } + O\left(\EE[K'] \log \log \olog \frac{1}{\eta_i} + \EE[K'] \olog k + k \right), \]
	where $K'$ is the expected number of lies. (The expectation is taken over the randomness of both parties.) 
\end{theorem}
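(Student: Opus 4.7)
Fix $x_i$ and condition on it being the secret. I would decompose the questions asked by \algtwo{} into three classes: (a) questions on the unique root-to-leaf path $P$ in $T_\theta$ terminating at the leaf labeled $x_i$, (b) ``wasted'' questions asked while $\Current$ is off $P$ because of a lie, and (c) verification questions, both the intermediate re-asks of suspected questions and the final leaf verification. For (a), the $\eta$-analog of Claim~\ref{cla:gilbert-moore-depth} shows that the leaf of $T_\theta$ labeled $x_i$ has depth at most $\log(1/\eta_i) + O(1)$, which yields the leading $\log(1/\eta_i)$ term.

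To handle (b), I would analyze each lie separately. A lie told at depth $d$ causes $\Current$ to continue until $r'(d) \approx \olog d$ consecutive subsequent answers all contradict it, at which point the question at the lie's depth is re-asked. This single re-ask is the only cost charged to verification per lie, unless the adversary also lies on the re-ask --- in which case the process repeats. The explicit $k+1$ acceptance threshold guarantees that after $O(\olog k)$ re-asks per original lie the answer is declared correct, capping the re-ask overhead per lie at $O(\olog k)$. Hence each lie contributes at most $r'(d) + O(\olog k) = \log\olog(1/\eta_i) + O(\log\log\olog(1/\eta_i) + \olog k)$ wasted questions, using the fact that $\Current$'s depth cannot exceed the depth of $x_i$ in $T_\theta$ by more than the cumulative excursion length, which telescopes to the same order bound. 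Summing over the $K'$ lies and taking expectations gives
\[
\EE[K'] \log\olog(1/\eta_i) + O\bigl(\EE[K'] \log\log\olog(1/\eta_i) + \EE[K'] \olog k\bigr).
\]

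For (c), the leaf verification at the end --- repeated ``$=x_i$?'' queries stopped either when $=$ is heard $k+1$ times or when $\neq$ strictly overtakes $=$ --- contributes $O(k)$ questions in total across the entire run, since the $k+1$ cap bounds the accepting case at each correct leaf and every premature rejection at an incorrect leaf consumes at least one adversarial lie. Adding the contributions from (a), (b), (c) yields the bound claimed in the theorem.

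The main obstacle is part (b): one must carefully argue that lies do not compound, using the LIFO detection structure to assign each wasted question to exactly one ``responsible'' lie without double-counting the excursions triggered by cascaded re-asks. A further subtlety is bounding the probability of a \emph{false alarm} (a truthful question suspected as a lie). Here I would invoke Claim~\ref{cla:gilbert-moore-uniform}: conditioned on $x_i$, the answers after the first are i.i.d.\ uniform bits, so the probability that $r'(d)$ consecutive of them all contradict a given earlier answer is $2^{-r'(d)}$; with $r'(d) \approx \olog d + \olog k + \olog\olog d$ the sum $\sum_d 2^{-r'(d)}$ converges, so spurious verifications contribute only $O(1)$ in expectation and are absorbed into the bound above.
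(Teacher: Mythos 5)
Your decomposition into (a) path-length, (b) lie-caused deletions, and (c) verification questions mirrors the paper's Lemma~\ref{lem:ub2:grouping-q}, and your endpoint matches. There are, however, two substantive gaps in the intermediate reasoning.

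First, the re-ask accounting. You claim the $k+1$ acceptance threshold ``caps the re-ask overhead per lie at $O(\olog k)$.'' This is not where the $\olog k$ in the bound comes from, and the claim as stated is false: the threshold gives at most $k$ jump-backs per node before it becomes verified (Lemma~\ref{lem:jb-verified}), not $O(\olog k)$. The paper's actual bookkeeping is: each jump-back either deletes a lie (at most $K'$ of these, by Corollary~\ref{cor:jb-divide}) or is atop a problematic node; the former cost $K'\cdot(r'(M)+1)$ total, and the $\olog k$ you want is already sitting inside the definition of $r'$, not in the count of re-asks. Your phrasing conflates the number of jump-backs with the cost per jump-back. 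Relatedly, for false alarms you bound $\sum_d 2^{-r'(d)}$, but the correct quantity to bound is $k\sum_d (r'(d)+1)\,2^{-r'(d)}$: each problematic node at depth $d$ can trigger up to $k$ jump-backs, each deleting $r'(d)+1$ answers. Making this heavier sum converge is precisely why $r'$ carries the $\logtwo(2k\cdot)$ term; the weaker sum converging does not by itself give the needed bound.

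Second, and more seriously, the bound on $M$, the maximum depth reached by $\Current$. You write that the excursion lengths ``telescope to the same order bound,'' but this is the hardest step in the paper's proof and cannot be dispatched in a phrase. The issue is that $r'(M)$ appears in the per-lie cost $K'(r'(M)+1)$, so $M$ itself must be controlled; a priori $\Current$ lives on the full finite tree $T_\theta$ and could in principle wander to a leaf of depth $\gg \log(1/\eta_i)$. The paper controls $M$ by observing that the deepest excursion depth $q_j$ reachable with $j$ lies satisfies the recursion $q_{j} \ge q_{j-1} + r'(q_{j-1})$, and then solving this recursion explicitly (Lemma~\ref{lem:log-seq}) to conclude $M = \tilde O\bigl(k\,\olog k \cdot \log(1/\eta_i)\bigr)$, whence $r'(M) \le \log\olog(1/\eta_i) + O(\log\log\olog(1/\eta_i) + \olog k)$ (Lemma~\ref{lem:bnd-M}, Corollary~\ref{cor:bnd-rprime-M}). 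Without some version of this recursive argument your (b) is incomplete. The rest of your plan (leaf depth $\log(1/\eta_i)+O(1)$, verification questions $O(k)$, LIFO detection of lies) is sound and in line with the paper's Lemma~\ref{lem:ub2:vlies}, Claim~\ref{cla:gilbert-moore-depth}, and Corollary~\ref{cor:jb-divide}.
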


As a corollary, one obtains Theorem~\ref{thm:algtwo-better} and the following corollary, which corresponds to using a distribution different from the actual distribution.

\begin{corollary} \label{cor:diffdist}
	Assume that \algtwo{} is invoked with $(\eta_1, \dots, \eta_n)$ while $(\mu_1, \dots, \mu_n)$ is the true distribution. Then, for a random hidden element drawn from $\mu$, the expected number of questions asked is at most
	\begin{align*}
	H(\mu) &+ k \mathbb{E}_{x \sim \mu}\left[\log \olog (1/\mu(x))\right] + O( k \EE_{\mu}\left[\log \log \olog (1/\mu(x)) \right] + k \olog k ) \\
	&+ D(\mu \| \eta) + O(k \log D(\mu \| \eta)),
	 \end{align*}
	where $D(\mu \| \eta) = \sum_{x \in \supp{\mu}} \mu(x) \log \frac{\mu(x)}{\eta(x)}$ is the Kullback--Leibler divergence between $\mu$ and $\eta$.
\end{corollary}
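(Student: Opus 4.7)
}

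The plan is to apply Theorem~\ref{thm:algtwo-one-elem} elementwise, take expectations over $x \sim \mu$, and convert the resulting $\eta$-dependent terms into $\mu$-dependent terms plus corrections involving $D(\mu\|\eta)$. Since the actual number of lies is at most $k$, we have $\EE[K'] \le k$, so Theorem~\ref{thm:algtwo-one-elem} yields, for each $x_i$, an upper bound of
\[
\log(1/\eta_i) + k\log\olog(1/\eta_i) + O\bigl(k\log\log\olog(1/\eta_i) + k\olog k\bigr).
\]
Averaging over $x \sim \mu$ leaves us with three expectations to estimate: $\EE_\mu[\log(1/\eta(x))]$, $\EE_\mu[\log\olog(1/\eta(x))]$, and $\EE_\mu[\log\log\olog(1/\eta(x))]$.

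The first expectation is immediate from the definition of the KL divergence: $\EE_\mu[\log(1/\eta(x))] = H(\mu) + D(\mu\|\eta)$, contributing the $D(\mu\|\eta)$ term of the claimed bound. For the second and third expectations I would use the factorization $1/\eta(x) = (1/\mu(x)) \cdot (\mu(x)/\eta(x))$ and the elementary subadditivity $\olog(ab) \le \olog(a) + \olog(b) + O(1)$ (valid for $a, b \ge 1$, since $ab + C \le (a+C)(b+C)$ up to an additive constant). Applied twice, this yields
\[
\log\olog(1/\eta(x)) \le \log\olog(1/\mu(x)) + \log\olog(\mu(x)/\eta(x)) + O(1),
\]
and analogously one more layer of $\log$ gives $\log\log\olog(1/\eta(x)) \le \log\log\olog(1/\mu(x)) + \log\log\olog(\mu(x)/\eta(x)) + O(1)$.

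The main technical step is bounding the correction $\EE_\mu[\log\olog(\mu(x)/\eta(x))]$ in terms of $D(\mu\|\eta)$. The key lemma I would prove is
\[
\EE_\mu\bigl[\log^{+}(\mu(x)/\eta(x))\bigr] \le D(\mu\|\eta) + 1,
\]
where $\log^{+}(t) = \max(0, \log t)$. This follows by splitting $D(\mu\|\eta) = \EE_\mu[\log(\mu/\eta)]$ according to the sign of $\log(\mu/\eta)$ and using $\log t \le t - 1$ on the set $\{\eta \ge \mu\}$:
\[
\EE_\mu[\log^{-}(\mu/\eta)] = \sum_{x:\eta(x)\ge\mu(x)} \mu(x)\log(\eta(x)/\mu(x)) \le \sum_{x:\eta(x)\ge\mu(x)}(\eta(x)-\mu(x)) \le 1.
\]
Since $\olog(t) \le \log^{+}(t) + O(1)$, the lemma gives $\EE_\mu[\olog(\mu(x)/\eta(x))] \le D(\mu\|\eta) + O(1)$, and then Jensen's inequality applied to the concave outer $\log$ yields $\EE_\mu[\log\olog(\mu(x)/\eta(x))] \le \log(D(\mu\|\eta) + O(1)) = O(\log D(\mu\|\eta))$. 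The $\log\log\olog$ correction is handled by one more Jensen step and is dominated by the same $O(\log D(\mu\|\eta))$ order. Summing the three pieces and multiplying by $k$ where appropriate gives exactly the bound stated in the corollary.

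The hard part will be bookkeeping the constants in the $\olog$-subadditivity step—one has to verify the inequality $\log(u + v + c) \le \log u + \log v + O(1)$ is applied only where both $u, v$ are bounded below by $1$, which is automatic thanks to the definition of $\olog$ with the large additive constant $C$. Everything else is routine algebra.
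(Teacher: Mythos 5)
Your proof is correct, and it takes essentially the same overall path the paper sketches (Theorem~\ref{thm:algtwo-one-elem}, $K'\le k$, expectation over $\mu$, KL identity, Jensen), but you supply a step that the paper's one-sentence sketch leaves implicit and that is genuinely needed. Specifically, the paper says only ``noting that $\sum_i\mu_i\log(1/\eta_i)=H(\mu)+D(\mu\|\eta)$ and applying Jensen's inequality with the function $x\mapsto\log x$.'' A literal reading of that---apply Jensen to $\EE_\mu[\log\olog(1/\eta_x)]$ directly---gives $\log\EE_\mu[\olog(1/\eta_x)]\le\log\bigl(H(\mu)+D(\mu\|\eta)+O(1)\bigr)$, and this is \emph{not} bounded by $\EE_\mu[\log\olog(1/\mu(x))]+O(\log D(\mu\|\eta))$ in general. (Indeed, when $\eta=\mu$ one would be trying to bound the Jensen-upper $\log(H(\mu)+O(1))$ by the Jensen-lower $\EE_\mu[\log\olog(1/\mu(x))]$, which is false, e.g.\ for a distribution with one heavy atom and many equal light atoms, where the two quantities differ by a constant factor.) Your decomposition $1/\eta_x=(1/\mu_x)\cdot(\mu_x/\eta_x)$ followed by $\olog$-subadditivity splits the correction off cleanly, so the Jensen step is applied only to the $\mu_x/\eta_x$ factor. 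The key lemma $\EE_\mu[\log^{+}(\mu/\eta)]\le D(\mu\|\eta)+O(1)$ (which you prove correctly via $\ln t\le t-1$ on $\{\eta\ge\mu\}$; the resulting constant is $1/\ln 2$ rather than $1$, but that is absorbed into the $O(1)$) is exactly what makes the bound come out with $\EE_\mu[\log\olog(1/\mu(x))]$ plus an $O(\log D)$ correction, and it correctly specializes to Theorem~\ref{thm:algtwo-better} when $\eta=\mu$. In short: same route as the paper intends, but your write-up is the one that actually closes the gap.
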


Corollary~\ref{cor:diffdist} follows from Theorem~\ref{thm:algtwo-one-elem} by bounding $K' \le k$, taking expectation over $x_i \sim \mu$, noting that $\sum_i \mu_i \log (1/\eta_i) = H(\mu) + D(\mu \| \eta)$ and applying Jensen's inequality with the function $x \mapsto \log x$.

\subsection{Sorting} \label{sec:mr:sort}

One can apply \algtwo{} to implement a stable version of the insertion sort using comparison queries. Let $\Pi$ be a distribution over the set of permutations on $n$ elements. Complementing with prior algorithms achieving a complexity of $H(\Pi) + O(n)$ in the randomized setting with no lies \cite{MY13}, and $n \log n + O(nk+n)$ in the deterministic setting with $k$ lies \cite{Bagchi,Long}, we present an algorithm with a complexity of $H(\Pi) + O(nk+n + k \log k)$ in the distributed setting with $k$ lies. Note that $k \log k = O(nk)$ unless unless the unlikely case that $k = e^{w(n)}$, hence the $k \log k$ term can be ignored. Therefore, the guarantee of our algorithm matches the guarantees of the prior algorithms substituting either $k=0$ or $\Pi = \mathrm{Uniform}$.

\begin{theorem} \label{thm:sort-ent}
	Assume a distribution $\Pi$ over the set of all permutations on $n$ elements. There exists a sorting algorithm which is resistant to $k$ lies and sorts the elements using $H(\Pi) + O(nk + n + k \log k)$ comparisons on expectation.
\end{theorem}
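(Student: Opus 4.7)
The plan is to perform stable insertion sort, using \algtwo{} as the search subroutine for each of the $n$ insertions. After the first $i-1$ elements have been placed and their correct ordering determined (guaranteed by the $k$-validity of \algtwo{} together with the fact that at most $k$ lies occur across the entire run), I would invoke \algtwo{} on the $i$ candidate positions of $x_i$, using as input distribution $\eta_i$ the conditional distribution of $x_i$'s rank given the observed ordering of $x_1, \dots, x_{i-1}$ under $\Pi$. Each query on a position translates to a single comparison query ``$x_i \prec y_j$?'', where $y_j$ is the $j$-th smallest element already inserted.

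The essential structural point is that the adversary's budget of $k$ lies is shared across \emph{all} $n$ insertions rather than being renewed each time, so the right tool is the fine-grained Theorem~\ref{thm:algtwo-one-elem}: letting $K'_i$ denote the expected number of lies occurring during insertion $i$, we have $\sum_i K'_i \le k$. Summing the per-insertion bound and taking expectation over $\sigma \sim \Pi$ yields, term by term, the contributions $\sum_i \EE_\sigma[\log(1/\eta_{i,J_i})] = H(\Pi)$ by the chain rule for entropy, $\sum_i O(k) = O(nk)$ from the additive constant, and $\sum_i K'_i \olog k \le k \olog k = O(k \log k)$ from the $\olog k$ term.

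The main obstacle is bounding the two cross terms $\sum_i K'_i \log\olog(1/\eta_{i,J_i})$ and $\sum_i K'_i \log\log\olog(1/\eta_{i,J_i})$, since individual conditional probabilities $\eta_{i,J_i}$ can be arbitrarily small (a particular element's position may be very surprising given the others). The plan is to use $\sum_i K'_i \le k$ to pull out a maximum, bounding the first cross term by $k \cdot \EE_\sigma[\max_i \log\olog(1/\eta_{i,J_i})]$. Since $\eta_{i,J_i} \le 1$ and $\prod_i \eta_{i,J_i} = \Pi(\sigma)$ by the chain rule, one has $\max_i 1/\eta_{i,J_i} \le 1/\Pi(\sigma)$; Jensen's inequality (using that $\log\olog$ is concave on the relevant range) then yields $\EE_\sigma[\log\olog(1/\Pi(\sigma))] \le \log\olog \EE_\sigma[1/\Pi(\sigma)] = \log\olog |\supp(\Pi)| \le \log\olog(n!) = O(\log n)$. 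Hence this cross term contributes $O(k \log n) \le O(nk)$, and an identical argument handles the triple-log term. Assembling everything gives $H(\Pi) + O(nk + n + k \log k)$, where the $+n$ absorbs the $O(1)$ Gilbert--Moore slack per insertion.
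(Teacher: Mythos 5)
Your proposal is correct and follows essentially the same route as the paper: insertion sort via \algtwo{}, the fine-grained per-insertion bound with $\sum_\ell K'_\ell \le k$, the observation that each conditional probability dominates the joint probability $\Pi(\sigma)$ (so the double- and triple-log factors can be pulled out of the sum), and a final Jensen step. The only cosmetic difference is your Jensen application: you apply concavity of $\log\olog$ directly to $1/\Pi(\sigma)$, getting $\log\olog|\supp\Pi| = O(\log n)$, while the paper applies $\log$ to $\olog(1/p_\pi)$, getting $\olog(H(\Pi)) = O(\log n)$; both land in the same $O(nk)$ bucket.
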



(The proof appears in Section~\ref{sec:sort}.) The randomized algorithms benefit from prior knowledge, namely, when one has information about the correct ordering. This is especially useful for maintaining a sorted list of elements, a procedure common in many sequential algorithms. In these settings, the values of the elements can change in time, hence, the elements have to be re-sorted regularly, however, their locations are not expected to change drastically.

The suggested sorting algorithm performs $n$ iterations of insertion sort. By the end of each iteration $i$, $x_1, \dots, x_i$ are successfully sorted. Then, on iteration $i+1$, one performs a binary search to find the location where $x_{i+1}$ should be inserted, using conditional probabilities.

The guarantee of the algorithm is asymptotically tight: a lower bound of $H(\Pi)$ follows from information theoretic reasons, and a lower bound of $\Omega(nk)$ follows as well: the bound of Lakshmanan et al.~\cite{LRG91} can be adjusted to the randomized setting.

\if 0
One can show that every sorting algorithm resistant to $k$ lies requires asking at least $\max\{H(\Pi), \Omega(nk)\}$ questions on expectation.
The bound of $H(\Pi)$ follows from information requirements even when no lies are allowed, and $\Omega(nk)$ questions are necessary as well in the worst case, as we show below.

\begin{claim} \label{cla:sort-lb}
	Any sorting algorithm resistant to $k$ lies requires asking at least $n(k+1)$ questions in the worst case.
\end{claim}
\fi

\section{Lower bound} \label{sec:lower-bound}

In this section, we prove Theorem~\ref{thm:lower-bound-inf}.

Let $\mu$ be a distribution over $U$ and let $\cal T$ be a (possibly randomized) $k$-valid decision tree with respect to $\mu$.
First we assume that with probability $1$, 
the number of questions asked on any $x\in \supp \mu$ is at least $\alpha(x)=\lceil \log(1/\mu(x))/2 \rceil$. 
This assumption will later be removed. 

Consider a randomized adversary that, after seeing the secret element $x\sim \mu$, 
picks uniformly at random a subset of at most~$k$ questions to lie on from the first $\alpha(x)$ questions.

Let $Q$ denote the random variable of the transcript of the game (i.e.\ the sequence of queries and the answers provided by the adversary), 
and leet $\lvert Q \rvert$ denote its length (i.e.\ the number of query/answer pairs).
Let $X$ denote the random variable of the secret element, 
and let $L$ denote the random variable describing the positions of the lies 
(so, $L$ is a subset of size at most~$k$ of $\{1,\ldots,\alpha(x)\}$). 
Note that $Q$ determines both $X$ and $L$ (since $\cal T$ is $k$-valid).
Therefore,
\[				\EE[\lvert Q \rvert] \geq H(Q) \geq H(X,L) = H(X) + H(L | X) = \EE_{x\sim \mu}\log\frac{1}{\mu(x)} +  H(L | X),\]
where the first inequality is due to Fact~\ref{fact:ent}. 
Now, 
\begin{align*}	
H(L | X) &= \sum_x \mu(x) \log \binom{\alpha (x)} {\leq k} \\
	      &\geq \sum_x \mu(x) \log\bigl( (\alpha(x) / k)^k\bigr)\\
	      &\geq k \EE_{x \in \mu} \log \log \frac{1}{\mu(x)}  -  (k\log k + k),  
\end{align*} 
where the first inequality is due to the well-known formula $\binom{n}{\leq m} \geq (n/m)^m$.
This finishes the proof under the assumption that the number of questions is at least $\alpha(x)$ for every $x\in\supp \mu$.

We next show that this assumption can be removed:
we will show that any $k$-valid tree $T$, 
can be transformed to a $k$-valid tree $T'$  that satisfies this assumption 
and $c(T',\mu)\leq c(T,\mu) + 1$.
It suffices to show this for a deterministic $k$-valid tree $T$, 
since a randomized $k$-valid tree is a distribution over deterministic $k$-valid trees.
%
The lower bound on $c({\cal T}, \mu)$ then follows
from the lower bound on $\cal T'$ plus the additive factor of $1$
due to the transformation of $\cal T$ to $\cal T'$.

Fix a deterministic $k$-valid tree $T$ and let $V$ be the set of elements for which 
there exists an $x$-labeled leaf with depth less than $\alpha(x)$. 
We will show that $\sum_{x\in V} \mu(x) \alpha(x)\leq 1$. 
This implies that increasing the number of questions asked on $x$ to be at least $\alpha(x)$ for all $x\in V$ 
increases the expected number of questions by at most $\sum_{x\in V} \mu(x)\alpha(x)\leq 1$.

\begin{lemma}\label{lem:alpha}
\[ \sum_{x \in V} \mu(x) \alpha(x) \le 1. \]
\end{lemma}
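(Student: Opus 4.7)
The plan is to apply Kraft's identity (Fact~\ref{cla:sum-leaves}) to a collection of distinct leaves indexed by $V$. For each $x \in V$, I would fix some $x$-labeled leaf $\ell_x$ of depth $d_x < \alpha(x)$, whose existence is guaranteed by the definition of $V$. Since every leaf of $T$ carries at most one label, the leaves $\{\ell_x : x \in V\}$ are pairwise distinct, so Kraft's identity yields
\[
\sum_{x \in V} 2^{-d_x} \le 1.
\]
Note that $k$-validity does not play a direct role here; it enters only through the fact that distinct elements receive distinct labels.

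From here everything reduces to a short calculation. The integrality of $d_x$ and $\alpha(x)$, combined with the strict inequality $d_x < \alpha(x)$, gives $d_x \le \alpha(x) - 1$, hence $2^{-d_x} \ge 2 \cdot 2^{-\alpha(x)}$, improving the Kraft bound to $\sum_{x \in V} 2^{-\alpha(x)} \le 1/2$. Next, the definition $\alpha(x) = \lceil \log(1/\mu(x))/2 \rceil$ implies $\mu(x) \le 4 \cdot 2^{-2\alpha(x)}$, so
\[
\mu(x)\,\alpha(x) \;\le\; 4\,\bigl(\alpha(x) \cdot 2^{-\alpha(x)}\bigr)\cdot 2^{-\alpha(x)} \;\le\; 2 \cdot 2^{-\alpha(x)},
\]
where the last step uses the elementary inequality $\alpha \cdot 2^{-\alpha} \le 1/2$, valid for every integer $\alpha \ge 1$ (with equality at $\alpha \in \{1,2\}$). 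Summing over $x \in V$ then gives $\sum_{x \in V} \mu(x)\,\alpha(x) \le 2 \cdot (1/2) = 1$.

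The only subtlety is confirming that $\alpha(x) \ge 1$ for every $x \in V$, which is what lets us apply $\alpha \cdot 2^{-\alpha} \le 1/2$. This follows from the assumption that $\mu$ is non-constant: then $\mu(x) < 1$ for every $x \in \supp\mu$, forcing $\alpha(x) \ge 1$. Conceptually, the factor~$1/2$ inside the definition of $\alpha(x)$ is calibrated precisely so that the two factors of~$2$ --- one from the integrality gap $d_x \le \alpha(x) - 1$ and one from $\alpha \cdot 2^{-\alpha} \le 1/2$ --- combine to yield the clean constant~$1$.
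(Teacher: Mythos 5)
Your proof is correct and follows the same basic strategy as the paper's --- apply Kraft's identity to the set of shallow $x$-labeled leaves indexed by $V$, then estimate $\sum_{x\in V}\mu(x)\alpha(x)$ --- but the intermediate calculation is organized differently. The paper passes immediately to $\sqrt{\mu(x)}$ (treating $\alpha(x)$ as if it were exactly $\tfrac12\log(1/\mu(x))$, so it quietly drops the ceiling) and then invokes the binary-entropy bound $\sqrt{t}\log(1/\sqrt{t})\le H(\mathrm{Ber}(\sqrt{t}))\le 1$. You instead keep $\alpha(x)$ as an integer, harvest the extra factor of $2$ from the integrality gap $d_x\le\alpha(x)-1$ (giving $\sum 2^{-\alpha(x)}\le 1/2$), translate $\alpha(x)=\lceil\tfrac12\log(1/\mu(x))\rceil$ into $\mu(x)\le 4\cdot 2^{-2\alpha(x)}$, and close with the elementary discrete inequality $\alpha\,2^{-\alpha}\le 1/2$ for integers $\alpha\ge 1$. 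This version is slightly more careful: it explicitly accounts for the ceiling in $\alpha(x)$ and for the strictness of $d_x<\alpha(x)$, where the paper's displayed line $\sum_x\mu(x)\alpha(x)\le\sum_x\mu(x)\log(1/\mu(x))/2$ is, taken literally, in the wrong direction (since $\alpha(x)\ge\log(1/\mu(x))/2$). What you lose is the conceptual connection to the entropy of a Bernoulli variable, but nothing in the argument actually needed it; the net effect is a cleaner and more self-contained calculation that lands on the same constant $1$.
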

\begin{proof}
For any $x\in V$, let $d_x < \alpha(x)$ denote the minimum depth of an $x$-labeled leaf in $T$.
By Fact~\ref{cla:sum-leaves}:
\[ 1 \geq \sum_{x\in V} 2^{-d_x} \geq \sum_{x\in V} 2^{-\log(1/\mu(x))/2} = \sum_{x\in V} \sqrt{\mu(x)}.\]
Therefore,
\begin{align*}
\sum_{x\in V} \mu(x) \alpha(x) &\leq \sum_{x\in V} \mu(x) \log(1/\mu(x))/2\\
					      &= \sum_{x\in V} \sqrt{\mu(x)} \cdot\Bigl(\sqrt{\mu(x)} \cdot\log(1/\sqrt{\mu(x)})\Bigr)\\
					      &\leq \sum_{x\in V} \sqrt{\mu(x)} \tag{$\sqrt{t} \log\bigl(1/\sqrt{t}\bigr)\leq H(\mathrm{Ber}(\sqrt{t})) \leq 1$ for all $t\in [0,1]$}\\
					      &\leq 1. \tag{$\sum_{x\in V} \sqrt{\mu(x)} \leq 1$}
\end{align*}
\end{proof}
\paragraph{Remark.}
We could slightly improve the lower bound in Theorem~\ref{thm:lower-bound-inf} to
\[c_k(\mu) \geq H(\mu) + k \EE_{x \in \mu} \log \log \frac{1}{\mu(x)}  -  \bigl(k\log k + \Theta(\sqrt{k}) \bigr),\]
by setting $\eps = 1/\sqrt{k}$, $\alpha(x) = \lceil (1-\eps)\log(1/\mu(x)) \rceil$,
and following a similar argument.

\section{First algorithm} \label{sec:ub1}

In this section, we prove Theorem~\ref{thm:ub-main}. It follows immediately from the following lemma:

\begin{lemma} \label{lem:ub-main}
	Fix a set of elements $x_1 \prec x_2 \prec \cdots \prec x_n$, and fix a probability distribution vector $(\mu_1, \dots, \mu_n)$. There is a $k$-valid comparison tree that for any element $x_i$ asks on expectation at most
	\[
	\log (1/\mu_i) + \EE[K'+1] \log \olog \frac{1}{\mu_i} + \EE[K'+1] O\left(k \log \log \olog \frac{1}{\mu_i} + k \olog k \right),
	\]
	questions, where $K'$ is the number of lies told, and the expectation is over the randomness of both the questioner and the answerer, conditioned on $x_i$ being the hidden element.
\end{lemma}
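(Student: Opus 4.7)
My plan is to analyze \algone{} (Algorithm~\ref{alg:main-ub}) by conditioning on the hidden element being $x_i$ and following the five-way decomposition of the question count sketched in Section~\ref{sec:mr:ub1}: (1) normal questions along the final root-to-$\Current$ path, (2,3) normal and majority-vote questions spent on off-path excursions caused by genuine lies, and (4,5) the same pair for spurious verifications triggered while the algorithm is on the correct path. Let $d$ be the depth of the $x_i$-leaf in $T_\theta$, so by Claim~\ref{cla:gilbert-moore-depth} we have $d \le \log(1/\mu_i) + 3$.

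The first three categories are essentially combinatorial. Category~(1) contributes at most $d + r(d)$ questions, since $\Current$ stays within $r(d)$ steps of $\LastVerified$. For Categories~(2) and~(3), each true lie drives $\Current$ at most $r(d)$ steps into a wrong subtree where, by construction, all further truthful answers point away from $\Candidate$; hence the cheap ``at most $k-1$ same-direction answers'' test reliably detects the lie once the $\VerificationPath$ is filled up, triggering exactly one majority vote of cost $2k+1$. Thus in expectation these two categories contribute at most $\EE[K']\bigl(r(d) + 2k+1\bigr)$.

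The main obstacle is controlling Categories~(4) and~(5), namely false alarms on the correct path. When $\Current$ and $\LastVerified$ both lie on the correct path and no lies have occurred inside the current lookahead window, Claim~\ref{cla:gilbert-moore-uniform} applies, and the $r(d')$ answers inside the $\VerificationPath$ are i.i.d.\ uniform bits; a false alarm at the level where $\Candidate$ has depth $d'$ occurs precisely when strictly fewer than $k$ of these bits agree with $\Candidate$'s direction, an event of probability at most $\binom{r(d')}{\le k-1}/2^{r(d')}$. Each such alarm costs $O(r(d')+k)$ additional questions (the $2k+1$ majority vote plus re-exploration of the discarded lookahead). The plan is to set
\[
  r(d') \;=\; \bigl\lceil \olog d' + k \olog\olog d' + C k \olog k \bigr\rceil
\]
for a sufficiently large absolute constant $C$, which drives the per-level false-alarm cost below $1/(d')^{2}$, so that $\sum_{d'=1}^{d}$ is $O(1)$ and the total expected false-alarm overhead is $O(k)$.

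Combining the four bounds conditional on $x_i$ gives
\[
  \EE[\text{questions}] \;\le\; \bigl(d+r(d)\bigr) + \EE[K']\bigl(r(d)+2k+1\bigr) + O(k) \;=\; d + \EE[K'{+}1]\,r(d) + O\!\bigl(\EE[K'{+}1]\cdot k\bigr),
\]
and substituting $d \le \log(1/\mu_i)+3$ together with $r(d) = O\!\bigl(\log\olog(1/\mu_i) + k\log\log\olog(1/\mu_i) + k\olog k\bigr)$ yields exactly the bound claimed in the lemma. The remaining technical subtlety I anticipate in the formal write-up is decoupling the false-alarm events across levels so that Claim~\ref{cla:gilbert-moore-uniform} can be invoked independently at each $d'$; this should go through because the cheap test at level $d'$ depends only on the $r(d')$ fresh bits lying strictly between $\LastVerified$ and $\Current$, which the claim shows to be uniform and independent given $x_i$, allowing the per-level analyses to be summed cleanly.
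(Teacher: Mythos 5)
Your proposal matches the paper's proof essentially line for line: the five-way decomposition of questions, the use of Claims~\ref{cla:gilbert-moore-uniform} and~\ref{cla:gilbert-moore-depth}, a choice of $r(d)$ with the form $\log d + O(k\log\log d + k\log k)$ (the paper's Eq.~\eqref{eq:def-rd} and simplification~\eqref{eq:rd-simplified}), and the final convergence argument that the per-level false-alarm cost is summable. The only thing I'd flag in your sketch is the remark about needing to ``decouple'' false-alarm events at different depths so that Claim~\ref{cla:gilbert-moore-uniform} can be applied independently: no independence across levels is actually needed, since the paper bounds $\EE[F_\theta]$ and $\EE[F'_\theta]$ purely by linearity of expectation over the per-level indicators $Z_d$, and the i.i.d.\ structure is only used to compute each $\Pr[Z_d=1]$ in isolation.
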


We rely on the definition of the algorithm from Section~\ref{sec:mr:ub1}.
Since we are proving a bound on a specific algorithm, one can assume that the actions of the adversary are deterministic given the hidden element and the history of questions and answers. The proof of the theorem is in three steps. In the first step, we analyze the randomized nonresistant decision trees $T_\theta$ and $T'_\theta$ defined in Section~\ref{sec:mr:ub1}. In the second step, we analyze the $k$-valid tree. In the final step, we make calculations which bound the expected number of asked questions and conclude the proof.



\subsection{Step 1: Analyzing the nonresistant decision tree} \label{sec:ub:stepone}

In this section, we prove the two claims from Section~\ref{sec:mr:ub1}.
\begin{proof}[Proof of Claim~\ref{cla:gilbert-moore-uniform}]
	Note that $p_i$ is uniform in $\left[\frac{1}{2} \sum_{j=1}^{i-1} \mu_j+ \frac{1}{4}\mu_i, \frac{1}{2} \sum_{j=1}^{i-1} \mu_j + \frac{1}{4}\mu_i + 1/2\right)$. Define $p \mapsto p \bmod 1/2 \colon [0,1) \to [0,1/2)$ in the obvious way:
	\[
	p \bmod 1/2 = \begin{cases}
	p & \text{if } 0 \le p < 1/2, \\
	p-1/2 & \text{if } 1/2 \le p < 1.
	\end{cases}
	\]
	Note that $p_i \bmod 1/2$ is distributed uniformly in $[0,1/2)$ and that the binary representation of $p_i$ equals the binary representation of $p_i \bmod 1/2$, except, perhaps, for the bit which corresponds to $2^{-1}$ (the bit $b_1$ in the binary representation $p_i = 0.b_1b_2\cdots$). Hence, the bits of $p_i$ (except for the first bit) are distributed as an infinite sequence of independent unbiased coin tosses. 
	
	Note that the answer to question no.~$t$ in \auxAlgRef~\ref{alg:gilbert-moore} equals bit $t$ of the binary representation of $p_i$. In particular, the answers (except for the first one) are distributed as independent unbiased coin tosses.
\end{proof}

\begin{proof}[Proof of Claim~\ref{cla:gilbert-moore-depth}]
	Let $d_i = \min\left(p_i - p_{i-1}, p_{i+1} - p_i \right) \ge \mu_i/4$ be the minimal distance of $p_i$ from its neighboring points (assuming $p_0 = 0$ and $p_{n+1} = 1$, for completeness). After $t$ steps of the algorithm, $\Live$	is an interval of width $2^{-t}$ containing $x_i$. Therefore if $2^{-t} \leq d_i$ then $x_i$ is the only point contained in $\Live$. This shows that the depth of the leaf labeled $x_i$ is at most $\lceil \log (1/d_i) \rceil \le \lceil \log (4/\mu_i) \rceil \le \log(1/\mu_i) + 3$.
\end{proof}

\subsection{Step 2: Making a tree resilient}

We give the definition of $r(d)$:
\begin{equation} \label{eq:def-rd}
r(d) = \left\lceil\log \left((d+1)\ln^2(d+1)\right) + 4 (k+1) \left( \log \log ((d+1)\ln^2(d+1)) + 4 \log \frac{k+1}{\ln 2} \right) \right\rceil.
\end{equation}
Note that
\begin{equation} \label{eq:rd-simplified}
r(d) = \log (d+1) + O(k \log \log (d+1) + k \log k + 1).
\end{equation}

As explained in Section~\ref{sec:mr:ub1}, the following holds:

\begin{claim} \label{cla:alg-correct}
	The decision tree corresponding to Algorithm~\ref{alg:main-ub} is $k$-valid.
\end{claim}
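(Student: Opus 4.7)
The plan is to prove $k$-validity by establishing the following invariant: throughout the execution of Algorithm~\ref{alg:main-ub}, the node $\LastVerified$ always lies on the path $P$ in $T_\theta$ from the root to the leaf $\ell$ labeled by the hidden element $x_i$. Once this invariant is available, correctness is immediate: the loop exits precisely when $\LastVerified$ becomes a leaf of $T_\theta$, and the only leaf of $T_\theta$ on $P$ is $\ell$ itself, so the returned label is $x_i$.

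I would prove the invariant by induction on the number of updates to $\LastVerified$. The base case is trivial since $\LastVerified$ starts at the root. For the inductive step, I would analyze the two ways $\LastVerified$ can advance. The second verification step is straightforward: since the adversary tells at most $k$ lies over the entire run, at most $k$ of the $2k+1$ repeated answers to $Q(\LastVerified)$ are lies, so the majority is truthful and $\LastVerified$ moves to its correct child on $P$. The subsequent reset $\Current \gets \LastVerified$ restores the structural fact that $\Current$ is a descendant of $\LastVerified$, which is needed for $\Candidate$ to be well-defined on later iterations.

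The main obstacle is the first verification step, which uses no additional questions and must be justified by a combinatorial argument specific to the shape of $T'_\theta$. My plan is to argue by contrapositive: assume $\Candidate \notin P$ and show that the rule triggering the first verification step cannot hold. Because $\LastVerified \in P$ by induction, the answer recorded at $\LastVerified$ must have been a lie, consuming one of the $k$ available lies. Without loss of generality take $\Candidate$ to be the left child of $\LastVerified$. The key observation is that in the Gilbert--Moore tree $T'_\theta$ the $\Live$ intervals only shrink by halving, so $\Live(v)$ for every descendant $v$ of $\Candidate$ is contained in the half of $\Live(\LastVerified)$ that does not contain $p_i$. Consequently, the truthful answer at every node of $\VerificationPath$ sends the search to the right child; any node of $\VerificationPath$ that is itself a left child must therefore correspond to a further lie. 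With only $k-1$ lies remaining, at most $k-1$ nodes of $\VerificationPath$ can be left children, which is exactly the condition that invokes the second verification step rather than the first. Hence whenever the first verification step fires and sets $\LastVerified \gets \Candidate$, the new $\LastVerified$ still lies on $P$.

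Termination then follows because $\LastVerified$ only moves strictly down $P$, whose length is bounded by $\log(1/\mu_i)+3$ via Claim~\ref{cla:gilbert-moore-depth}, and between successive updates of $\LastVerified$ the main loop performs at most $r(d)$ iterations before verification triggers, plus at most $2k+1$ majority-vote queries. The total number of queries is therefore finite, so the algorithm halts with the correct label, establishing $k$-validity.
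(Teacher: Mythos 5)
Your proposal is correct and follows essentially the same approach as the paper: the key invariant (that $\LastVerified$ always lies on the root-to-$x_i$ path $P$) is exactly what the paper proves by induction, with the same two-case split on how $\LastVerified$ advances, and your contrapositive argument for the first verification step is precisely the reasoning the paper gives in its exposition of the algorithm just before the claim. You are somewhat more explicit than the paper's brief proof sketch (particularly the $\Live$-interval nesting argument and the termination bound), but no new idea is introduced.
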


To sketch a proof, it suffices to show that $\LastVerified$ always resides in the path $P$ from the root to the leaf labeled by the hidden element $x_i$. This is done by induction on the depth of $\LastVerified$: at the beginning, $\LastVerified$ resides on the root. For the induction step, assume in some iteration that $\LastVerified$ moves from a node $v$ of depth $d-1$ to its child $v'$ of depth $d$, and assume without loss of generality that $v'$ is a left child. The pointer $\LastVerified$ could only move to $v'$ after it is verified that $x_i \in Q(v)$: either after at least $k$ matching answers were obtained after $Q(v)$ was asked, or after $Q(v)$ was asked $2k+1$ times. Hence, $v'$ is in $P$ as required and the proof follows.

We proceed to determine the expected number of iterations it takes the algorithm to determine an element $x_i$. Let $P_\theta$ be the unique root-to-leaf path in $T_\theta$ which is consistent with $x_i$, and let $D_\theta$ denote its depth. We define the notion of a problematic node: a node $v$ on $P_\theta$ which may be suspected by the algorithm as not residing in $P_\theta$. Formally, a node $v \in P_\theta$ which is a left (right) child is \emph{problematic for $x_i$} if among its first $r(d)$ descendants there at most $k$ left (right) children. 
Let $F_\theta$ be the number of vertices on $P_\theta$ which are problematic for $x_i$, and let $F'_\theta = \sum_v r(\depth(v))$, where the sum goes over these vertices. 

\begin{lemma} \label{lem:alg-running-time}
	Let $x_i$ be an element, and fix some $0 \le \theta \le 1/2$. Algorithm~\ref{alg:main-ub} terminates after at most this many steps:
	\[
	D_\theta + r(D_\theta) + F_\theta(2k+1) + F'_\theta + K'(r(D_\theta)+2k+1).
	\]	
\end{lemma}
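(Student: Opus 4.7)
The plan is to fix $\theta$ and the adversary's choices, decompose the execution into \emph{segments} separated by the second-verification resets, count the iterations of each segment exactly, and then charge every second verification either to a distinct problematic vertex on $P_\theta$ or to a distinct lie of the adversary. A segment will be a maximal stretch of main-loop iterations between two consecutive executions of $\Current \gets \LastVerified$ that follow a second verification (this is the only operation that moves $\Current$ upward). By Claim~\ref{cla:alg-correct}, $\LastVerified$ always stays on $P_\theta$, so within segment $a$ it advances along $P_\theta$ from some depth $j_{a-1}$ to some depth $j_a$, with $j_0 = 0$ and the final $j$ equal to $D_\theta$.

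I would next count the iterations in a single segment exactly. At the start of segment $a$, $\Current$ sits at depth $j_{a-1}$ (the root for $a=1$, and the just-reset $\LastVerified$ otherwise). A verification triggers each time $\Current$ reaches depth $m + r(m)$ with $m = \depth(\LastVerified)+1$, and each successful first verification advances $\LastVerified$ by one level while leaving $\Current$ in place. A short telescoping then gives exactly $(j_a - j_{a-1}) + r(j_a)$ main-loop iterations in segment $a$; summing over segments and using $\sum_a (j_a - j_{a-1}) = D_\theta$, the total number of main iterations is $D_\theta + \sum_a r(j_a)$, to which we add $(2k+1)\,n_2$ extra questions for the $n_2$ second verifications.

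The heart of the proof is an injective attribution of the $n_2$ second verifications to distinct problematic vertices or distinct lies. I would argue that if segment $a$ contains no lie, then $\Current$ follows $P_\theta$ throughout (since it starts on $P_\theta$ and all answers in the segment are truthful), so $\Candidate$ is the depth-$j_a$ vertex $v$ on $P_\theta$ and the $r(j_a)$ members of $\VerificationPath$ are exactly its depth-$(j_a+1), \ldots, (j_a + r(j_a))$ descendants along $P_\theta$; the failure of the first-verification test forces at most $k-1 \le k$ of those descendants to be in $v$'s direction, which is precisely the definition of $v$ being problematic. Hence every second verification either matches a distinct problematic vertex at depth $j_a$ (when its segment is lie-free) or is charged to any lie in its segment. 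The attribution is injective because the depths $j_1 < j_2 < \cdots$ are strictly increasing and distinct segments own disjoint iteration sets (hence disjoint lies).

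Putting the pieces together yields $\sum_{a \le n_2} r(j_a) \le F'_\theta + K'\, r(D_\theta)$ and $n_2 \le F_\theta + K'$. Adding one extra $r(D_\theta)$ term for the final segment (in case the algorithm terminates via a first verification rather than a second), the total number of questions is at most
\[
D_\theta + r(D_\theta) + F'_\theta + K' r(D_\theta) + (2k+1)(F_\theta + K') = D_\theta + r(D_\theta) + F_\theta(2k+1) + F'_\theta + K'(r(D_\theta)+2k+1),
\]
which is the claimed bound. The most delicate point is the lie-free attribution: one must verify that the pseudocode's strict ``$\le k-1$'' threshold really certifies the weaker ``$\le k$'' condition in the definition of problematic, and that $\Current$'s deterministic descent inside a lie-free segment traces $P_\theta$ from depth $j_{a-1}$ all the way down to depth $j_a + r(j_a)$.
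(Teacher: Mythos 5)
Your proposal is correct and takes essentially the same approach as the paper: both arguments charge each second verification (and the $r(\cdot)$ questions it causes to be discarded) either to a lie or to a problematic vertex on $P_\theta$, using the same underlying criterion, since within a segment $\Current$ strays off $P_\theta$ if and only if a lie occurred in that segment. Your segment decomposition with the telescoping count $(j_a - j_{a-1}) + r(j_a)$ makes the iteration bookkeeping more explicit than the paper's direct five-way categorization of questions, but the core accounting and charging scheme are identical, and the two "delicate points" you flag (the $\le k-1$ vs.\ $\le k$ threshold and the lie-free descent along $P_\theta$) do indeed check out.
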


\begin{proof}
	We divide the questions into categories, and bound each separately:
	\begin{itemize}
		\item Questions on the path $P_\theta$ from the root to $\Current$ by the end of the algorithm:
		when $\Current$ reaches depth $D_\theta + r(D_\theta)$, $\LastVerified$ reaches depth $D_\theta$ and the algorithm terminates. Hence, there are at most $D_\theta + r(D_\theta)$ such questions.
		\item Questions that were ignored due to the second verification step
		while $\Current$ was backtracked from a node outside $P_\theta$. 
		This can only happen due to a lie between $\Current$ and $\LastVerified$ so there are at most $K' \cdot r(D_\theta)$ such questions.
		\item Questions asked $2k+1$ times during the second verification step when $\Current$
		was pointing to a node outside $P_\theta$.
		This can only happen due to a lie between $\Current$ and $\LastVerified$ so there are at most $K'\cdot(2k+1)$ such questions
		\item Questions that were ignored due to the second verification step,
		when $\Current$ was being backtracked from a node in $P_\theta$. This can only happen if $\Candidate$ is problematic, so there are at most $F'_\theta$ such questions.
		\item Questions asked $2k+1$ during the second verification step when $\Current$
		was pointing to a node in $P$.
		This can only happen if $\Candidate$ is problematic, hence there are at most $F_\theta (2k+1)$ such questions.
	\end{itemize}
\end{proof}

\subsection{Step 3: Culmination of the proof}

In this section we perform calculations to bound the expected number of questions, using Lemma~\ref{lem:alg-running-time}.

\begin{lemma} \label{lem:alg-running-time-exp}
	The expected questions asked on $x_i$ is at most
	\begin{multline*}
	\log(1/\mu_i) + 3 + \EE[K'+1]r(\log(1/\mu_i) + 3) +
	\sum_{d=1}^\infty \binom{r(d)}{\leq k} 2^{-r(d)} (r(d) + 2k + 1) + O(k\EE K').
	\end{multline*}
\end{lemma}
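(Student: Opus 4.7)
The plan is to start from the deterministic upper bound in Lemma~\ref{lem:alg-running-time} and bound each of its five summands, namely $D_\theta$, $r(D_\theta)$, $F_\theta(2k+1)$, $F'_\theta$, and $K'(r(D_\theta)+2k+1)$, separately in expectation, then add.

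For the three terms governed by $D_\theta$, Claim~\ref{cla:gilbert-moore-depth} gives the pathwise bound $D_\theta \le \log(1/\mu_i)+3$; since $r$ is monotone increasing, this immediately yields $\EE[D_\theta]\le \log(1/\mu_i)+3$, $\EE[r(D_\theta)]\le r(\log(1/\mu_i)+3)$, and $\EE[K'(r(D_\theta)+2k+1)]\le \EE[K']\,r(\log(1/\mu_i)+3)+(2k+1)\EE[K']$. Grouping the $\EE[r(D_\theta)]$ term with the $\EE[K']\,r(\log(1/\mu_i)+3)$ piece produces the coefficient $\EE[K'+1]$ in front of $r(\log(1/\mu_i)+3)$ in the target expression, while $(2k+1)\EE[K']$ is absorbed into the $O(k\EE K')$ error.

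The heart of the proof is bounding $\EE[F_\theta(2k+1)+F'_\theta]=\sum_{d\ge 1}(r(d)+2k+1)\,\Pr[v_d\text{ is problematic}]$, where $v_d$ denotes the depth-$d$ vertex of $P_\theta$ and the sum over $1\le d\le D_\theta$ is relaxed to $d\ge 1$ using nonnegativity. Whether $v_d$ is problematic splits according to whether it is a left or right child (determined by the truthful Gilbert--Moore answer at position $d$); conditional on that side, problematic-ness depends only on how many of the answers at positions $d+1,\ldots,d+r(d)$ fall on the matching side. Since $d\ge 1$, all of these later positions are strictly after position $1$, so Claim~\ref{cla:gilbert-moore-uniform} applies and the $r(d)$ bits are i.i.d.\ uniform over $\theta$ and independent of the position-$d$ bit; a direct Binomial calculation then gives $\Pr[v_d\text{ problematic}]=\binom{r(d)}{\le k}2^{-r(d)}$.

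The one subtlety I expect to need to verify with care is that the event ``$v_d$ is problematic'' depends on $\theta$ and $x_i$ alone and not on the adversary or on $K'$, which is precisely what legitimizes using the clean uniform distribution of Claim~\ref{cla:gilbert-moore-uniform}. This holds because $P_\theta$ is the unique root-to-leaf path in the noiseless tree $T_\theta$ consistent with $x_i$, and is thus purely a function of $(\theta,x_i)$. Once that independence is in hand, summing the five expectation bounds reproduces the displayed inequality of Lemma~\ref{lem:alg-running-time-exp}.
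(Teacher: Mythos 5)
Your proof follows the paper's argument essentially verbatim: it starts from Lemma~\ref{lem:alg-running-time}, bounds $D_\theta \le \log(1/\mu_i)+3$ via Claim~\ref{cla:gilbert-moore-depth}, uses monotonicity of $r$ to handle $r(D_\theta)$, groups the two $r(D_\theta)$ occurrences into the $\EE[K'+1]$ coefficient, and controls $\EE[F_\theta]$ and $\EE[F'_\theta]$ through the indicators $Z_d$ and the i.i.d.-uniform structure of Claim~\ref{cla:gilbert-moore-uniform}, exactly as in the paper. One small imprecision, present in your write-up and (implicitly) in the paper's as well, concerns $d=1$: Claim~\ref{cla:gilbert-moore-uniform} only governs answers from position $2$ onward, so the position-$1$ answer need not be independent of positions $2,\ldots,1+r(1)$ (indeed for $p_i$ near the midpoint the first bit is strongly correlated with the second), so neither your claimed equality $\Pr[Z_1=1]=\binom{r(1)}{\le k}2^{-r(1)}$ nor the paper's inequality is rigorously justified at $d=1$; this is harmless, since the $d=1$ summand contributes only $O(1)$ by the choice of $r$, but it is worth replacing the independence assertion for $d=1$ with a crude factor-of-two bound.
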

\begin{proof}
	We will use the notation of Lemma~\ref{lem:alg-running-time}.
	Claim~\ref{cla:gilbert-moore-depth} shows that $D_\theta \leq \log(1/\mu_i) + 3$. Since $r(d)$ is monotone nondecreasing, also $r(D_\theta) \leq r(\log(1/\mu_i)+3)$. 
	Let $Z_d$ be the indicator of whether the node of depth $d$ in $P_\theta$ is problematic for $x_i$, for $1 \le d \le \log(1/\mu_i) + 3$.
	Claim~\ref{cla:gilbert-moore-uniform} shows that
	\[
	\EE[Z_d] = \Pr[Z_d = 1]
	\le \binom{r(d)}{\le k} 2^{-r(d)}.
	\]
	Hence,
	\begin{align*}
	\EE_\theta[F_\theta] 
	&= \EE_\theta\left[\sum_{d=1}^{\lfloor \log(1/\mu_i)+3 \rfloor} Z_d \right]
	\le \sum_{d=1}^{\infty} \binom{r(d)}{\leq k} 2^{-r(d)}, \\
	\EE_\theta[F'_\theta] 
	&= \EE_\theta\left[\sum_{d=1}^{\lfloor \log(1/\mu_i)+3 \rfloor} Z_d r(d) \right]
	\le \sum_{d=1}^{\infty} \binom{r(d)}{\leq k} r(d) 2^{-r(d)}. \\
	\end{align*}
	This completes the proof.
\end{proof}

In what follows, we will show that $\sum_{d=1}^\infty \binom{r(d)}{\leq k} r(d) 2^{-r(d)}$ is at most some absolute constant. As $r(d) = \Omega(2k+1)$, this will imply that $\sum_{d=1}^\infty \binom{r(d)}{\leq k} (2k+1) 2^{-r(d)} = O(1)$. Applying Lemma~\ref{lem:alg-running-time-exp} and \eqref{eq:rd-simplified} concludes the proof follows.
We begin with three auxiliary lemmas.

\begin{lemma} \label{lem:binsum}
	For any $n,k \geq 1$, $\binom{n}{\leq k} \leq en^k$.
\end{lemma}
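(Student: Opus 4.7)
The plan is to use the standard falling-factorial bound on the binomial coefficient and the Taylor series for $e$. Specifically, I would start from the inequality
\[
\binom{n}{\ell} = \frac{n(n-1)\cdots(n-\ell+1)}{\ell!} \leq \frac{n^\ell}{\ell!},
\]
which holds for all integers $n \geq 0$ and $0 \leq \ell \leq n$ (and which is vacuous for $\ell > n$ since $\binom{n}{\ell}=0$ in that case, while $n^\ell/\ell!$ is still nonnegative).

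Summing this over $\ell = 0, 1, \ldots, k$ gives
\[
\binom{n}{\le k} = \sum_{\ell=0}^{k} \binom{n}{\ell} \;\leq\; \sum_{\ell=0}^{k} \frac{n^\ell}{\ell!}.
\]
Now I would use the hypothesis $n \geq 1$: since $\ell \leq k$, one has $n^\ell \leq n^k$, so I can pull out a common factor of $n^k$:
\[
\sum_{\ell=0}^{k} \frac{n^\ell}{\ell!} \;\leq\; n^k \sum_{\ell=0}^{k} \frac{1}{\ell!} \;\leq\; n^k \sum_{\ell=0}^{\infty} \frac{1}{\ell!} \;=\; e\, n^k,
\]
which is precisely the claimed bound.

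There is no real obstacle here; the lemma is a short arithmetic observation rather than a substantive argument. The only thing worth double-checking is that the step $n^\ell \leq n^k$ really uses $n \geq 1$ (true by hypothesis, and the case $n=1$ makes both sides equal $1$), and that the identity $\sum_{\ell \geq 0} 1/\ell! = e$ is invoked only as an upper bound on the partial sum. The whole proof can be written in two or three lines of display math.
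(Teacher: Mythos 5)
Your proof is correct and follows the same line of reasoning as the paper: bound $\binom{n}{\ell}$ by $n^\ell/\ell!$, observe $n^\ell \le n^k$ for $\ell \le k$ and $n \ge 1$, and bound the partial sum of $1/\ell!$ by $e$. The paper phrases it as bounding $\binom{n}{\ell}/n^k$ directly, but this is the same computation.
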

\begin{proof}
	For $\ell \leq k$ we have $\binom{n}{\ell}/n^k \leq (n^\ell/\ell!)/n^k \leq 1/\ell!$. Hence $\binom{n}{\leq k}/n^k \leq \sum_{\ell=0}^k 1/\ell! < e$.
\end{proof}

\begin{lemma} \label{lem:log-eq-aux}
	For any $a,b \ge 1$, $\log (a+b) \le \log a + \log b + 1$. As a consequence, $\ln (a+b) \le \ln a + \ln b + 1$.
\end{lemma}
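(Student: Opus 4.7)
The plan is to reduce the statement to the elementary inequality $a+b \le 2ab$ for $a,b \ge 1$, since exponentiating the desired bound $\log(a+b) \le \log a + \log b + 1$ (base~2) gives exactly $a+b \le 2ab$. So the whole lemma boils down to proving this one inequality.

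To establish $a+b \le 2ab$ for $a,b \ge 1$, I would start from the nonnegativity of $(a-1)(b-1)$, which holds precisely because $a,b \ge 1$. Expanding gives $ab - a - b + 1 \ge 0$, i.e.\ $ab + 1 \ge a + b$. Since $ab \ge 1$ under our hypothesis, we also have $2ab \ge ab + 1$, so combining the two yields $2ab \ge a + b$, as required. Taking $\log$ of both sides gives the first claim.

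For the natural-log consequence, I would simply multiply the base-$2$ inequality by $\ln 2$: from $\log(a+b) \le \log a + \log b + 1$ we get
\[
\ln(a+b) = \ln 2 \cdot \log(a+b) \le \ln 2 \cdot (\log a + \log b + 1) = \ln a + \ln b + \ln 2 \le \ln a + \ln b + 1,
\]
using $\ln 2 < 1$ in the last step.

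There is essentially no obstacle here; the only thing to be careful about is verifying that both ingredients $(a-1)(b-1) \ge 0$ and $ab \ge 1$ really use the full hypothesis $a,b \ge 1$ (weakening either side to $a,b \ge 0$ would break the claim, as $a=b=1/2$ shows). Since everything follows from a one-line algebraic manipulation, the proof will be very short.
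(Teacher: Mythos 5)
Your proof is correct. It takes a slightly different elementary route from the paper's: you exponentiate the desired inequality to $a+b \le 2ab$ and prove that algebraically via $(a-1)(b-1)\ge 0$ together with $ab\ge 1$, whereas the paper writes $\log(a+b) \le \log(2\max(a,b)) = 1 + \log\max(a,b) \le 1 + \log a + \log b$, using the hypothesis $a,b\ge 1$ only in the final step (so that the non-maximal term has nonnegative log). Both arguments are one-liners and equally tight; the paper's version makes the ``where the hypothesis is used'' point slightly more visible, while yours isolates the cleaner multiplicative form $a+b \le 2ab$ that directly corresponds to the statement under exponentiation. Your derivation of the $\ln$ consequence by scaling by $\ln 2 < 1$ is also correct, and is more explicit than the paper's unelaborated ``as a consequence.''
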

\begin{proof}
	$\log (a+b) \le \log (2\max(a,b)) \le 1 + \log a + \log b$.
\end{proof}

\begin{lemma} \label{lem:log-eq}
	For all $a,b \ge e$ and all $x \ge b + 4a(\ln a + \ln b)$, it holds that $x \ge a \ln x + b$.
\end{lemma}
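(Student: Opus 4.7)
} Set $y = x - b$, so the hypothesis becomes $y \ge 4a(\ln a + \ln b)$, and the conclusion $x \ge a\ln x + b$ becomes $y \ge a \ln(y+b)$. By the already-established Lemma~\ref{lem:log-eq-aux}, $\ln(y+b) \le \ln y + \ln b + 1$, so it is enough to show
\[
y \;\ge\; a\ln y + a\ln b + a,
\qquad\text{i.e.,}\qquad f(y) := y - a\ln y \;\ge\; a(\ln b + 1).
\]
Thus the problem reduces to a one-variable monotonicity check on $f$.

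The function $f$ is nondecreasing on $[a,\infty)$. Since $a,b \ge e$ gives $\ln a, \ln b \ge 1$, our hypothesis implies $y \ge 4a(\ln a + \ln b) \ge 4a \ge a$, so we are in the increasing regime of $f$, and it suffices to verify the inequality at the boundary $y_0 := 4a(\ln a + \ln b)$. Expanding,
\[
f(y_0) \;=\; 4a\ln a + 4a\ln b - a\ln 4 - a\ln a - a\ln(\ln a + \ln b),
\]
so the target $f(y_0) \ge a(\ln b + 1)$ reduces, after dividing by $a$, to
\[
3\ln a + 3\ln b - \ln 4 - \ln(\ln a + \ln b) \;\ge\; 1.
\]

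Finally I would use the simple bound $\ln(\ln a + \ln b) \le \ln a + \ln b$ (valid since $\ln a + \ln b \ge 2 > 1$) to lower bound the left-hand side by $2\ln a + 2\ln b - \ln 4$, which for $a,b \ge e$ is at least $4 - \ln 4 > 1$. This closes the argument. No step is really hard; the only thing to be careful about is that the constant $4$ in $4a(\ln a + \ln b)$ is chosen precisely large enough so that after the telescoping $4\ln a - \ln a = 3\ln a$ and absorbing the parasitic $\ln 4$ and $\ln(\ln a + \ln b)$ terms, a positive slack remains in the regime $a,b \ge e$.
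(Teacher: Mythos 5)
Your proof is correct and follows essentially the same strategy as the paper's: reduce via the monotonicity of $x \mapsto x - a\ln x$ to checking the inequality at the threshold, then expand the logarithm of the threshold using Lemma~\ref{lem:log-eq-aux} and elementary bounds such as $\ln t \le t$. The substitution $y = x - b$ and applying the auxiliary lemma once (rather than twice) is a minor streamlining, not a different route.
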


\begin{proof}
	Denote $c = 4$. Note that for all $x \ge a$, the function $x - a \ln x - b$ is monotone nondecreasing in $x$ (since its derivative is $1-a/x \geq 0$). 
	Hence, it is sufficient to prove that $x \ge a \ln x + b$ for $x = b + ca(\ln a + \ln b)$, a value which exceeds $a$. Assume, indeed, that $x = b + ca(\ln a + \ln b)$.
	Applying Lemma~\ref{lem:log-eq-aux} twice and using the fact that $c=4$,
	\begin{align*}
	a \ln x
	&= a \ln \left( b + ca(\ln a + \ln b) \right)
	\le a \ln b + a \ln (ca (\ln a + \ln b)) + 1 \\
	&= a \ln b + a \ln ca + a \ln (\ln a + \ln b) + 1 \\
	&\le a \ln b + a \ln a + a \ln c + a \ln \ln a + a \ln \ln b + 2 \\
	&\le 2 a \ln b + 2 a \ln a + a \ln c + 2 a
	\le c a (\ln a + \ln b) = x - b. \qedhere
	\end{align*}
\end{proof}

As stated above, we would like to show that $\sum_{d=1}^\infty \binom{r(d)}{\leq k} r(d) 2^{-r(d)} = O(1)$. In particular, since $\binom{r(d)}{\le k} \le er(d)^k$, it is sufficient to show that $\sum_{d=1}^{\infty} r(d)^{k+1} 2^{-r(d)} = O(1)$. Since $\sum_{d=2}^{\infty} \left(d \ln^2 d\right)^{-1}$ is a convergent series, it is sufficient to show that $r(d)^{k+1} 2^{-r(d)} \le \left((d+1)\ln^2(d+1)\right)^{-1}$. This is equivalent to
\begin{equation} \label{eq:rd-requires}
r(d) \ge \log ((d+1)\ln^2(d+1)) + (k+1) \ln(r(d))/ \ln 2.
\end{equation}
Applying Lemma~\ref{lem:log-eq} (for $k \geq 2$ and large enough $d$) with $a = (k+1)/\ln 2$ and $b = \log \left((d+1)\ln^2(d+1)\right)$, implies that the current definition of $r(d)$ satisfies \eqref{eq:rd-requires}. (We leave the case $k=1$ to the reader.)

\section{Second algorithm}\label{sec:2ndalgproof}

We prove Theorem~\ref{thm:algtwo-one-elem}, from which Theorem~\ref{thm:algtwo-better} follows. 
We start by explaining the main differences between \algone{} and \algtwo{}. The pointer $\Current$ will be defined as before: it simulates a search on the tree, asking one question in every iteration. In this algorithm, $\Current$ simulates $T_\theta$, the finite tree, rather the infinite $T'_\theta$. The pointer $\LastVerified$ is removed. Still, it will be possible to correct lies and $\Current$ will move up the tree whenever a lie is revealed, deleting the recent answers. 
We proceed by giving some definitions:

\begin{definition}
	Two non-root nodes in $T_\theta$ are \emph{matching children} if they are either both right children or both left children. Two non-root nodes are \emph{opposing children} if they are not matching children.
\end{definition}

\begin{definition}
	A node $v$ in the tree $T_\theta$ is a \emph{lie} with respect to an element $x$, if either $v$ is a left child and $x \notin Q(\parent(v))$ or $v$ is a right child and $x \in Q(\parent(v))$.
\end{definition}

In particular, $v$ is a lie if the answer to $Q(\parent(v))$ which causes $\Current$ to move from $\parent(v)$ to $v$ is a lie.

Differently from the first algorithm, a node $v$ will be suspected as a lie only if \emph{all} the descendants in the path to $\Current$ are opposing to $v$ (rather than at most $k$ of them are matching $v$).
In any iteration, we set $\Suspicious$ to be the deepest node in the path from the root to $\Current$ which is an opposing child to $\Current$. All descendants of $\Suspicious$ along this path are opposing children to $\Suspicious$. An action will be taken only if there are $r'(\depth(\Suspicious))$ opposing descendants, where $r' \colon \mathbb{N} \to \mathbb{N}$ is a monotonic nondecreasing function to be defined later. In other words, an action will be taken only if $\depth(\Current) = \depth(\Suspicious) + r'(\depth(\Suspicious))$. If that condition holds, $\Suspicious$ is suspected as a lie, and one sets $\Current \gets \parent(\Suspicious)$. In the next iteration, the question $Q(\parent(\Suspicious))$ will be asked again. We call this action of moving $\Current$ up the tree a \emph{jump back}, or, more specifically, a jump-back atop $\Suspicious$. We add a note: at some iterations to be elaborated later, there will be no $\Suspicious$ node. 
The definition of $r'(j)$ is as follows:
\begin{equation} \label{eq:def-rprime}
r'(j) = \left\lceil \logtwo \left( 2k(j+1)\ln^2(j+1) \right) + e + 4e \left( 1 + \ln \left( \logtwo \left( 2k(j+1)\ln^2(j+1) + e \right) \right) \right) \right\rceil.
\end{equation}
Note that 
\begin{equation} \label{eq:r-tag-nice-def}
r'(j) \leq \logtwo j + C(\olog k + \log \olog j)
\end{equation}
for some constant $C > 0$.

We proceed with a few more definitions. To avoid ambiguity, for any distinct nodes $v \ne v'$, we refer to $Q(v)$ and $Q(v')$ as distinct questions, even if the sets they represent are identical.
\begin{definition}
	Fix a node $v$ and assume that $Q(\parent(v))$ is asked. We say that $v$ is \emph{given as an answer} if either $v$ is a left child and the given answer was ``$x \in Q(\parent(v))$'', or $v$ is a right child and ``$x \notin Q(\parent(v))$'' was given.
\end{definition}
In particular, $v$ is given as an answer if an answer to $Q(\parent(v))$ makes $\Current$ move from $\parent(v)$ to $v$.
Note that if a node is given as an answer $k+1$ times, it is not a lie.
\begin{definition}
	Given a certain point at the execution of the algorithm, we say that a node in $T_\theta$ is \emph{verified} if it was given as an answer at least $k+1$ times before.
\end{definition}
The following claim is obvious.
\begin{claim} \label{cla:at-most-k}
	If $v$ is verified then $v$ is not a lie.
\end{claim}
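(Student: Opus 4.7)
The plan is to argue by contrapositive: I will show that if $v$ is a lie, then $v$ can have been given as an answer at most $k$ times, so $v$ is not verified.

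The key observation is that the event ``$v$ is given as an answer'' corresponds to the answerer responding to the question $Q(\parent(v))$ in the direction that leads to $v$. By the definition of a lie, if $v$ is a lie then the direction leading to $v$ is precisely the \emph{incorrect} answer to $Q(\parent(v))$ with respect to the hidden element $x$: either $v$ is a left child while $x \notin Q(\parent(v))$, or $v$ is a right child while $x \in Q(\parent(v))$. Thus, every single time $v$ is given as an answer, the adversary has told a lie in that specific round.

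Since the adversary is constrained to lie at most $k$ times over the entire execution, the number of rounds in which $v$ is given as an answer is at most $k$. In particular, $v$ was given as an answer strictly fewer than $k+1$ times, so by the definition of ``verified,'' $v$ is not verified at the current point of execution. Contrapositively, if $v$ is verified then $v$ is not a lie, establishing the claim.

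There is essentially no obstacle here: the proof is a direct pigeonhole using the definition of ``lie'' and the $k$-lie budget of the adversary. The only subtlety worth flagging in the write-up is that being ``given as an answer'' is counted over all rounds in which $Q(\parent(v))$ is re-asked, so each such round is an independent opportunity for the adversary to either tell the truth (in which case $v$ is not given as an answer, since $v$ is a lie) or to lie (in which case one of the $k$ lies is consumed). This makes the bookkeeping completely transparent.
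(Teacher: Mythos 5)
Your proof is correct and matches the paper's intended reasoning exactly; the paper declares the claim obvious and earlier notes ``if a node is given as an answer $k+1$ times, it is not a lie,'' which is precisely the contrapositive argument you spell out. Nothing to add.
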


We proceed by explaining another difference from \algone{}: if a node is $\Suspicious$ and triggers a jump-back, the corresponding question is asked again just \emph{once}, rather than $2k+1$ times. There is no guarantee that the new answer is correct. The simulation of $\Current$ continues according to the new answer. It might happen that the same node will be suspected again and $\Current$ will jump-back again to the same location. Then, the same question will be asked the third time. Note that an answer can be incorrectly suspected, even if no lies are told. This may lead to an infinite loop, where $\Current$ jumps back atop the same node indefinitely. To avoid such a situation, one sets $\Suspicious \gets \None$ if the node which is supposed to be suspicious is verified.

We give a full definition of how $\Suspicious$ is defined: first, if all nodes in the path to $\Current$ (except for the root) are matching children, then $\Suspicious \gets \None$. Otherwise, $\SuspiciousCandidate$ is set as the deepest node in the path from the root to $\Current$ which is an opposing child to $\Current$. If $\SuspiciousCandidate$ is verified then $\Suspicious \gets \None$, otherwise $\Suspicious \gets \SuspiciousCandidate$. The pseudocode for setting $\Suspicious$ appears in the function \textsc{setSuspicious} in Algorithm~\ref{alg:2}.

Before proceeding, we present the following claim, which follows from the discussion on \algone{}.

\begin{claim} \label{cla:lie-opposing}
	If $v$ is a lie, then any descendant of $v$ which is a matching child to $v$ is a lie.
\end{claim}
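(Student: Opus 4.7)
The plan is to reduce the statement to a clean geometric fact about the Gilbert--Moore tree $T_\theta$: if a step of the algorithm moves $\Current$ from the half of the Live interval that contains $p_i$ into the half that does not, then every subsequent step that moves in the ``same direction'' (relative to its own Live interval) must also be wrong.

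First I would set up coordinates. Assume $x = x_i$ is the hidden element with corresponding point $p_i \in [0,1]$. Recall that for any node $u$ in $T_\theta$ there is a Live interval $[a_u,b_u]$ with midpoint $m_u$, that $Q(u) = \{j : p_j \ge m_u\}$, and that moving to the left child of $u$ corresponds to the answer $x \in Q(u)$, i.e.\ to Live becoming $[m_u, b_u]$, while moving to the right child corresponds to Live becoming $[a_u, m_u]$. Under this convention, ``$v$ is a lie'' with $v$ a left child of $u$ translates exactly to $p_i < m_u$, and ``$v$ is a lie'' with $v$ a right child of $u$ translates to $p_i \ge m_u$.

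Next I would handle the left--child case in detail (the right--child case is symmetric). Suppose $v$ is a lie and a left child of $u$, so $p_i < m_u$ while the algorithm descended into $v$, whose Live interval is $[m_u, b_u]$. A trivial induction on the path from $v$ down to any descendant $w$ shows that Live only shrinks, so the Live interval at every proper descendant $u'$ of $v$ is contained in $[m_u, b_u]$; in particular $m_{u'} \ge m_u > p_i$. Now let $w$ be a matching descendant of $v$, i.e.\ $w$ is also a left child (of its own parent $u'$). Moving to $w$ corresponds to the answer $x \in Q(u')$, i.e.\ $p_i \ge m_{u'}$, which directly contradicts $p_i < m_{u'}$. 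Hence $w$ is a lie. The right--child case is identical with inequalities reversed: $v$ being a lying right child of $u$ gives $p_i \ge m_u$ and Live at $v$ equal to $[a_u, m_u]$, so every descendant has midpoint $\le m_u \le p_i$, and any descendant $w$ that is itself a right child forces the false answer $p_i < m_{u'}$.

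The only mildly non-obvious step is the monotonicity of the Live intervals, but this is immediate from the Gilbert--Moore update rule: the Live interval of any child is one of the two halves of its parent's Live interval, so Live intervals form a nested sequence along any root-to-leaf path. No further calculation or probabilistic argument is needed, which matches the paper's remark that the claim follows from the discussion of \algone{}. I do not expect any real obstacle here; the only care required is keeping track of the ``left child $\leftrightarrow$ right half of Live'' convention inherited from the decision-tree semantics of \auxAlgRef~\ref{alg:gilbert-moore}.
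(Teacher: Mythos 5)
Your argument is correct and fills in the details behind the paper's remark that the claim ``follows from the discussion on \algone{}'': the key observation in both cases is the nesting of Gilbert--Moore Live intervals, which forces all midpoints below a lying left (resp.\ right) child to stay on the wrong side of $p_i$, so any matching child must again be a lie. The one cosmetic slip is restricting $u'$ to \emph{proper} descendants of $v$: when $w$ is a child of $v$ we have $u'=\parent(w)=v$, but the same chain $m_{u'}\ge m_u>p_i$ holds there too (the Live interval of $v$ is exactly $[m_u,b_u]$, whose midpoint exceeds $m_u$), so the conclusion is unaffected.
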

We add the following definition:
\begin{definition}
	Let $v$ be a node in $T_\theta$. We say that a jump-back deletes $v$ if $v$ was in the path from the root to $\Current$ just before the jump-back, and $v$ is not in the path to $\Current$ right after the jump-back.
\end{definition}

We explain how the algorithm behaves when multiple lies are told. Assume that a lie $v$ was given as an answer, and let $d = \depth(v)$. If no other lies are told, all descendants of $v$ in the path to $\Current$ will be opposing children to $v$. The lie $v$ will be deleted once $\depth(\Current) = d + r'(d)$. However, it may be the case that more lies are told. These are necessarily opposing to $v$, and after they are told, $\Suspicious$ does not equal $v$ and $v$ cannot be deleted. However, $\Suspicious$ will point at the last lie which will be deleted. Then, the other lies will be deleted one after the other. At some point, once all lies which are descendants of $v$ are deleted, $v$ will be pointed by $\Suspicious$ again, and it will finally be deleted once $\depth(\Current) = d + r'(d)$.

Lastly, we explain how the algorithm is terminated and the hidden element is found. In the previous algorithm, this was done once $\LastVerified$ reached a leaf of $T_\theta$. In the absence of $\LastVerified$, we devise a different and more efficient way to verify the correctness of an element. Once $\Current$ reaches a leaf of $T_\theta$, one checks whether the label $e$ of that leaf is indeed the hidden element.
The verification process consists of asking multiple times whether the correct element is $e$. Each question of the form ``element = $e$?'' can be implemented using two comparison questions, asking ``$\preceq e$?'' and ``$\succeq e$?''. For simplicity, assume that the algorithm asks\emph{verification questions} of the form ``$=e$?'' and that the answers are $=$ and $\ne$. The questioner will ask this equality question multiple times until it either gets $k+1$ $=$-answers or until it gets more $\ne$-answers than $=$-answers. If $k+1$ $=$-answers are obtained then the hidden element is $e$ and the search terminates. If more $\ne$-answers than $=$-answers are obtained, the element $e$ is suspected as not being the hidden element. Then, one performs a jump back atop $\Suspicious$ if $\Suspicious \ne \None$ and otherwise one jumps atop $\Current$, setting $\Current \gets \parent(\Current)$. From that point, the next iteration proceeds. The pseudocode of this verification process appears in the function \textsc{verifyObject}, which appears in Algorithm~\ref{alg:2}. We will prove in Lemma~\ref{lem:ub2:vlies} that the total number of verification questions asked during the whole search is $O(k)$. In Lemma~\ref{lem:ver-ret-del}, we will show that if $e$ is not the hidden element then a lie will be deleted upon a return from \textsc{verifyObject}($e$).

The pseudocode of the complete algorithm is presented as Algorithm~\ref{alg:2}. First, an initialization is performed, where $\Current \gets \theroot(T_\theta)$. Then multiple iterations are performed until the element is found. Any iteration consists of the following structure: first, $Q(\Current)$ is asked and $\Current$ is advanced accordingly from parent to child. Then, one checks whether $\Current$ is a leaf of $T_\theta$. If so, the verification process proceeds, and as a result either the algorithm terminates or a jump back is taken. If $\Current$ does not point to a leaf, one checks whether a jump-back atop $\Suspicious$ should be taken and proceeds accordingly.

\begin{algorithm}
	\begin{algorithmic}[1]
		\State{$\theta \gets \mathrm{Uniform}([0,1/2))$}
		\State{$\Current \gets \theroot(T_\theta)$}
		\While{true}
		\If{$x \in Q(\Current)$} \Comment{$Q(\Current)$ is asked}
		\State{$\Current \gets \leftchild(\Current)$}
		\Else
		\State{$\Current \gets \rightchild(\Current)$}
		\EndIf 
		\State{$\Suspicious \gets $ \Call{setSuspicious}{$\Current$}}
		\If{$\Current$ is a leaf of $T_\theta$} \Comment{Checking termination condition}
		\State{$e \gets$ the label of $\Current$}
		\If{\Call{verifyObject}{$e$}}
		\State{\Return $e$}
		\ElsIf{$\Suspicious \ne \None$}
			\State{$\Current \gets \parent(\Suspicious)$}
		\Else
			\State{$\Current \gets \parent(\Current)$}
		\EndIf
		\ElsIf{$\Suspicious \ne \None$ \textbf{ and } $\depth(\Current) = \depth(\Suspicious) + r'(\depth(\Suspicious))$}
		\State{$\Current \gets \parent(\Suspicious)$} \Comment{Jumping-back atop $\Suspicious$}
		\EndIf
		\EndWhile
		\State{\Return label of $\LastVerified$}
		\State{}
		\Function{verifyObject}{$e$}
		\State{Ask the question ``$=e$?'' repeatedly, until either:}
		\State{(1) $k+1$ $=$-answers are obtained. In that case:  \Return true}
		\State{(2) More $\ne$-answers than $=$-answers are obtained. In that case:  \Return false}
		\EndFunction
		\State{}
		\Function{setSuspicious}{$\Current$}
			\If{All nodes in the path from the root (excluding) to $\Current$ (including) are matching children}
				\State{\Return $\None$}
			\Else
				\State{$\SuspiciousCandidate \gets$ the deepest node in the path from the root to $\Current$ which is an opposing child with $\Current$}
				\If{$\SuspiciousCandidate$ is verified}
					\State{\Return $\None$}
				\Else
					\State{\Return $\SuspiciousCandidate$}
				\EndIf
			\EndIf
		\EndFunction
	\end{algorithmic}
	\caption{Resilient-Tree}
	\label{alg:2}
\end{algorithm}

\subsection{Proof}

Fix a hidden element $x_i$, and let $P$ be the path from the root to the leaf labeled $x_i$ in $T_\theta$. A node $v \in P$ of depth $d$ is \emph{problematic} for $x_i$ if all the $r'(d)$ closest descendants of $v$ in $P$ are opposing children with $v$ (in particular, $v$ cannot be problematic if there are less than $r'(d)$ descendants of $v$ in $P$). Note the significance of a problematic node: if $v$ is problematic, then even if no lies are told, $v$ will be $\Suspicious$ and will initiate a jump-back. The following lemmas categorize the different jump-backs taken throughout the algorithm.

\begin{lemma} \label{lem:cur-or-sus}
	Fix some iteration of the algorithm where $\Current$ does not reside in $P$. Then, either $\Suspicious$ is a lie or $\Current$ is a lie. In particular, if $\Suspicious = \None$ then $\Current$ is a lie.
\end{lemma}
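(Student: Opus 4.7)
The plan is to locate a lie on the root-to-$\Current$ path and identify it with either $\Current$ itself or with $\Suspicious$. I would let $w$ be the deepest lie on the current root-to-$\Current$ path; such a node exists because $\Current \notin P$ forces the path to deviate from $P$ at some edge, and the answer generating that first deviation is by definition a lie. (Note that both $\Current$ and $w$ are necessarily non-root: $\Current = \theroot$ would force $\Current \in P$, and the root is never a lie.)

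If $w = \Current$, then $\Current$ is a lie and we are done. Otherwise, $w$ is a strict ancestor of $\Current$, and I would show that $\Suspicious = w$. First, by the maximality of $w$, every strict descendant of $w$ on the path is a non-lie; by the contrapositive of Claim~\ref{cla:lie-opposing}, each such descendant must then be an opposing child to $w$. In particular $\Current$ is opposing to $w$, so (by symmetry of the matching/opposing relation) $w$ is an opposing child to $\Current$. Next, I would show no strict descendant $u$ of $w$ on the path can be opposing to $\Current$: such a $u$ would be opposing to both $w$ and $\Current$, which forces $u$ and $\Current$ to share their left/right designation and hence to be matching children, contradicting $u$ opposing $\Current$. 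Therefore the deepest opposing-to-$\Current$ node on the path is exactly $w$, so $\SuspiciousCandidate = w$. Since $w$ is a lie, Claim~\ref{cla:at-most-k} gives that $w$ is not verified, and the ``else'' branch of \textsc{setSuspicious} then returns $w$, so $\Suspicious = w$ is a lie.

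For the ``in particular'' claim, suppose $\Suspicious = \None$. The second case above always yields $\Suspicious = w \neq \None$ (both possible reasons that could have made \textsc{setSuspicious} return $\None$ are ruled out: $w$ and $\Current$ are opposing, so not all nodes on the path are matching children; and $\SuspiciousCandidate = w$ cannot be verified by Claim~\ref{cla:at-most-k}). Hence we must be in the first case, $w = \Current$, and $\Current$ is a lie.

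The main obstacle is the case analysis around $\SuspiciousCandidate$: one has to carefully use the symmetry of ``opposing,'' apply Claim~\ref{cla:lie-opposing} to rule out a deeper opposing-to-$\Current$ node, and separately rule out both reasons (all-matching path versus a verified candidate) that could have made $\Suspicious$ equal $\None$ in the strict-ancestor case.
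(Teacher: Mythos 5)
Your proof is correct, and it takes a genuinely different route from the paper's. The paper cases directly on the status of $\SuspiciousCandidate$: if it is undefined or a non-lie, it shows via Claim~\ref{cla:lie-opposing} that $\Current$ itself must be a lie (the non-lie case needing a further sub-split on whether the extra lie sits above or below $\SuspiciousCandidate$); if it is a lie, it invokes Claim~\ref{cla:at-most-k} to rule out the verified-hence-$\None$ branch of \textsc{setSuspicious}. You instead anchor the argument on $w$, the deepest lie on the path, and case on whether $w = \Current$; in the nontrivial case you prove the stronger fact that $\Suspicious$ equals $w$ exactly. This is a cleaner decomposition that avoids the paper's ancestor/descendant sub-split and is also more informative about which node $\Suspicious$ is; both arguments rely on the same two ingredients, Claim~\ref{cla:lie-opposing} and Claim~\ref{cla:at-most-k}. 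One small bookkeeping slip in your writeup: to show no strict descendant $u$ of $w$ on the path is opposing to $\Current$, you write that ``$u$ opposing to both $w$ and $\Current$ forces $u$ and $\Current$ to share their left/right designation,'' which does not follow; the correct chain is that $u$ opposing $w$ (from maximality of $w$ plus the contrapositive of Claim~\ref{cla:lie-opposing}) together with the already-established fact that $w$ opposes $\Current$ forces $u$ to match $\Current$, contradicting the assumption. The conclusion is the same, so this is a wording error rather than a gap.
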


\begin{proof}
	Assume there is a lie, and divide into different cases according to $\SuspiciousCandidate$:
	\begin{itemize}
		\item $\SuspiciousCandidate$ is undefined: in that case, all nodes in the path from the root (excluding) to $\Current$ (including) are matching children. As assumed, there exists a lie $v$ in the path to $\Current$. Since $\Current$ is a matching child and a descendant of $v$, Claim~\ref{cla:lie-opposing} implies that $\Current$ is a lie.
		\item $\SuspiciousCandidate$ is not a lie. As assumed, there is a different lie $v$ in the path to $\Current$. We will show that $v$ is an opposing child with $\SuspiciousCandidate$: if $v$ is an ancestor of $\SuspiciousCandidate$, then these nodes are opposing children, from Claim~\ref{cla:lie-opposing}. If $v$ is a descendant of $\SuspiciousCandidate$ it is an opposing child to $v$ by definition of $\SuspiciousCandidate$. Hence, these two nodes are opposing children. Since $\SuspiciousCandidate$ is an opposing child to $\Current$, the nodes $v$ and $\Current$ are matching children, hence, Claim~\ref{cla:lie-opposing} implies that $\Current$ is a lie as well.
		\item If $\SuspiciousCandidate$ is a lie, then $\Suspicious = \SuspiciousCandidate$ and $\Suspicious$ is a lie.
	\end{itemize}
\end{proof}

We categorize jump-backs following a return from \textsc{verifyObject}.
\begin{lemma} \label{lem:ver-ret-del}
	Assume that \textsc{verifyObject}($e$) is called and returns \textbf{false}, resulting in a jump-back. Then, one of the following applies:
	\begin{enumerate}
		\item The hidden element is $e$ and a lie was told in \textsc{verifyObject}.
		\item The hidden element is not $e$ and the jump-back deletes a lie.
	\end{enumerate}
\end{lemma}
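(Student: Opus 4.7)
The plan is to handle the two cases separately, each reduced to a short argument using definitions and Lemma~\ref{lem:cur-or-sus}. For Case~1, where the hidden element really is $e$, the truthful answer to every verification query ``$=e$?'' is $=$; since \textsc{verifyObject} returns \textbf{false} only after receiving strictly more $\ne$-answers than $=$-answers, at least one $\ne$-answer must have been spoken, and any such answer is automatically a lie. This case is essentially immediate from the definition of \textsc{verifyObject}.

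For Case~2, where the hidden element $x_i$ is distinct from $e$, I would first observe that \textsc{verifyObject} asks only equality questions and modifies neither $\Current$ nor $\Suspicious$, so the values of these pointers at the moment of the jump-back coincide with the values set in the enclosing iteration. Since $\Current$ is the leaf of $T_\theta$ labeled by $e$ and the unique leaf of $T_\theta$ lying on the path $P$ is labeled by $x_i \ne e$, we have $\Current \notin P$. I can therefore invoke Lemma~\ref{lem:cur-or-sus}: either $\Suspicious$ is a lie or $\Current$ is a lie, and in particular $\Current$ is a lie whenever $\Suspicious = \None$.

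The remaining step is to show that the jump-back prescribed by the algorithm deletes such a lie. If $\Suspicious \ne \None$, the pointer is reset by $\Current \gets \parent(\Suspicious)$, so every node from $\Suspicious$ down to the old $\Current$ leaves the root-to-$\Current$ path; in particular both $\Suspicious$ and $\Current$ are deleted, and Lemma~\ref{lem:cur-or-sus} guarantees that at least one of them is a lie, so a lie is deleted. If instead $\Suspicious = \None$, the jump-back sets $\Current \gets \parent(\Current)$, deleting exactly $\Current$, which Lemma~\ref{lem:cur-or-sus} now forces to be a lie. The main thing to be careful about is ensuring that the hypothesis ``$\Current \notin P$'' of Lemma~\ref{lem:cur-or-sus} indeed holds at the moment of the jump-back, and that the $\Suspicious$ value used there matches the one the algorithm consults; this is precisely where the remark that \textsc{verifyObject} leaves both pointers untouched is essential.
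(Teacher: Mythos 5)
Your proof is correct and follows essentially the same route as the paper: Case~1 is immediate from the definition of \textsc{verifyObject}, and Case~2 reduces to Lemma~\ref{lem:cur-or-sus} by noting that $\Current$ is the leaf labeled $e \ne x_i$ and hence lies off the path $P$. You merely spell out the details (that $\Suspicious$ and $\Current$ are unchanged by the verification calls, and that both branches of the jump-back delete a node guaranteed to be a lie) which the paper's terse proof leaves implicit.
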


\begin{proof}
	If $e$ is the hidden element, a lie has to be told for the call to return \textbf{false}. Assume that $e$ is not the hidden element.
	After the call returns, a jump-back is either taken atop $\Suspicious$ (if it is defined), or atop $\Current$, if $\Suspicious = \None$. In both cases, Lemma~\ref{lem:cur-or-sus} implies that a lie is deleted.
\end{proof}

We categorize jump-backs taken when $\depth(\Current) = \depth(\Suspicious) + r'(\depth(\Suspicious))$.

\begin{lemma} \label{lem:jb-suspicious}
	Assume a jump-back was taken as a result of $\depth(\Current) = \depth(\Suspicious) + r'(\depth(\Suspicious))$. Then one of the following applies:
	\begin{enumerate}
		\item The jump-back deletes a lie. \label{itm:jb1}
		\item The jump-back is atop a problematic node. \label{itm:jb2}
	\end{enumerate}
\end{lemma}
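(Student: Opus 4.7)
My plan is to prove the lemma by assuming item~\ref{itm:jb1} fails and deriving item~\ref{itm:jb2}. The jump-back under consideration is atop $\Suspicious$, and it deletes precisely $\Suspicious$ together with every strict descendant of $\Suspicious$ on the current path, all the way down to $\Current$. Hence if item~\ref{itm:jb1} fails then \emph{every} node in this set is truthful; in particular $\Suspicious$ is not a lie, $\Current$ is not a lie, and no node strictly between them on the current path is a lie. It remains to show that $\Suspicious$ is problematic for $x_i$.

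The first key step is to apply Lemma~\ref{lem:cur-or-sus} to conclude that $\Current \in P$. Since $\depth(\Current) = \depth(\Suspicious) + r'(\depth(\Suspicious)) > \depth(\Suspicious)$, $\Current$ is a strict descendant of $\Suspicious$ on the current path, so neither $\Suspicious$ nor $\Current$ is a lie by our assumption. Lemma~\ref{lem:cur-or-sus} then forces $\Current \in P$, which means the entire current path from the root to $\Current$ coincides with the prefix of $P$ of length $\depth(\Current)$; in particular $\Suspicious \in P$, and because $\Current$ is not a leaf (the algorithm tests the leaf condition before the jump-back condition), $P$ does contain at least $r'(\depth(\Suspicious))$ descendants of $\Suspicious$. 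The second key step is to transfer the opposing-child structure from the current path onto $P$: by definition of $\Suspicious$ in \textsc{setSuspicious}, every strict descendant of $\Suspicious$ on the current path is a matching child of $\Current$ (otherwise a deeper opposing candidate would have been selected), hence opposing to $\Suspicious$. Since the current path agrees with $P$ down to $\Current$, these are exactly the $r'(\depth(\Suspicious))$ closest descendants of $\Suspicious$ in $P$, and they are all opposing to $\Suspicious$. This is precisely the definition of ``problematic,'' giving item~\ref{itm:jb2}.

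The main subtlety, and the step I would write out most carefully, is the transfer between the current path and $P$. I would explicitly check that (i) ``$\Suspicious$ is the deepest opposing child to $\Current$'' really implies that \emph{all} strict descendants of $\Suspicious$ on the path are matching with $\Current$, hence opposing to $\Suspicious$; and (ii) once $\Current \in P$, the uniqueness of the root-to-leaf path in $T_\theta$ consistent with $x_i$ makes the first $r'(\depth(\Suspicious))$ descendants of $\Suspicious$ in $P$ literally the same nodes as on the current path. Beyond this bookkeeping, the proof is a direct combination of the preceding Lemma~\ref{lem:cur-or-sus} with the definition of \textsc{setSuspicious}, with no calculation required.
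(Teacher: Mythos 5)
Your proof is correct and takes essentially the same route as the paper: the paper splits on whether $\Current \in P$ and invokes Lemma~\ref{lem:cur-or-sus} in the out-of-$P$ case, while you argue contrapositively from the failure of item~\ref{itm:jb1}, which is the same logic. The extra bookkeeping you include for why $\Suspicious$ is problematic (transferring the opposing-child structure from the current path onto $P$ and checking that $P$ has enough descendants of $\Suspicious$ because $\Current$ is not a leaf) fills in details the paper leaves implicit in the sentence ``the definition of a problematic node implies that $\Suspicious$ is problematic.''
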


\begin{proof}
	Divide into cases, according to whether $\Current$ resides in $P$:
	\begin{itemize}
		\item If $\Current$ does not reside in $P$, Lemma~\ref{lem:cur-or-sus} implies that either $\Suspicious$ or $\Current$ is a lie. Since both are going to be deleted, a lie is going to be deleted.
		\item If $\Current$ resides in $P$, the definition of a problematic node implies that $\Suspicious$ is problematic and Item~\ref{itm:jb2} applies.
	\end{itemize}
\end{proof}

Next, we prove that there can be at most $k$ jump-backs atop the same $\Suspicious$ node.

\begin{lemma} \label{lem:jb-verified}
	Assume in some point of the algorithm that $k$ jump-backs atop $v$ were taken before. Then, $v$ cannot be labeled as $\Suspicious$.
\end{lemma}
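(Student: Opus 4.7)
The plan is to track how many times $v$ has been given as an answer, and argue that this count is forced up by the jump-backs themselves, until $v$ becomes verified and is thereby disqualified from being $\Suspicious$.

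First I would establish a bookkeeping lemma: between any two successive jump-backs atop $v$ (counting also the span from the start of the algorithm to the first such jump-back), $v$ must be given as an answer at least once. The reason is that for a jump-back atop $v$ to occur, $v$ must be labeled $\Suspicious$ at that iteration; by the definition of \textsc{setSuspicious}, $\Suspicious$ is necessarily a (strict) ancestor of $\Current$, hence $v$ lies on the current path from the root to $\Current$. On the other hand, the previous jump-back atop $v$ (or the initialization) left $\Current$ at $\parent(v)$, so $v$ was \emph{not} on the current path at that moment. The only way $v$ can re-enter the path is for $\Current$ to move from $\parent(v)$ down to $v$ in some intermediate iteration, and by definition this counts as $v$ being given as an answer.

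Next I would combine this count with Claim~\ref{cla:at-most-k} and the \textsc{setSuspicious} code. By the bookkeeping above, immediately after the $k$-th jump-back atop $v$ the node $v$ has been given as an answer at least $k$ times. For $v$ to appear as $\Suspicious$ at some later iteration, the same argument forces $v$ to be given as an answer at least once more between the $k$-th jump-back and that later iteration, pushing the count to at least $k+1$. But the instant $v$ has been given as an answer $k+1$ times, $v$ is verified; the \textsc{setSuspicious} routine explicitly checks this and returns $\None$ whenever the $\SuspiciousCandidate$ is verified, so $\Suspicious$ never takes the value $v$ again.

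No step is genuinely hard; the only care-point is the bookkeeping lemma, where one must verify that $\SuspiciousCandidate$ (and hence $\Suspicious$) is always a \emph{strict} ancestor of $\Current$ — otherwise one might incorrectly count $\Current$ itself as $v$. This follows because opposing children must differ from $\Current$ by being a left/right child of its parent with opposite polarity to $\Current$, which rules out $\SuspiciousCandidate = \Current$. With that minor point settled the argument is a clean inductive count, and it also shows the stronger statement that there can be at most $k$ jump-backs atop $v$ in total.
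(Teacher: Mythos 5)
Your proposal is correct and is essentially the argument the paper gives. The paper's proof is exactly the count you describe: if $v$ were labeled $\Suspicious$ after $k$ jump-backs atop it, then $v$ must have been given as an answer at least $k+1$ times (once before each of the $k$ jump-backs, since each such jump-back returns $\Current$ to $\parent(v)$ and hence removes $v$ from the path, and once more after the last jump-back for $v$ to be on the path again), hence $v$ is verified by definition, contradicting that \textsc{setSuspicious} never returns a verified node. Your ``bookkeeping lemma'' is just this counting made explicit, and the strict-ancestor care-point, while a harmless sanity check, is not really load-bearing — the counting only needs that $v$ leaves the path at each jump-back atop it, which follows directly from $\Current \gets \parent(\Suspicious) = \parent(v)$.
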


\begin{proof}
	Assume for contradiction that $v$ is labeled as $\Suspicious$. This implies that $v$ was given as an answer at least $k+1$ times: once before each jump-back and once after the last jump-back, which implies that $v$ is verified and contradicts the fact that $v$ is labeled as $\Suspicious$ (since $\Suspicious$ cannot be verified by definition).
\end{proof}

We finalize the categorization of jump-backs with an immediate corollary of Lemma~\ref{lem:ver-ret-del}, Lemma~\ref{lem:jb-suspicious} and Lemma~\ref{lem:jb-verified}.
\begin{corollary} \label{cor:jb-divide}
	Each jump back can be categorizes as one the following:
	\begin{itemize}
		\item A jump-back that either deletes a lie or that is taken as a result of a lie being told in \textsc{verifyObject}. There are at most $K'$ such jump-backs. \label{itm:jb-lie}
		\item A jump-back atop a problematic node $v$ labeled $\Suspicious$. There can be at most $k$ such jump-backs for each node $v$. \label{itm:jb-suspicious}
	\end{itemize}
\end{corollary}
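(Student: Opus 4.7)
\medskip

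My plan is to partition the jump-backs by the condition in the main loop that triggered them, and then apply the three preceding lemmas. Every jump-back in Algorithm~\ref{alg:2} is triggered in exactly one of two ways: (a) a call to \textsc{verifyObject}($e$) returns \textbf{false} (reached from the termination-condition branch when $\Current$ is a leaf), or (b) the condition $\depth(\Current) = \depth(\Suspicious) + r'(\depth(\Suspicious))$ holds. I would first fix an execution trace and label every jump-back as of type (a) or type (b), so that the set of jump-backs is the disjoint union of these two sub-collections.

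\medskip

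Next, I would classify each labeled jump-back. For type (a), Lemma~\ref{lem:ver-ret-del} gives a dichotomy: either the hidden element is $e$ and a lie was told inside \textsc{verifyObject} (during the calls that produced the \textbf{false} return), or the hidden element is not $e$ and the jump-back itself deletes a lie. Either way, it fits into the first bucket of the corollary. For type (b), Lemma~\ref{lem:jb-suspicious} gives another dichotomy: either the jump-back deletes a lie, in which case it joins the first bucket, or it is taken atop a problematic node $v = \Suspicious$, in which case it joins the second bucket. Together this shows that every jump-back lies in one of the two buckets claimed by the corollary.

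\medskip

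It remains to verify the two numeric bounds. For the first bucket, I would build an injective map from its jump-backs to distinct lies: when the jump-back deletes a lie, charge it to that lie (each lie is deleted at most once because once deleted it is no longer on the path from the root to $\Current$); when the jump-back comes from an incorrect \textsc{verifyObject} return where the element $e$ was actually correct, charge it to one of the lies told during that specific call to \textsc{verifyObject} (distinct \textsc{verifyObject} calls use disjoint sets of answers, and these answers are distinct from the tree answers that constitute deleted lies). Since all charges are to distinct elements of the $K'$ lies actually told, the first bucket has at most $K'$ entries. For the second bucket, fix a node $v$; each jump-back atop $v$ in that bucket requires $v$ to be labeled $\Suspicious$ at that moment, and Lemma~\ref{lem:jb-verified} forbids this once $v$ has already been a jump-back target $k$ times (because then $v$ would be verified and \textsc{setSuspicious} would refuse to label it). Hence there are at most $k$ jump-backs atop each such $v$.

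\medskip

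The only delicate point, and what I would regard as the main obstacle, is showing that the mapping from first-bucket jump-backs to distinct lies is truly injective. In particular, one must check that a lie counted as ``told during \textsc{verifyObject}'' is never simultaneously counted as the lie being ``deleted'' by some other jump-back, and that deletion events across different jump-backs never collide. Both facts follow from the observation that an answer given to a specific occurrence of a question in the execution is a one-time event, and a lie that has been deleted cannot reappear on the current path without being told again as a fresh lie. Once this bookkeeping is made explicit, the corollary follows directly from Lemmas~\ref{lem:ver-ret-del}, \ref{lem:jb-suspicious}, and \ref{lem:jb-verified}.
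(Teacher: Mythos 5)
Your proof is correct and follows essentially the same route as the paper, which treats the statement as an immediate consequence of Lemmas~\ref{lem:ver-ret-del}, \ref{lem:jb-suspicious}, and \ref{lem:jb-verified} --- precisely the three ingredients you invoke after partitioning jump-backs by their trigger. The explicit injective charging to lie instances that you supply for the $K'$ bound is a sound way to fill in what the paper leaves implicit.
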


After categorizing the jump-backs, we categorize the different questions asked by the algorithm. Let $M$ be the largest depth of $\Current$ throughout the algorithm. Let $\depth(x_i)$ be the depth of the leaf of $T_\theta$ labeled $x_i$. Let $L$ be the number of nodes problematic for $x_i$ and let $D_1, \dots, D_L$ be the depths of these nodes. Let $V$ be the number of verification questions asked by the algorithm (namely, ``$=e$?'' questions asked in \textsc{verifyObject}).

\begin{lemma} \label{lem:ub2:grouping-q}
	The number of questions asked by the algorithm on element $x_i$ is at most
	\[
	2V + \depth(x_i) + K' (r'(M) + 1) + k \sum_{j=1}^L (r'(D_j)+1).
	\]
\end{lemma}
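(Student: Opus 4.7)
The plan is to partition the questions asked by the algorithm into three classes. The $V$ equality questions issued inside \textsc{verifyObject} are each implemented as two comparison queries, contributing $2V$. Every remaining comparison query corresponds bijectively to a downward step of $\Current$. Since $\Current$ starts at the root of $T_\theta$ and, upon successful termination of \textsc{verifyObject}, ends at the leaf labeled $x_i$, the total number of downward steps equals $\depth(x_i)$ plus the total upward distance accumulated through jump-backs.

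It thus suffices to bound the total upward distance, and I would do this by appealing to the categorization in Corollary~\ref{cor:jb-divide}. For the problematic-node class, every \textsc{verifyObject}-triggered jump-back is actually lie-related (either $\Current$ sits at a wrong leaf, which requires a lie on the path, or \textsc{verifyObject} returns false due to a lie inside it), so every problematic-node jump-back is suspicious-triggered. Such a jump-back fires exactly when $\depth(\Current)=\depth(\Suspicious)+r'(\depth(\Suspicious))$, so for a problematic node $v\in P$ of depth $D_j$ each atop-$v$ jump-back contributes $r'(D_j)+1$ upward steps, and by Lemma~\ref{lem:jb-verified} there are at most $k$ such jump-backs, for a total of $k\sum_{j=1}^L(r'(D_j)+1)$. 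For the lie-related class there are at most $K'$ jump-backs, and I claim each has upward distance at most $r'(M)+1$; summing then gives the stated bound.

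The main technical step is the upper bound $r'(M)+1$ on the upward distance of a lie-related jump-back. A suspicious-triggered jump-back contributes exactly $r'(\depth(\Suspicious))+1\leq r'(M)+1$ by monotonicity of $r'$, and a \textsc{verifyObject}-triggered jump-back with $\Suspicious=\None$ contributes only $1$. The remaining case requires the invariant that whenever $\Suspicious\neq\None$, one has $\depth(\Current)\leq \depth(\Suspicious)+r'(\depth(\Suspicious))$. I would prove this invariant by induction on iterations, using three observations: moving $\Current$ to a matching child preserves $\SuspiciousCandidate$ and increases the gap $\depth(\Current)-\depth(\Suspicious)$ by $1$; moving $\Current$ to an opposing child sets $\SuspiciousCandidate$ equal to the previous $\Current$, so the new gap equals $1$; and the suspicious check fires the moment the gap reaches $r'(\depth(\Suspicious))$ at a non-leaf, resetting $\Current$ to $\parent(\Suspicious)$. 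One must also check that the invariant is restored immediately after each jump-back, which follows from a short case analysis on whether the next descent is to the previous $\Suspicious$ (matching) or to its sibling (opposing). Combining all three contributions yields the claimed bound $2V+\depth(x_i)+K'(r'(M)+1)+k\sum_{j=1}^L(r'(D_j)+1)$.
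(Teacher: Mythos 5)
Your proposal is correct and takes essentially the same approach as the paper: the paper counts each non-verification answer as a downward step of $\Current$, splits these into undeleted answers (the $\depth(x_i)$ questions on the final path) and deleted answers (one per unit of upward distance from jump-backs), and bounds the deleted answers by charging each jump-back to one of the two categories in Corollary~\ref{cor:jb-divide}. The one place where you supply more detail than the paper is the invariant $\depth(\Current)\le\depth(\Suspicious)+r'(\depth(\Suspicious))$ whenever $\Suspicious\neq\None$, which justifies the $r'(M)+1$ bound for \textsc{verifyObject}-triggered jump-backs; the paper states this bound without explicit argument, so your extra care here is welcome rather than a divergence in method.
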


\begin{proof}
	We say that an answer is deleted by a jump-back if the node which corresponds to this answer is deleted (the node where $\Current$ moves right after the answer is told). Note that the same question can be asked and deleted multiple times, each counted separately. We will count the answers rather than the questions, dividing them into multiple categories:
	\begin{itemize}
		\item Answers to verification questions, amounting to $2V$ answers, since each verification question is implemented using two $\prec$ questions.
		\item Answers not deleted: these correspond to questions in the path from the root to the leaf labeled $x_i$, amounting to $\depth(x_i)$ questions.
		\item Answers deleted as a result of a jump-back categorized as Item~\ref{itm:jb-lie} in Corollary~\ref{cor:jb-divide}. There are at most $K'$ such jump-backs, each deleting at most $r'(M)+1$ answers, to a total of $K'(r'(M)+1)$ deleted answers.
		\item Answers deleted as a result of a jump back categorized as Item~\ref{itm:jb-lie} in Corollary~\ref{cor:jb-divide}. There can be at most $k$ such jump-backs for each problematic node, and a total of $k \sum_{j=1}^L (r'(d_j)+1)$ questions.
	\end{itemize}
\end{proof}

To complete the proof, we bound the different terms in Lemma~\ref{lem:ub2:grouping-q}. We start by bounding the first term.

\begin{lemma} \label{lem:ub2:vlies}
	$V \le 3k+1$.
\end{lemma}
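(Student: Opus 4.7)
The plan is to bound the total number of verification questions by analyzing each invocation of \textsc{verifyObject} separately and charging the questions against the global lie budget of $K'\le k$. Every call terminates in one of two ways: it returns \textbf{true} after receiving $k+1$ $=$-answers, or it returns \textbf{false} the first time the $\ne$-answers strictly exceed the $=$-answers. Obtaining $k{+}1$ $=$-answers against a wrong $e$ would require $k{+}1$ lies in a single call, which is impossible since $K'\le k$; hence the unique call that terminates the algorithm returns \textbf{true} and satisfies $e=x_i$.

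First I would count the questions inside each call as a function of the lies $L_j$ told \emph{in that call}. For the final, true-returning call (with $e=x_i$) the truthful answer is always $=$, so the $\ne$-answers are exactly the lies and the call asks $V_c=(k+1)+L_c$ questions. For a false-returning call with $e=x_i$ the $\ne$-answers are again the lies, and since the call halts as soon as $n_{\ne}=n_{=}+1$, it asks $V_j=2L_j-1$ questions, with $L_j\ge 1$. For a false-returning call with $e\ne x_i$ the roles swap and the call asks $V_j=2L_j+1$ questions (with $L_j\ge 0$).

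Next I would use Lemma~\ref{lem:ver-ret-del} to account for the additional lie ``paid'' by every false-returning call with $e\ne x_i$: the subsequent jump-back deletes a lie that lives on the main search path of $T_\theta$, outside of any verification call. Let $a$ and $b$ denote the numbers of false-returning calls with $e=x_i$ and with $e\ne x_i$ respectively. Then the lies counted inside verification calls ($L_c$ together with the $L_j$'s) are pairwise distinct answer instances of verification queries, while the $b$ lies deleted by jump-backs after the ``$e\ne x_i$'' false calls are distinct tree nodes of $T_\theta$, and each actual lie can be deleted by a jump-back at most once; hence
\[
L_c + \sum_j L_j + b \;\le\; K' \;\le\; k.
\]

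Summing the per-call bounds gives $V=(k+1+L_c)+\sum_j V_j = k+1+L_c+2\sum_j L_j - a + b$, and substituting $L_c+\sum_j L_j \le k-b$ together with $a\ge 0$ yields
\[
V \;\le\; k+1 + 2(k-b) + b \;=\; 3k+1 - b \;\le\; 3k+1.
\]
The only subtle point is the distinctness claim in the lie-accounting step, which I expect to be the main obstacle; I plan to resolve it by noting that each answer inside a \textsc{verifyObject} call is a fresh query instance unrelated to tree nodes, while a jump-back after a false return operates on $\Suspicious$ or $\parent(\Current)$, i.e.\ on nodes of $T_\theta$ that are immediately removed from the active path, so the two contributions to the lie budget cannot overlap.
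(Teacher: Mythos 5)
Your proposal is correct and follows essentially the same route as the paper's proof: you split the calls to \textsc{verifyObject} into the unique successful call, failed calls with $e=x_i$, and failed calls with $e\ne x_i$; you bound the number of questions per call in terms of the lies told in that call (plus the implicit ``$+1$'' for the wrong-object calls); and you close the argument by invoking Lemma~\ref{lem:ver-ret-del} to charge each wrong-object failed call to a distinct tree-lie deleted by the ensuing jump-back, so that the combined count of verification-lies and deleted tree-lies is at most $K'\le k$. This is exactly the paper's $V\le k+1+2K'_1+K'_2$ with $K'_1+K'_2\le k$, just written with $L_c,\ L_j,\ a,\ b$ in place of $K'_{1,1},K'_{1,2},K'_{1,3},K'_2$. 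Your concern about distinctness is also resolved the same way the paper resolves it implicitly: lies counted inside \textsc{verifyObject} are answers to equality queries, whereas the lie deleted by a post-verification jump-back is an answer previously given at a comparison node of $T_\theta$, so these are disjoint contributions to the global budget of $K'$ answer-instances.
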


\begin{proof}
	Let $K'_1$ be the total number of lies in the verification questions, and let $K'_2$ be the number of times that the function \textsc{verifyObject} was invoked with a request to verify an incorrect element.
	Divide the jump-backs into different categories:
	\begin{itemize}
		\item Number of questions asked over invocations of \textsc{verifyObject} which ended in a success: note that \textsc{verifyObject} ends with success only once. In that case, $k+1$ $=$-answers are obtained and $K'_{1,1}$ $\ne$-answers are obtained, for some $K'_{1,1} \in \{ 0,1,\dots, k\}$. The total number of questions asked is $k+1 + K'_{1,1}$.
		\item The number of questions asked over invocations of \textsc{verifyObject} which ended in failure, when the object tested was the correct object. In such cases, $m+1$ $\ne$-answers are obtained and $m$ $=$-answers are obtained, for some $m \in \{0,1,\dots,k-1\}$. The number of lies is $m+1$ and the number of questions asked is $2m+1\le 2(m+1)$. The total number of questions asked over all such invocations is at most $2 K'_{1,2}$, where $K'_{1,2}$ is the total number of lies over such invocations.
		\item The number of quesions asked when the answer should be $\ne$: in such cases, $m+1$ $\ne$-answers are obtained and $m$ $=$-answers are obtained, for some $m \in \{0,1,\dots,k\}$. Note that $2m+1$ questions are told where $m$ is the number of lies. Summing over all such invocations, one obtains a bound of $K'_2 + 2 K'_{1,3}$ where $K'_{1,3}$ is the number of lies told in such invocations.
	\end{itemize}
	Summing the over the different categories and noting that $K'_{1,1} + K'_{1,2} + K'_{1,3} = K'_1$, one concludes that the number of verification questions asked is at most $k+1 + 2K'_1 + K'_2$. By Lemma~\ref{lem:ver-ret-del}, each time that the function was invoked in a request to verify an incorrect object was followed by a jump back which caused a lie to be deleted. Hence, $K'_1 + K'_2 \le k$. This concludes the proof.
\end{proof}

The second term in Lemma~\ref{lem:ub2:grouping-q} is bounded by $\log (1/\mu_i) + 3$ using Claim~\ref{cla:gilbert-moore-depth}. We proceed by bounding the third term.
To bound $M$, we start with the following auxiliary lemma:

\begin{lemma} \label{lem:log-seq}
	Let $q_0,a \ge 1$. Define $q_i = q_{i-1} + a \log q_{i-1} + a$, for all $i > 0$. Then, it holds that $q_i \le q_0 + C a i (\log q_0 + \log a +\log i + 1)$, for the constant $C = 8$.
\end{lemma}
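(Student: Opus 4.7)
The plan is to prove the inequality by induction on $i$, with the target bound $f(i) := q_0 + Cai(\log q_0 + \log a + \log i + 1)$ (using the convention $f(0) = q_0$). The base case holds with equality, and the case $i = 1$ reduces to verifying $\log q_0 + 1 \le C(\log q_0 + \log a + 1)$, which is immediate for $C \ge 1$ since $q_0, a \ge 1$. For $i \ge 2$, since the map $q \mapsto q + a\log q + a$ is monotone increasing in $q$, the induction hypothesis reduces the task to checking that $f(i-1) + a \log f(i-1) + a \le f(i)$.

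The first estimate lower-bounds $f(i) - f(i-1)$. Direct expansion gives
\[
f(i) - f(i-1) = Ca\bigl[(\log q_0 + \log a + 1) + (i\log i - (i-1)\log(i-1))\bigr].
\]
Since $(x \log x)' = \log x + 1/\ln 2 \ge \log x$, the mean value theorem yields $i \log i - (i-1)\log(i-1) \ge \log(i-1) \ge \log i - 1$ for $i \ge 2$, so $f(i) - f(i-1) \ge Ca(\log q_0 + \log a + \log i)$.

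The second estimate upper-bounds $\log f(i-1)$. Applying Lemma~\ref{lem:log-eq-aux} once to split the outer sum in the definition of $f(i-1) = q_0 + Ca(i-1)A$ (where $A := \log q_0 + \log a + \log(i-1) + 1$), then again to the product $Ca(i-1)A$, and using $\log A \le A \le \log q_0 + \log a + \log i + 1$ to absorb the nested logarithm, one obtains
\[
\log f(i-1) \le \log q_0 + \log a + \log i + K
\]
for an absolute constant $K$ that depends on $C$ only through a $\log C$ term.

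Combining the two estimates, the induction step reduces to $(C-1)(\log q_0 + \log a + \log i) \ge K + 1$, which holds for $i \ge 2$ since $\log i \ge 1$, provided $C$ is chosen sufficiently large. The main obstacle is the bookkeeping to verify that the specific value $C = 8$ suffices: the tight step is bounding $\log f(i-1)$, where crude use of $\log y \le y$ on nested logarithms can waste a constant factor. To avoid this, one may split into the two cases $q_0 \le Ca(i-1)A$ and $q_0 > Ca(i-1)A$ before applying Lemma~\ref{lem:log-eq-aux}, so that in each case the dominant term is exposed and only a single $+1$ penalty is incurred.
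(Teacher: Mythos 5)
Your overall strategy — induction on $i$ with target $f(i) = q_0 + Cai(\log q_0 + \log a + \log i + 1)$, using the monotonicity of $q \mapsto q + a\log q + a$, lower-bounding the increment $f(i)-f(i-1)$, and upper-bounding $\log f(i-1)$ by two applications of Lemma~\ref{lem:log-eq-aux} together with $\log A \le A$ — is exactly the paper's approach. However, the intermediate claim ``$\log f(i-1) \le \log q_0 + \log a + \log i + K$ for an absolute constant $K$'' is false, and the case split you propose does not rescue it. Following the computation you outline,
\[
\log f(i-1) \le 1 + \log q_0 + \log C + \log a + \log(i-1) + \log A
\le 2\log q_0 + 2\log a + 2\log i + \log C + 2,
\]
so the coefficient on the logarithmic terms is $2$, not $1$. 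One cannot do better in general: taking $q_0 = 1$, $i = 2$, and $a \to \infty$ gives $\log f(1) = \log a + \log\log a + O(\log C)$, while $\log q_0 + \log a + \log i + K = \log a + O(1)$, so the $\log\log a$ term is unbounded relative to your claimed bound; your suggested split into $q_0 \lessgtr Ca(i-1)A$ leaves the $\log A$ term unabsorbed whenever $q_0$ is small and $a$ (or $i$) is large. Fortunately the proof is easily repaired: with the correct coefficient-$2$ bound, the induction step reduces to $(C-2)(\log q_0 + \log a + \log i) \ge \log C + 3$, which for $C = 8$ becomes $6(\log q_0 + \log a + \log i) \ge 6$; this holds since $\log i \ge 1$ for $i \ge 2$, though only with equality in the extremal case. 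Once that correction is made, the argument matches the paper's and the stated constant $C=8$ just suffices.
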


\begin{proof}
	We bound $q_i$ by induction on $i$. For $i = 0$ and $i = 1$ the statement is trivial. For the induction step, fix some $i \ge 1$. Using the induction hypothesis, Lemma~\ref{lem:log-eq-aux}, and the inequality $\logtwo x \le x$, it holds that
	\begin{align*}
	\frac{q_{i+1} - q_i}{a} - 1 
	&= \log q_i
	\le \log \left( q_0 + C a i (\log q_0 + \log a + \log i + 1) \right) \\
	&\le \log q_0 + \log (C a i (\log q_0 + \log a + \log i + 1)) + 1 \\
	&= \log q_0 + \log C + \log a + \log i + \log(\log q_0 + \log a + \log i + 1) + 1 \\
	&\le \log q_0 + \log C + \log a + \log i + \log q_0 + \log a + \log i + 1 + 1\\
	&= 2\log q_0 + \log C + 2 \log a + 2\log i + 2.
	\end{align*}
	The result follows by substituting $C = 8$ and applying the induction hypothesis.
\end{proof}

The next lemma bounds $M$:

\begin{lemma} \label{lem:bnd-M}
	It always holds that $M \le \olog(1/\mu_i) C k \olog k(\olog k + \log \olog (1/\mu_i))$.
\end{lemma}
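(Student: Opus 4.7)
The plan is to bound $M$ by the number of comparison questions that move \Current{} downward, apply Lemma~\ref{lem:ub2:grouping-q} to obtain a self-referential inequality of the form $M \le A\log M + B$ with $A,B = O(k\,\olog(1/\mu_i)\,\olog k)$, and solve it via the contrapositive of Lemma~\ref{lem:log-eq}.

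First I observe that each iteration of the outer \textbf{while} loop of Algorithm~\ref{alg:2} asks exactly one comparison question $Q(\Current)$ and then moves \Current{} one edge down to a child; any subsequent jump-back or \textsc{verifyObject} call happens \emph{after} this downward move. All other comparison questions are asked inside \textsc{verifyObject}, with each equality question contributing two comparisons, and these do not move \Current. Hence the number of ``main'' (non-verification) comparison questions equals the number of downward moves of \Current, which is at least $M$ since \Current{} must descend at least $M$ times to reach depth $M$. Lemma~\ref{lem:ub2:grouping-q} bounds the total number of comparisons by $2V + \depth(x_i) + K'(r'(M)+1) + k\sum_{j=1}^{L}(r'(D_j)+1)$; subtracting the $2V$ verification comparisons yields
\[
M \;\le\; \depth(x_i) + K'(r'(M)+1) + k\sum_{j=1}^{L}(r'(D_j)+1).
\]

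Next I bound each ingredient: $\depth(x_i)\le \log(1/\mu_i)+3$ by Claim~\ref{cla:gilbert-moore-depth}; $K'\le k$ since the adversary is $k$-bounded; every problematic node lies on the root-to-$x_i$ path, so $L\le \depth(x_i)\le \log(1/\mu_i)+3$; and $D_j\le M$ gives $r'(D_j)\le r'(M)$ since $r'$ is nondecreasing. Substituting, $M \le C_1\,k\,\olog(1/\mu_i)\,(r'(M)+1)$ for a universal constant $C_1$. Expanding $r'(M)$ via Equation~\eqref{eq:r-tag-nice-def}, which gives $r'(M)\le \log M + C(\olog k + \log\olog M)$, and absorbing the $\log\olog M$ term into the leading $\log M$, I arrive at an inequality of the form
\[
M \;\le\; A\log M + B, \qquad A,B \;=\; O\bigl(k\,\olog(1/\mu_i)\,\olog k\bigr).
\]

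Finally I invoke the contrapositive of Lemma~\ref{lem:log-eq} with $a=A/\ln 2$, $b=B$: if $M \le A\log M + B$ then $M \le B + 4(A/\ln 2)(\ln(A/\ln 2) + \ln B)$. Because $\ln A,\ln B = O(\olog k + \log\olog(1/\mu_i))$, this simplifies to the stated bound
\[
M \;=\; O\bigl(\olog(1/\mu_i)\cdot k\,\olog k\cdot(\olog k + \log\olog(1/\mu_i))\bigr).
\]

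The main obstacle will be careful bookkeeping of the logarithmic factors, so that the single application of Lemma~\ref{lem:log-eq} lands on exactly the advertised product rather than on a loose over-estimate. In particular, the $\olog$ (rather than $\log$) on each of $1/\mu_i$ and $k$ is needed both to absorb the sub-leading $\log\olog M$ term in $r'(M)$ and to keep the bound meaningful when $k$ is small or $\mu_i$ is close to $1$; these are exactly the places where a naive $\log$ would produce a logarithm of a quantity less than $1$.
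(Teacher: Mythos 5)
Your proof is correct and takes a genuinely different route than the paper's. The paper bounds $M$ directly by a recursive depth argument: with no lies the depth is at most $q_0 = \log(1/\mu_i)+3$; once a lie is told at the current maximum depth, $\Current$ can descend by at most $r'(\cdot)$ further steps before a jump-back must occur, which yields the recurrence $q_{j+1} = q_j + r'(q_j) \le q_j + a\log q_j + a$ with $a = O(\olog k)$; iterating $k$ times and invoking the auxiliary Lemma~\ref{lem:log-seq} gives the stated bound. You instead observe that $M$ is dominated by the total count of non-verification questions (each while-iteration descends one edge, jump-backs only decrease depth), import the already-proved Lemma~\ref{lem:ub2:grouping-q}, cancel the $2V$ term, and land on a self-referential inequality $M \le A\log M + B$ which you solve via the contrapositive of Lemma~\ref{lem:log-eq}. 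Both are sound; the paper's argument is more local (it uses nothing beyond the jump-back rule and the definition of $r'$), whereas yours piggybacks on the global question-counting lemma. A pleasant side effect of your route is that, after bounding $L \le \depth(x_i)$ and $D_j \le \depth(x_i)$, you actually obtain $M = O\bigl(k\,\olog(1/\mu_i)\,(\olog k + \log\olog(1/\mu_i))\bigr)$, which is a factor $\olog k$ tighter than the lemma as stated; this does no harm since the lemma is only consumed by Corollary~\ref{cor:bnd-rprime-M}, which takes another logarithm. One small point worth making explicit when writing this up: Lemma~\ref{lem:log-eq}'s contrapositive is stated for a strict inequality, so replace $B$ by $B+1$ (or argue separately on the equality boundary) before invoking it, and note that the $\olog$ convention (large $C$) guarantees $a,b \ge e$ as the lemma requires.
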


\begin{proof}
	Define $C' \ge 1$ to be a sufficiently large constant such that for all $j \ge 1$: $r'(j) \ge a \log j + a$ for $a = C' \olog k$. It follows from Eq.~\eqref{eq:r-tag-nice-def} that such a value of $C'$ exists.
	Define the infinite sequence $q_0, q_1, q_2,  \dots$ as follows: $q_0 = \log (1/\mu(x)) + 3$ and $q_{j+1} = q_j + a \log q_j + a$ for the value of $a$ defined above.
	
	Claim~\ref{cla:gilbert-moore-depth} implies that $q_0$ bounds form above the maximal depth of $\Current$ when there are no lies. Additionally, note that by induction, $q_{j+1} \ge q_j + r'(q_j)$ bounds from above the maximal depth of $\Current$ when there are at most $j$ lies.
	Applying Lemma~\ref{lem:log-seq}, one obtains the desired result.
\end{proof}

As a corollary, we bound $r'(M)$, which corresponds to the third term in Lemma~\ref{lem:ub2:grouping-q}.

\begin{corollary} \label{cor:bnd-rprime-M}
	\[ 
	r'(M) \le \log \olog (1/\mu_i) + O(\log \log \log (1/\mu_i) + \olog k).
	\]
\end{corollary}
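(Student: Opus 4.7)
The plan is to simply combine Lemma~\ref{lem:bnd-M}, which bounds $M$ above, with the asymptotic estimate for $r'$ recorded in Equation~\eqref{eq:r-tag-nice-def}, namely $r'(j) \le \log j + C(\olog k + \log \olog j)$ for a suitable constant $C$. Since $r'$ is monotone nondecreasing, it suffices to bound $r'$ evaluated at the bound on $M$, so the entire proof reduces to carefully taking logarithms of the expression $M \le \olog(1/\mu_i) \cdot C k \olog k \cdot (\olog k + \log \olog(1/\mu_i))$ from Lemma~\ref{lem:bnd-M}.

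First I would take $\log$ of the right-hand side of Lemma~\ref{lem:bnd-M}. Using $\log(ab) = \log a + \log b$ on the product and Lemma~\ref{lem:log-eq-aux} on the inner sum $\olog k + \log\olog(1/\mu_i)$, I obtain
\[
\log M \le \log \olog(1/\mu_i) + \log k + \log \olog k + \bigl(\log \olog k + \log \log \olog(1/\mu_i)\bigr) + O(1).
\]
Since $\log k \le \olog k$ and $\log \olog k \le \olog k$, this simplifies to $\log M \le \log \olog(1/\mu_i) + O(\olog k + \log \log \olog(1/\mu_i))$. Applying $\log$ once more (and using Lemma~\ref{lem:log-eq-aux} together with the fact that the additive correction $O(\olog k + \log \log \olog(1/\mu_i))$ is dominated after another logarithm) yields $\log \olog M \le O(\log \log \olog(1/\mu_i) + \olog k)$.

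Finally I would substitute these two estimates into $r'(M) \le \log M + C \olog k + C \log \olog M$, which produces
\[
r'(M) \le \log \olog(1/\mu_i) + O\bigl(\olog k + \log \log \olog(1/\mu_i)\bigr),
\]
and absorb $\log \log \olog(1/\mu_i)$ into $\log \log \log(1/\mu_i) + O(1)$ (valid because of our definition $\olog(x) = \log(x+C)$) to recover the form of the corollary. There is no real obstacle here beyond bookkeeping with nested logarithms; the only thing to watch is that Lemma~\ref{lem:log-eq-aux} must be applied to each ``$\log$ of a sum'' that appears, and that $\olog k$ (rather than $\log k$) is used throughout to cover the regime of small $k$.
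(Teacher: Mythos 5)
Your proposal is correct and takes the same route the paper intends: the paper's own proof is just the one-liner ``follows immediately from Lemma~\ref{lem:bnd-M} and Eq.~\eqref{eq:r-tag-nice-def}'', and your write-up is simply a careful expansion of that bookkeeping (taking logs of the bound on $M$, invoking Lemma~\ref{lem:log-eq-aux} for each sum, and using monotonicity of $r'$). The only minor point worth flagging is that $\log\log\olog(1/\mu_i)$ is bounded below by a positive constant while $\log\log\log(1/\mu_i)$ can be negative or undefined for $\mu_i$ near $1$; the substitution is still harmless because the $O(\cdot)$ also contains the $\olog k \geq \Omega(1)$ term, which absorbs the additive constant, but it is worth stating this explicitly rather than attributing it only to the definition of $\olog$.
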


\begin{proof}
	This follows immediately from Lemma~\ref{lem:bnd-M} and Eq.~\eqref{eq:r-tag-nice-def}.
\end{proof}

Lastly, we bound the last term in Lemma~\ref{lem:ub2:grouping-q}.

\begin{lemma}
	\[
	\EE\left[k \sum_{j=1}^L (r'(D_j)+1) \right] \le C
	\]
	for some universal constant $C > 0$, where the expectation is over the random choice of $0 \le \theta < 1/2$.
\end{lemma}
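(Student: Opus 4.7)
The plan is to reduce this to a single summation bound using Claim~\ref{cla:gilbert-moore-uniform} and the specific choice of $r'(\cdot)$ in \eqref{eq:def-rprime}. First, for every depth $d \ge 1$, introduce the indicator $Z_d$ of the event that the node $v_d$ at depth $d$ on $P$ is problematic. Then $\sum_{j=1}^L (r'(D_j)+1) = \sum_{d \ge 1} Z_d (r'(d)+1)$, so by linearity it suffices to bound $k \sum_{d \ge 1} \EE[Z_d](r'(d)+1)$.

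Next I would estimate $\EE[Z_d]$. By Claim~\ref{cla:gilbert-moore-uniform}, conditioned on $x_i$ being the hidden element, all answers along $P$ after the first are independent unbiased coin flips with respect to the choice of $\theta$; equivalently, the left/right child pattern along $P$ (past the root edge) is i.i.d.\ uniform. The event $\{Z_d = 1\}$ forces the $r'(d)$ consecutive descendants of $v_d$ on $P$ (at depths $d+1,\dots,d+r'(d)$) to all be opposing children to $v_d$, which occurs with probability at most $2^{-r'(d)}$ (for $d=1$ this bound costs at most an absolute constant factor, which is harmless). If there are fewer than $r'(d)$ descendants, $Z_d = 0$. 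Thus $\EE[Z_d] \le 2 \cdot 2^{-r'(d)}$ uniformly in $d$.

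It remains to show that $k \sum_{d \ge 1} 2^{-r'(d)}(r'(d)+1) = O(1)$. From the definition \eqref{eq:def-rprime},
\[
2^{-r'(d)} \le \frac{1}{2k(d+1)\ln^2(d+1)} \cdot \bigl(\log(2k(d+1)\ln^2(d+1)+e)\bigr)^{-4e\ln 2},
\]
while \eqref{eq:r-tag-nice-def} gives $r'(d)+1 = O\bigl(\log(2k(d+1)\ln^2(d+1)+e)\bigr)$. Since $4e \ln 2 > 1$, the product $(r'(d)+1) \cdot \bigl(\log(2k(d+1)\ln^2(d+1)+e)\bigr)^{-4e\ln 2}$ is bounded by an absolute constant. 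Substituting and canceling the factor of $k$, we obtain
\[
k \cdot 2^{-r'(d)}(r'(d)+1) \le \frac{O(1)}{(d+1)\ln^2(d+1)},
\]
and $\sum_{d\ge 1} 1/\bigl((d+1)\ln^2(d+1)\bigr)$ is a convergent series, yielding the desired universal constant bound.

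The main obstacle is the last step: verifying that the constant $4e$ appearing in \eqref{eq:def-rprime} is chosen large enough so that the decay of $2^{-r'(d)}$ absorbs both the multiplicative factor $k$ and the logarithmic factor $r'(d)+1$, leaving a summable tail. Everything else is a routine application of Claim~\ref{cla:gilbert-moore-uniform} and linearity of expectation.
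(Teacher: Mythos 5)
Your proof is correct and follows the paper's own argument essentially step for step: introduce indicators $Z_d$ for the problematic events, bound $\EE[Z_d] \le 2^{-r'(d)}$ via Claim~\ref{cla:gilbert-moore-uniform}, and show that the definition of $r'$ makes $k\,2^{-r'(d)}(r'(d)+1)$ dominated by the convergent series $\sum_d ((d+1)\ln^2(d+1))^{-1}$. The only difference is cosmetic: you verify the last inequality by unpacking \eqref{eq:def-rprime} directly and noting that the exponent $4e\ln 2>1$ absorbs the $r'(d)+1$ factor, whereas the paper routes the same inequality through Lemma~\ref{lem:log-eq}; the two calculations are equivalent.
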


\begin{proof}
	For $m \in \mathbb{N}$, let $Z_m$ be the indicator of whether the node at depth $m$ in $P$ is problematic for $x_i$ (in particular, $Z_m = 0$ if $\depth(x_i) < m$). From Lemma~\ref{cla:gilbert-moore-uniform}, it holds that $\EE[Z_m] = \Pr[Z_m = 1] \le 2^{-r'(m)}$. In particular,
	\begin{equation}
	\label{eq:ub2:bndsum}
	\EE\left[\sum_{j=1}^L k(r'(D_j)+1)\right]
	= \EE\left[\sum_{m=1}^\infty Z_m k(r'(m)+1) \right]
	\le \sum_{m=1}^{\infty} 2^{-r'(m)}k(r'(m)+1).
	\end{equation}
	For simplicity, we will bound $\sum_{m=1}^{\infty} 2^{-r'(m)}k r'(m)$ which is within a constant factor of the right hand side of \eqref{eq:ub2:bndsum}.
	Since $\sum_{m=2}^\infty m^{-1} \log^{-2}(m)$ is a convergent series, it is sufficient that $2^{-r'(m)} r'(m) \le \left(2k(m+1)\ln^2(m+1)\right)^{-1}$. This is equivalent to $r'(m) \ge \logtwo r'(m) + \logtwo\left( 2k(m+1)\ln^2(m+1) \right)$, hence it is sufficient to require
	$r'(m) \ge e \ln r'(m) + \logtwo\left( 2k(m+1)\ln^2(m+1) \right) + e$.
	Lemma~\ref{lem:log-eq} implies that this inequality holds for $r'(m)$ (defined in \eqref{eq:def-rprime}) as required.
\end{proof}

The proof follows from Lemma~\ref{lem:ub2:grouping-q}, Lemma~\ref{lem:ub2:vlies}, Claim~\ref{cla:gilbert-moore-depth}, Corollary~\ref{cor:bnd-rprime-M} and Lemma~\ref{cor:bnd-rprime-M}.

\section{Sorting} \label{sec:sort}

We present the proof of Theorem~\ref{thm:sort-ent}.

\begin{proof}[Proof of Theorem~\ref{thm:sort-ent}]
	Let $\Pi \colon [k] \to [k]$ be random variable defining the correct ordering over the elements, namely, $x_{\Pi^{-1}(1)} < x_{\Pi^{-1}(2)} < \cdots < x_{\Pi^{-1}(n)}$, and 
	let $\Pi_\ell \colon [\ell] \to [\ell]$ be the permutation defining the correct ordering between $x_1, \dots, x_\ell$, namely, $x_{\Pi_\ell^{-1}(1)} < x_{\Pi_\ell^{-1}(2)} < \cdots < x_{\Pi_\ell^{-1}(\ell)}$. In other words, for all $1 \le i,j \le \ell$, $x_i < x_j$ if and only if $\Pi_\ell(i) < \Pi_\ell(j)$.
	
	Our sorting procedure proceeds in iterations, finding the correct ordering between $x_1, \dots, x_\ell$ by the end the $\ell$'th iteration, for $\ell = 1,\dots, n$. In other words, it finds $\Pi_\ell$ on iteration $\ell$. Given $\Pi_{\ell - 1}$, one only has to find $\Pi_\ell(\ell)$. This can be implemented using comparison questions: the question ``$\Pi_\ell(i) \le r$?'' is equivalent to ``$x_i < \Pi_{\ell-1}^{-1}(r)$?''.
	
	The resulting algorithm is simple: in each iteration $\ell = 1, \dots, n$, find $\Pi_\ell(\ell)$ by invoking \algtwo{} with the distribution $\Pi_\ell(\ell)\mid \Pi_{\ell-1}$.
	
	We will bound the expected number of questions asked by this algorithm. Fix some permutation $\pi$, set 
	$p_\pi = \Pr[\Pi = \pi]$, and set $p_{\pi,\ell}$ as the probability that $\Pi_\ell$ agress with $\pi$ conditioned on $\Pi_{\ell-1}$ agreeing with $\pi$. Define $k'_\ell$ as the expected number of lies on round $\ell$. It follows from Theorem~\ref{thm:algtwo-better} that the expected number of questions asked in iteration $\ell$ is at most 
	\[
	\log\frac{1}{p_{\pi,\ell}} + O\left(k'_\ell \log \olog \frac{1}{p_{\pi,\ell}} + k'_\ell \log k + k\right).
	\]
	Summing over $\ell = 1, \dots, n$, one obtains:
	\[
	\log\frac{1}{\prod_{\ell=1}^n p_{\pi,\ell}} + O\left(\sum_{\ell=1}^n \left(k'_\ell \log \olog \frac{1}{p_{\pi}} + k'_\ell \log k + k\right) \right)
	\le \log\frac{1}{p_{\pi}} + O \left(k \log \olog \frac{1}{p_{\pi}} + k \log k + k n\right).
	\]
	Taking expectation over $\pi \sim \Pi$ and applying Jensen's inequality with $x \mapsto \log x$, one obtains a bound of
	\[
	H(\Pi) + O(k \log H(\pi) + k \log k + kn) \le H(\Pi) + O(k \log k + kn).
	\]
\end{proof}

\if 0
\begin{proof}[Proof of Claim~\ref{cla:sort-lb}]
Consider a sorting algorithm that asks less than $(n-1)(k+1)$ questions. We answer all questions in a way which is consistent with the ordering \[ x_1 < x_2 < \cdots < x_n. \] Since less than $(n-1)(k+1)$ questions are asked, there must be an index $i < n$ such that the question ``$x_i < x_{i+1}$?'' is asked at most $k$ times. The answers are thus also consistent with the ordering
 \[
  x_1 < \cdots < x_{i-1} < x_{i+1} < x_i < x_{i+2} < \cdots < x_n,
 \]
 (all instances of ``$x_i < x_{i+1}$?'' being answered with lies),
 and so the algorithm cannot tell which of the two orderings is correct.

\end{proof}
\fi


\begin{thebibliography}{10}

\bibitem{Aigner}
Martin Aigner.
\newblock Finding the maximum and minimum.
\newblock {\em Discrete Applied Mathematics}, 74:1--12, 1997.

\bibitem{AslamDhagat}
Javed~A. Aslam and Aditi Dhagat.
\newblock Searching in the presence of linearly bounded errors.
\newblock In {\em Proceedings of the 23rd annual Symposium on Theory of
  Computing (STOC'91)}, pages 486--493, 1991.

\bibitem{Bagchi}
A.~Bagchi.
\newblock On sorting in the presence of erroneous information.
\newblock {\em Information Processing Letters}, 43:213--215, 1992.

\bibitem{Berlekamp}
Elwyn~R. Berlekamp.
\newblock {\em Block coding for the binary symmetric channel with noiseless,
  delayless feedback}, pages 61--85.
\newblock Wiley, New York, 1968.

\bibitem{BK93}
R.~Sean Borgstrom and S.~Rao Kosaraju.
\newblock Comparison based search in the presence of errors.
\newblock In {\em Proceedings of the 25th annual symposium on theory of
  computing (STOC'93), year = {1993}, pages = {130--136}}.

\bibitem{BM09}
Mark Braverman and Elchanan Mossel.
\newblock Noisy sorting without resampling.
\newblock In {\em SODA}, pages 268--276, 2008.

\bibitem{dfgm}
Yuval Dagan, Yuval Filmus, Ariel Gabizon, and Shay Moran.
\newblock Twenty (simple) questions.
\newblock In {\em 49th ACM Symposium on Theory of Computing (STOC 2017)}, 2017.

\bibitem{Fano}
Robert~Mario Fano.
\newblock The transmission of information.
\newblock Technical Report~65, Research Laboratory of Electronics at MIT,
  Cambridge (Mass.), USA, 1949.

\bibitem{FPRU94}
Uriel Feige, David Peleg, Prabhakar Raghavan, and Eli Upfal.
\newblock Computing with noisy information.
\newblock {\em SIAM Journal on Computing}, 23:1001--1018, 1994.

\bibitem{gelles2017coding}
Ran Gelles et~al.
\newblock Coding for interactive communication: A survey.
\newblock {\em Foundations and Trends{\textregistered} in Theoretical Computer
  Science}, 13(1--2):1--157, 2017.

\bibitem{GilbertMoore}
E.~N. Gilbert and E.~F. Moore.
\newblock Variable-length binary encodings.
\newblock {\em Bell System Technical Journal}, 38:933--967, 1959.

\bibitem{haeupler2014interactive}
Bernhard Haeupler.
\newblock Interactive channel capacity revisited.
\newblock In {\em Foundations of Computer Science (FOCS), 2014 IEEE 55th Annual
  Symposium on}, pages 226--235. IEEE, 2014.

\bibitem{kol2013interactive}
Gillat Kol and Ran Raz.
\newblock Interactive channel capacity.
\newblock In {\em Proceedings of the forty-fifth annual ACM symposium on Theory
  of computing}, pages 715--724. ACM, 2013.

\bibitem{LRG91}
K.B. Lakshmanan, B.~Ravikumar, and K.~Ganesan.
\newblock Coping with erroneous information while sorting.
\newblock {\em IEEE Transactions on Computers}, 40:1081--1084, 1991.

\bibitem{Long}
Philip~M. Long.
\newblock Sorting and searching with a faulty comparison oracle.
\newblock Technical Report UCSC--CRL--92--15, University of California at Santa
  Cruz, November 1992.

\bibitem{MY13}
Shay Moran and Amir Yehudayoff.
\newblock A note on average-case sorting.
\newblock {\em Order}, 33(1):23--28, 2016.

\bibitem{Pelc87}
Andrzej Pelc.
\newblock Coding with bounded error fraction.
\newblock {\em Ars Combinatorica}, 42:17--22, 1987.

\bibitem{Pelc}
Andrzej Pelc.
\newblock Searching games with errors---fifty years of coping with liars.
\newblock {\em Theoretical Computer Science}, 270:71--109, 2002.

\bibitem{Renyi}
Alfr\'ed R\'enyi.
\newblock On a problem of information theory.
\newblock {\em MTA Mat. Kut. Int. Kozl.}, 6B:505--516, 1961.

\bibitem{Rivest}
Ronald~L. Rivest, Albert~R. Meyer, Daniel~J. Kleitman, and Karl Winklmann.
\newblock Coping with errors in binary search procedures.
\newblock {\em Journal of Computer and System Sciences}, 20:396--404, 1980.

\bibitem{schulman1996coding}
Leonard~J Schulman.
\newblock Coding for interactive communication.
\newblock {\em IEEE transactions on information theory}, 42(6):1745--1756,
  1996.

\bibitem{Shannon}
Claude~Elwood Shannon.
\newblock A mathematical theory of communication.
\newblock {\em Bell System Technical Journal}, 27:379--423, 1948.

\bibitem{SpencerWinkler}
Joel Spencer and Peter Winkler.
\newblock Three thresholds for a liar.
\newblock {\em Combin. Probab. Comput.}, 1(1):81--93, 1992.

\bibitem{Ulam}
Stanislav~M. Ulam.
\newblock {\em Adventures of a mathematician}.
\newblock Scribner's, New York, 1976.

\end{thebibliography}

\end{document}